\def\doi{6 (4:1) 2010}
\begin{document}



\newcommand{\hhl}[1]{\hl{#1}}

\newcommand{\debug}[1]{#1}

\addtolength\marginparwidth{2cm}
\newcommand{\luc}[1]{\debug{\marginpar{\tiny {\bf Luc}: #1}}}
\newcommand{\mikolaj}[1]{\debug{\marginpar{\tiny Mikolaj: #1}}}

\newcommand{\botcomp}{{H_\bot}}

\newcommand{\reorder}{\mathrm{reorder}}
\newcommand{\lgreensim}{\sim_{\mathcal L}}
\newcommand{\rgreensim}{\sim_{\mathcal R}}
\newcommand{\lex}{<_\mathrm{lex}}
\newcommand{\lexeq}{\leq_\mathrm{lex}}
\newcommand{\bigpiece}{\mathcal T}
\newcommand{\miff}{\mbox{iff}}
\newcommand{\pen}{n}
\newcommand{\hole}{\Box}
\newcommand{\ltrivial}{\mathcal L}
\newcommand{\rtrivial}{\mathcal R}

\newcommand{\rlogic}{\mathcal R}
\newcommand{\llogic}{\mathcal L}

\newcommand{\tlf}{\mathsf{F}}
\newcommand{\tle}{\mathsf E}
\newcommand{\tlu}{\mathsf U}
\newcommand{\tlex}{\mathsf{EX}}
\newcommand{\tlspider}[1]{\mathsf{E}^{#1}\mathsf{X}}
\newcommand{\tlef}{\mathsf{EF}p}
\newcommand{\tlleaf}{\mathsf{L}^{\downarrow}}
\newcommand{\tlfup}{\mathsf{F}^{\uparrow}}
\newcommand{\tlfh}{\mathsf{F}^{\rightarrow}}
\newcommand{\tlfhb}{\mathsf{F}^{\leftarrow}}
\newcommand{\tlag}{\mathsf{AG}}
\newcommand{\freef}[1]{#1^\Delta}
\newcommand{\tlsib}{\mathsf S}

\newcommand{\Stwo}{\ensuremath{\Sigma_2(<)}\xspace}
\newcommand{\Ptwo}{\ensuremath{\Pi_2(<)}\xspace}
\newcommand{\Dtwo}{\ensuremath{\Delta_2(<)}\xspace}
\newcommand{\Stwol}{\ensuremath{\Sigma_2(<,\lex)}\xspace}
\newcommand{\Ptwol}{\ensuremath{\Pi_2(<,\lex)}\xspace}
\newcommand{\Dtwol}{\ensuremath{\Delta_2(<,\lex)}\xspace}

\newcommand{\lgreen}{\le_\ltrivial}
\newcommand{\rgreen}{\le_\rtrivial}
\newcommand{\lgreeng}{\ge_\ltrivial}
\newcommand{\rgreeng}{\ge_\ltrivial}

\newcommand{\rsim}{\sim_{\rtrivial}}

\newcommand{\pieces}{\mathit{pieces}}
\newcommand{\subforest}{\leq}
\newcommand{\piece}{\preceq}
\newcommand{\pieceneq}{\prec}

\newcommand{\stab}{\mathit{stab}}
\newcommand{\sstab}{\mathit{sstab}}
\newcommand{\inter}{\Delta_2}

\newcommand{\mand}{\ \mbox{and}\ } \newcommand{\first}{\mathit{new}}

\title{Tree languages defined in first-order logic with one quantifier
alternation}
\author[M.~ Boja{\'n}czyk]{Miko{\l}aj Boja{\'n}czyk\rsuper a}	
\address{{\lsuper a}Warsaw University}	
\email{bojan@mimuw.edu.pl}  
\thanks{{\lsuper a}Supported by Polish government grant
 no. N206 008 32/0810.}	

\author[L.~Segoufin]{Luc Segoufin\rsuper b}	
\address{{\lsuper b}INRIA - LSV}
\urladdr{http://www-rocq.inria.fr/$\tilde{\ }$segoufin}
\thanks{{\lsuper b}Work partially funded by the AutoMathA
programme of the ESF}	

\begin{abstract}
  We study tree languages that can be defined in $\Delta_2$. These are tree
  languages definable by a first-order formula whose quantifier prefix is
  $\exists^*\forall^*$, and simultaneously by a first-order formula whose
  quantifier prefix is $\forall^*\exists^*$. For the quantifier free part we
  consider two signatures, either the descendant relation alone or together
  with the lexicographical order relation on nodes. We provide an effective
  characterization of tree and forest languages
  definable in $\Delta_2$. This characterization is in terms of algebraic
  equations. Over words, the class of word languages definable in $\Delta_2$
  forms a robust class, which was given an effective algebraic characterization
  by Pin and Weil~\cite{weilpinpoly}.
\end{abstract}
\keywords{first-order logic on trees, forest algebra}
\subjclass{F.4.3,F.4.1}

\maketitle

\section{Introduction}

We say a logic $\Ll_1$ has a decidable characterization inside a logic $\Ll_2$ if the following
decision problem is decidable: ``given as input a formula of the logic $\Ll_2$,
decide if it is equivalent to some formula of the logic $\Ll_1$''. We are interested in the case when the logic $\Ll_2$ is MSO on words or trees, and $\Ll_1$ represents some fragment of $\Ll_2$. 

This type of problem has been successfully studied in the case when $\Ll_2$ is
MSO on finite words. In other words $\Ll_2$, represents the class of regular
word languages.  Arguably best known is the result of McNaughton, Papert and
Sch\"utzenberger~\cite{schutzenberger,mcnaughton}, which says that the
following two conditions on a regular word language $L$ are equivalent: a) $L$
can be defined in first-order logic with order and label tests; b) the
syntactic semigroup of $L$ does not contain a non-trivial group. Since
condition b) can be effectively tested, the above theorem gives a decidable
characterization of first-order logic. This result demonstrates the importance
of this type of work: a decidable characterization not only gives a better
understanding of the logic in question, but it often reveals unexpected
connections with algebraic concepts.  During several decades of research,
decidable characterizations have been found for fragments of first-order logic
with restricted quantification and various signatures (typically subsets of the
order relation and the successor relation), as well as a large group of
temporal logics, see \cite{pin-survey} and \cite{wilke} for references.

An important part of this research has been devoted to the quantifier
alternation hierarchy, where each level counts the alterations between
$\forall$ and $\exists$ quantifiers in a first-order formula in prenex normal
form. The quantifier free part of such a formula is  built using a binary
predicate $<$ representing the linear order on the word. Formulas that have
$n-1$ alternations (and therefore $n$ blocks of quantifiers) are called $\Sigma_n(<)$ if they begin with $\exists$, and
$\Pi_n(<)$ if they begin with~$\forall$. For instance, the word property ``some
position has label $a$'' can be defined by a $\Sigma_1(<)$ formula $\exists x.\
a(x)$, while the language ``nonempty words with at most two positions that
do not have label $a$'' can be defined by the $\Sigma_2(<)$ formula
\[
  \exists x_1 \exists x_2  \forall y \quad (y \neq x_1 \land y \neq
  x_2) \quad 
  \Rightarrow \quad a(y)\ .
\]

A lot of attention has been devoted to analyzing the low levels of the
quantifier alternation hierarchy for word languages.  The two lowest levels are easy: a word
language is definable in $\Sigma_1(<)$ (resp.~$\Pi_1(<)$) if and only if it is
closed under inserting (removing) letters. Both properties can be tested in
polynomial time based on a recognizing automaton, or semigroup.  However, just
above $\Sigma_1(<),\Pi_1(<)$, and even before we get to $\Sigma_2(<),
\Pi_2(<)$, we already find two important classes of languages.  A fundamental
result, due to Simon~\cite{simonpiecewise}, says that a language is defined by
a Boolean combination of $\Sigma_1(<)$ formulas if and only if its syntactic
monoid is $\mathcal J$-trivial. Above the Boolean combination of $\Sigma_1(<)$,
we find $\Delta_2(<)$, i.e.~languages that can be defined simultaneously in
$\Sigma_2(<)$ and $\Pi_2(<)$. As we will describe later on, this class turns
out to be surprisingly robust, and it is the focus of this paper. Another
fundamental result, due to Pin and Weil~\cite{weilpinpoly}, says that a regular
language is in $\Delta_2(<)$ if and only if its syntactic monoid is in {\sc
  DA}.  The limit of our knowledge is level $\Sigma_2(<)$: it is decidable if a
language can be defined on level $\Sigma_2(<)$~\cite{arfi91,weilpinpoly}, but
there are no known decidable characterization for Boolean combinations of
$\Sigma_2(<)$, for $\Delta_3(<)$, for $\Sigma_3(<)$, and upwards.

For trees even less is known.  No decidable characterization has been
found for what is arguably the most important proper subclass of
regular tree languages, first-order logic with the descendant
relation, despite several attempts.  Similarly open are chain logic
and the temporal logics CTL, CTL* and PDL.  However, there has been
some recent progress.  In~\cite{efextcs}, decidable characterizations
were presented for some temporal logics, while Benedikt and
Segoufin~\cite{segoufinfo} characterized tree languages definable in
first-order logic with the successor relation (but without the
descendant relation).

This paper is part of a program to understand the expressive power of
first-order logic on trees, and the quantifier alternation hierarchy
in particular. The idea is to try to understand the low levels of the
quantifier alternation hierarchy before taking on full first-order
logic (which is contrary to the order in which word languages were
analyzed). We focus on two signatures. The first signature contains
unary predicates for label tests and the ancestor order on nodes,
denoted $<$. The second signature assumes that the trees have an order
on siblings, which induces a lexicographical linear order on nodes,
denoted $\lex$.  Both signatures generalize the linear order on words.
As shown in~\cite{forestexp}, there is a reasonable notion of
concatenation hierarchy for tree languages that corresponds to the
quantifier alternation hierarchy. Levels $\Sigma_1(<)$ and $\Pi_1(<)$
are as simple for trees as they are for words. A recent
result~\cite{simontrees} extends Simon's theorem to trees, by giving a
decidable characterization of tree languages definable by a Boolean
combination of $\Sigma_1(<)$ formulas, and also a decidable
characterization of Boolean combinations of $\Sigma_1(<,\lex)$
formulas.  There is no known decidable characterization of tree
languages definable in $\Sigma_n(<)$ for $n \ge 2$.

The contribution of this paper is a decidable characterization of tree
languages definable in \Dtwo, i.e.~definable both in \Stwo
and $\Pi_2(<)$.  We also provide a decidable characterization of tree
languages definable in \Dtwol.

As we signaled above, for word languages the class
$\Delta_2(<)$ is well studied and important, with numerous equivalent
characterizations.  Among them one can
find~\cite{weilpinpoly,therienwilkefo2,turtle,EVW02}: a) word
languages that can be defined in the temporal logic with operators
$\tlf$ and $\tlf^{-1}$; b) word languages that can be defined by a
first-order formula with two variables, but with unlimited quantifier
alternations; c) word languages whose syntactic semigroup belongs to
the semigroup variety DA; d) word languages recognized by two-way
ordered deterministic automata; e) a certain form of ``unambiguous''
regular expressions.

It is not clear how to extend some of these concepts to trees. Even when
natural tree counterparts exist, they are not equivalent. For instance, the
temporal logic in a) can be defined for trees---by using operators ``in some
descendant'' and ``in some ancestor''.  This temporal logic was studied
in~\cite{fo2tree}, however it was shown to have an expressive power
incomparable with that of \Dtwo. A characterization of \Dtwo was left
as an open problem, one which is solved here.

We provide an algebraic characterization of tree languages definable in \Dtwo
and in \Dtwol. This characterization is effectively verifiable if the language
is given by a tree automaton. It is easy to see that the word setting can be
treated as a special case of the tree setting. Hence our characterization
builds on the one over words.  However the added complexity of the tree setting
makes both formulating the correct condition and generalizing the proof quite
nontrivial.

\section{Trees forests and languages}
In this section, we present some basic definitions regarding trees. We also present the formalism of forest algebra, which is used in  our characterizations.

\subsection{Trees, forests and contexts}
In this paper we work with finite, unranked, ordered trees and forests
over a finite alphabet $A.$ Formally, these are expressions defined
inductively as follows: If $s$ is a forest and $a\in A,$ then $as$ is
a tree.  If $t_1,\ldots,t_n$ is a finite sequence of trees, then
$t_1+\cdots+t_n$ is a forest.  This applies as well to the empty
sequence of trees, which is called the {\it empty forest,} and denoted
0 (and which provides a place for the induction to start).  Forests
and trees alike will be denoted by the letters $s,t,u,\ldots$ When
necessary, we will  remark which forests are trees, i.e.~contain
only one tree in the sequence.


The notion of node, as well as the descendant and ancestor relations are
defined in the usual way. We write $x < y$ to say that $x$ is a strict ancestor of
$y$ or, equivalently, that $y$ is a strict descendant of $x$. As usual, we
write $x \le y$ when $x=y$ or $x < y$.  The parent of a node $x$ is its
immediate ancestor. Two nodes $x$ and $y$ are \emph{siblings} if they have the
same parent.  We also use the lexicographic order on nodes, written $\lex$.
Recall that $x \lex y$ holds if either $x < y$, or there are nodes $x' \le x$
and $y' \le y$ such that $x'$ is a sibling to the left of $y'$.

If we take a forest and replace one of the leaves by a special symbol
$\hole$, we obtain a \emph{context.}  Contexts will be denoted
using letters $p,q,r$. A forest $s$ can be substituted in place of the
hole of a context $p$, the resulting forest is denoted by $ps$.
 There is a natural composition operation on contexts: the
context $qp$ is formed by replacing the hole of $q$ with $p$.
This operation is associative, and satisfies $(pq)s=p(qs)$ for all
forests $s$ and contexts $p$ and $q$.


We say a forest $s$ is an \emph{immediate piece} of a forest $s'$ if
$s,s'$ can be decomposed as $s=pt$ and $s'=pat$ for some context $p$,
some label $a$, and some forest $t$. The reflexive transitive closure
of the immediate piece relation is called the \emph{piece}
relation. We write $s \piece t$ to say that $s$ is a piece of $t$. In
other words, a piece of $t$ is obtained by removing nodes from $t$.
We extend the notion of piece to contexts. In this case, the hole must
be preserved while removing the nodes. The notions of piece for
forests and contexts are related, of course. For instance, if $p$, $q$
are contexts with $p \piece q$, then $p0 \piece q0$. Also, conversely,
if $s \piece t$, then there are contexts $p \piece q$ with $s=p0$ and
$t=q0$. (For instance, one can take $p=\hole + s$ and $q=\hole + t$.)
The picture below depicts two contexts, the left one being a piece of
the right one, as can be seen by removing the white nodes.
\medskip
\begin{center}
  \includegraphics[scale=0.8]{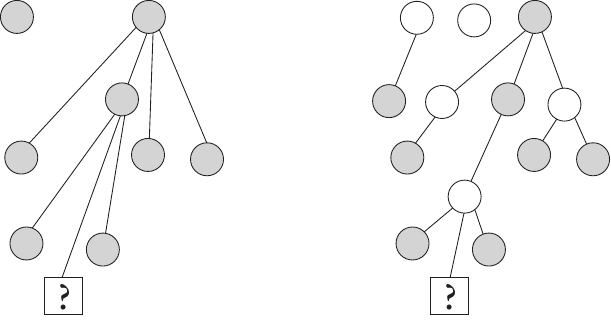}
 \end{center}
We will be considering three types of languages in the paper:
 \emph{forest languages} i.e.~sets of forests, denoted $L$;
 \emph{context languages}, i.e.~sets of contexts, denoted $K$, and
 \emph{tree languages}, i.e.~sets of trees, denoted $M$. Note that a
 forest language can contain trees.

\subsection{Forest algebras}
\label{sec:forest-algebra}
{\it Forest algebras} were introduced by Boja\'nczyk and Walukiewicz as an
algebraic formalism for studying regular tree languages~\cite{forestalgebra}.  Here
we give a brief summary of the definition of these algebras and their important
properties.  A forest algebra consists of a pair $(H,V)$ of finite monoids,
subject to some additional requirements, which we describe below.  We write the
operation in $V$ multiplicatively and the operation in $H$ additively, although
$H$ is not assumed to be commutative.  We accordingly denote the identity of
$V$ by $\hole$ and that of $H$ by 0.
 We require that $V$ act on the left of $H$.  That is, there is a map
 $(h,v)\mapsto vh\in H$
 such that 
 $w(vh)=(wv)h$
 for all $h\in H$ and $v,w\in V.$ We further require that this action be {\it
   monoidal,} that is,
$\hole \cdot h=h$
 for all $h\in H,$ and that it be {\it faithful,} that is,
 if $vh=wh$ for all $h\in H,$ then $v=w.$
Finally we require that for every $g\in H,$ $V$ contains elements $(\hole+g)$
 and $(g+\hole)$ defined by
 $(\hole+g)h=h+g, (g+\hole)h=g+h$
 for all $h\in H.$

 A morphism $\alpha:(H_1,V_1)\to (H_2,V_2)$ of forest algebras is actually a
 pair $(\alpha_H,\alpha_V)$ where $\alpha_H$ is a monoid morphism between $H_1$ and
 $H_2$ and $\alpha_V$ is a monoid morphism between $V_1$ and $V_2$, such that
 $\alpha_H(vh)=\alpha_V(v)\alpha_H(h)$ for all $h\in H,$ $v\in V.$ However, we will
 abuse notation slightly and denote both component maps by $\alpha.$

 Let $A$ be a finite alphabet, and let us denote by $H_A$ the set of
 forests over $A,$ and by $V_A$ the set of contexts over $A.$ Each of
 these is a monoid, with the operations being forest concatenation and
 context composition, respectively. The pair $(H_A,V_A)$, with forest
 substitution as action, forms a forest algebra, which we denote
 $A^{\Delta}.$


 We say that a forest algebra $(H,V)$ {\it recognizes} a forest
 language $L\subseteq H_A$ if there is a morphism
 $\alpha:A^{\Delta}\to (H,V)$ and a subset $X$ of $H$ such that
 $L=\alpha^{-1}(X).$ A forest language is regular, i.e.~recognized by any of the many equivalent notions of automata for unranked trees that can be found in the literature, if and only if it is
 recognized by a finite forest algebra~\cite{forestalgebra}.

 Given any finite monoid $M$, there is a number $\omega(M)$ (denoted by
 $\omega$ when $M$ is understood from the context) such that for all elements
 $x$ of $M$, $x^\omega$ is an idempotent: $x^\omega=x^\omega x^\omega$.
 Therefore for any forest algebra $(H,V)$ and any element $u$ of $V$ and $g$ of
 $H$ we will write $u^\omega$ and $\omega g$ for the corresponding
 idempotents. The element $u^\omega$ is idempotent with respect to the
 operation in $V$. The element $\omega g$, which is the same as
 $(\hole+g)^\omega 0$ and the same as $(g+\hole)^\omega 0$, is idempotent with
 respect to the operation in $H$.

 Given a forest language $L\subseteq H_A$ we define an equivalence relation $\sim_L$ on
 $H_A$ by setting $s\sim_L s'$ if and only if for every context $p\in
 V_A,$ the forests $ps$ and $ps'$ are either both in $L$ or both outside of $L.$
 We further define an equivalence relation on $V_A$, also denoted
 $\sim_L,$ by setting $p\sim_L p'$ if for all $s\in H_A,$ $ps\sim_L
 p's.$ This pair of equivalence relations defines a congruence of
 forest algebras on $A^{\Delta},$ and the quotient $(H_L,V_L)$ is
 called the {\it syntactic forest algebra} of $L.$ Each equivalence
 class of $\sim_L$ is called a \emph{type}.

 We now extend the notion of piece to elements of a forest algebra $(H,V)$. The
 general idea is that a context type $v \in V$ is a piece of a context type $w \in V$ if
 one can construct a term (using elements of $H$ and $V$) which evaluates to
 $w$, and then take out some parts of this term to get $v$.
\begin{def}\label{df:piece-alg}
  Let $(H,V)$ be a forest algebra. We say $v \in V$ \emph{is a piece}
  of $w \in V$, denoted by $v \piece w$, if there is an alphabet $A$
  such that $\alpha(p)=v$ and $\alpha(q)=w$ hold for some morphism
  \[
    \alpha : \freef A \to (H,V)
  \]
  and some contexts $p \piece q$ over $A$.  The relation $\piece$ is
  extended to $H$ by setting $g \piece h$ if $g=v0$ and $h=w0$ for
  some context types $v \piece w$.
\end{def}

\section{Logic}
The focus of this paper is the expressive power of first-order logic
on trees.  A forest can be seen as a logical relational structure.
The domain of the structure is the set of nodes. (We allow empty
domains, which happens when an empty forest $0$ is considered.) We
consider two different signatures. Both of them contain a unary
predicate $P_a$ for each symbol $a$ of the alphabet $A$, as well as a
binary predicate $<$ for the ancestor relation. Furthermore, the
second signature also contains a binary predicate $\lex$ for the
lexicographic order on nodes.  A formula without free variables over
these signatures defines a set of forests, these are the forests where
it is true.  We are particularly interested in formulas of low
quantifier complexity.  A $\Sigma_2$ formula is a formula of the form
\[
  \exists x_1\cdots x_n\ \forall y_1 \cdots y_m~~ \gamma\ ,
\]
where $\gamma$ is quantifier free.  Languages defined in $\Sigma_2$
are closed under disjunction and conjunction, but not necessarily
negation. The negation of a $\Sigma_2$ formula is called a $\Pi_2$
formula, equivalently this is a formula whose quantifier prefix is
$\forall^* \exists^*$.  A forest property is called $\Delta_2$ if it
can be expressed both by a $\Sigma_2$ and a $\Pi_2$ formula.
We will use \Stwo and \Stwol to specify which predicates are used in
the signature, similarly for $\Pi_2$ and $\Delta_2$.

With limited quantification, the choice of signature is a delicate
question. For instance, adding a child relation changes the expressive power.

\subsection{The problem}

We want an algorithm deciding whether a given regular forest language is
definable in \Dtwol and another one for deciding whether it is in \Dtwo.

As noted earlier, the corresponding problem for words was solved by
Pin and Weil~\cite{weilpinpoly}: a word language $L$ is definable in
\Dtwo if and only if its syntactic monoid $M(L)$ belongs to the
variety DA, i.e.~it satisfies the identity
\[
(mn)^\omega = (mn)^\omega m (mn)^\omega
\]
\noindent
for all $m,n\in M(L)$. The power $\omega$ means that
the identity holds for sufficiently large powers (in different
settings, $\omega$ is defined in terms of idempotent powers, but the
condition on sufficiently large powers is good enough here).  Since
one can effectively test if a finite monoid satisfies the above
property (it is sufficient to verify the power $|M(L)|$), it is
decidable whether a given regular word language is definable in
\Dtwo. We assume that the language $L$ is given by its syntactic
monoid and syntactic morphism, or by some other representation, such
as a finite automaton, from which these can be effectively computed.

We will show that a similar characterization can be found for forests;
although the identities will be more involved.  For decidability, it
is not important how the input language is represented. In this paper,
we will represent a forest language by a forest algebra that
recognizes it.  Forest algebras are described in the next section.

\subsection{Tree languages.}
We give an algorithm which says when a forest language belongs to a class $\Ll$, which is either \Dtwo or \Dtwol.
What about tree languages? There are two ways of getting a class of tree languages from a class of forest languages $\Ll$. 
\begin{enumerate}[(1)]
\item The class of tree languages that belong to $\Ll$.
\item The class of tree languages of the form $L \cap T_A$, where $A$ is an alphabet,  $T_A$ is the set of all trees over alphabet $A$, and $L \in \Ll$ is a forest language over $A$.
\end{enumerate}
Our algorithm gives a decision procedure under the first definition. The usual understanding of  tree languages definable in \Dtwo or \Dtwol corresponds to the second definition.

Fortunately, the two  definitions are equivalent when $\Ll$ is either \Dtwo or \Dtwol. This is because in both cases,   $\Ll$ is closed under intersection and contains the languages  $T_A$.  

Closure under intersection is immediate. Why does $\Ll$ contain the languages $T_A$? Since \Dtwo is the less powerful logic, it suffices to show how to define $T_A$ using a \Stwo formula, and also using a \Ptwo formula. The \Stwo
formula says there exists a node that is an ancestor of all other
nodes, while the \Ptwo formula says that for every two nodes, there
exists a common ancestor. 

In general, the definitions of tree language classes are not equivalent. Consider as $\Ll$ the class of forest languages defined by purely existential formulas. In particular, if a language $L \in \Ll$ contains a tree $t$, then it also contains the forest $t+t$. This means that under the first definition, the only tree languages we get are the empty tree languages. Under the second definition, we get some more tree languages, such as ``trees with at least two nodes''.

\subsection{Basic properties of \texorpdfstring{$\Pi_1$}{Pi1} and  \texorpdfstring{$\Sigma_2$}{Sigma2}}
Most of the proofs in the paper will work with \Stwo or \Stwol formulas. We
present some simple properties of such formulas in this section.

Apart from defining forest languages, we will also be using formulas to define
languages of contexts. To define a context language we use formulas with a free
variable; such a formula is said to hold in a context if it is true when the
free variable is mapped to the hole of the context.§ For instance, the formula
$\forall y \ y < x \Rightarrow a(y)$ with a free variable $x$ defines the set
of contexts where every ancestor of the hole has label $a$.

We begin by describing the expressive power of  purely universal formulas $\Pi_1$.
\begin{lem}\label{lem:pi1-lemma}
  A forest language is closed under pieces if
  and only if it is definable in $\Pi_1$.  Likewise for
  context languages.
\end{lem}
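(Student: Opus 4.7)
The plan is to prove both directions of the biconditional; the forward direction (definability in $\Pi_1$ implies closure under pieces) is essentially semantic, while the converse relies on the piece relation being a well-quasi-order.

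For the easy direction, suppose $\varphi = \forall x_1 \cdots \forall x_n\, \gamma(x_1,\ldots,x_n)$ is a $\Pi_1$ formula satisfied by a forest $t$, and let $s \piece t$. By definition of $\piece$, there is an injection $\iota$ from the nodes of $s$ into the nodes of $t$ that preserves labels and both preserves and reflects the ancestor order $<$ (and the lexicographic order $\lex$, if it is in the signature). Given any nodes $a_1,\ldots,a_n$ in $s$, the tuple $\iota(a_1),\ldots,\iota(a_n)$ witnesses $\gamma$ in $t$; since $\gamma$ is quantifier-free and mentions only labels and those orders, $\gamma(a_1,\ldots,a_n)$ also holds in $s$, so $s$ satisfies $\varphi$. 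The same reasoning applies to contexts once we treat the hole as a distinguished node that $\iota$ must preserve.

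For the converse, let $L$ be closed under pieces, so its complement $L^c$ is upward closed with respect to $\piece$. The key ingredient is that $\piece$ is a well-quasi-order on forests (and on contexts) over a finite alphabet: this follows from Kruskal's tree theorem in the ordered, finitely-labelled setting. Consequently $L^c$ has only finitely many minimal elements $t_1,\ldots,t_k$, and
\[
  L^c \ =\ \bigcup_{i=1}^k \{\,t : t_i \piece t\,\}.
\]
For each $t_i$, the set $\{t : t_i \piece t\}$ is defined by the $\Sigma_1$ formula that existentially quantifies one variable per node of $t_i$ and asserts, by a quantifier-free conjunction, that their labels agree with those in $t_i$ and that their pairwise $<$ and (when available) $\lex$ relations match the relations in $t_i$. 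The disjunction over $i$ is then a $\Sigma_1$ definition of $L^c$, whose negation is the desired $\Pi_1$ formula for $L$. For context languages one uses the same argument, with a free variable designating the hole, and pieces restricted so that the hole is preserved.

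The main obstacle is justifying that $\piece$ is a well-quasi-order on forests and contexts; this is the substantive combinatorial input, and everything else is a routine translation between embeddings of forests and atomic formulas. Once this well-quasi-order property is invoked, the finiteness of the set of minimal counterexamples immediately converts semantic closure under pieces into a finite syntactic description.
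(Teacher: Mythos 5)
Your proof is correct, but your converse direction takes a genuinely different route from the paper's. The paper exploits regularity: it fixes the syntactic forest algebra $(H,V)$ of $L$ and uses a pumping argument to show that every forest has a piece of the same type with at most $|H|^{|H|}$ nodes; $L$ is then defined by forbidding, as pieces, the finitely many forests of at most that size lying outside $L$. You instead observe that the complement of a piece-closed language is upward closed under $\piece$ and invoke the ordered, finite-alphabet version of Kruskal's tree theorem to conclude that $\piece$ is a well-quasi-order, so the complement is generated by finitely many minimal elements, which you forbid. Both arguments end with the same shape of formula---the negation of a finite disjunction of $\Sigma_1$ ``pattern'' formulas---and your translation of $t_i \piece t$ into a $\Sigma_1$ formula is sound, because an injection that preserves labels and preserves and reflects $<$ and $\lex$ does witness the piece relation (the parent and sibling structure of a forest is recoverable from $<$ and $\lex$). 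The trade-off: your argument needs no regularity assumption at all (and as a by-product shows that every piece-closed forest language is regular), but it rests on a heavy combinatorial theorem and gives no bound on the forbidden pieces, whereas the paper's argument is elementary given the finite algebra and yields the explicit bound $|H|^{|H|}$. One small point worth retaining from the paper's proof: for the signature with $<$ only, the construction still works when the language is commutative, since the forbidden pieces can then be described without reference to $\lex$.
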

\begin{proof}
  It is clear that a forest language definable in $\Pi_1$ is closed under
  pieces as the models of a $\Pi_1$ formula are closed under substructures.

  For the converse, let $L$ be a forest language that is closed under pieces,
  and let $ \alpha : \freef A \to (H,V)$ be its syntactic algebra.  Thanks to a
  pumping argument, any forest has a piece with the same type, but at most
  $|H|^{|H|}$ nodes.  Let $T$ be the finite set of forests with at most
  $|H|^{|H|}$ nodes that are outside $L$. Thanks to the pumping argument, a
  forest belongs to $L$ if and only if it has no piece in the set $T$. The
  latter is a property that can be expressed in $\Pi_1(<, \lex)$.
  
  If, additionally, the language $L$ is commutative, then we do not need
  to worry about the lexicographic order when talking about the pieces
  in $T$, only the descendant order is relevant.
\end{proof}

We now turn to the formulas from $\Sigma_2$. We begin  with \Stwol,
since it has the  better closure properties.
\begin{lem}\label{fact:simple-lex}
  Let $K,K'$ be context languages and $L,L'$ be forest languages. If
  these languages are all definable in \Stwol, then so are
\begin{enumerate}[\em(1)]
\item the forest language $KL=\set{qt: q \in K, t \in L}$,
\item the forest language $L+L' =\set{t+t' : t \in L, t' \in L'}$,
\item the context language $KK' = \set{qq' : q \in K, q' \in K'}$,
\item the forest language $L \cap L' =\set{t : t \in L, t \in L'}$,
\item the context language $K \cap K' = \set{q : q \in K, q \in K'}$,
\end{enumerate}
\end{lem}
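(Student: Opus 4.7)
The plan is to handle (4) and (5) by prenex manipulation and (1)--(3) by relativization.

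For (4)--(5), the conjunction of two \Stwol formulas $\exists \bar x\,\forall \bar y\,\gamma_1$ and $\exists \bar x'\,\forall \bar y'\,\gamma_2$ is equivalent, after renaming bound variables apart, to $\exists \bar x\bar x'\,\forall \bar y\bar y'\,(\gamma_1 \land \gamma_2)$, which is again \Stwol. For context languages the free variable representing the hole is shared between the two conjuncts, which does not affect the argument.

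The plan for (1)--(3) rests on a relativization principle. If $\varphi$ is a \Stwol formula with free variables $\bar p$ and $\sigma(z;\bar p)$ is a quantifier-free formula in $<$ and $\lex$, then replacing each inner $\exists z$ in $\varphi$ by $\exists z\,(\sigma \land \cdots)$ and each $\forall z$ by $\forall z\,(\sigma \to \cdots)$ yields a \Stwol formula $\varphi|_\sigma$ asserting that $\varphi$ holds on the substructure $\{z : \sigma(z;\bar p)\}$. With this tool, each of (1)--(3) is proved by existentially witnessing a small number of ``split points'' in the input, expressing the two relevant subregions by quantifier-free conditions in $<$ and $\lex$, and checking the relativized formulas.

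For $L+L'$, we guess a root $r$ of $u$ and check that the substructure on $\{z : z \lex r\}$ is in $L$ and the substructure on $\{z : r \lexeq z\}$ is in $L'$, together with boundary cases where $t$ or $t'$ is empty. For $KL$, we guess two siblings $r_\ell \lexeq r_r$ bracketing the roots of the $L$-forest; the $t$-part is precisely the lex interval from $r_\ell$ to the last descendant of $r_r$, and the predicate ``$z$ lies in this interval'' is the quantifier-free condition $r_\ell \lexeq z \land (z \lexeq r_r \lor r_r < z)$, with the $q$-part being the complement. For $KK'$ we proceed in the same way, except that the inner part is itself a context containing the hole of the whole structure.

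The only real subtlety is the interpretation of atomic predicates involving the hole of the outer context in cases (1) and (3), since the hole is not a node of $qt$ or $qq'$. Taking $r_\ell$ as a proxy, each atom translates to a quantifier-free formula in $<$ and $\lex$: $P_a(h)$ and $h < y$ are false in the $q$-part; $y < h$ becomes $y < r_\ell$; $y \lex h$ becomes $y \lex r_\ell$; and $h \lex y$ becomes $r_r \lex y \land \lnot(r_r < y)$, the condition that $y$ is lex-strictly past the subtree of $r_r$. Thus the translated formula stays in \Stwol, and a direct verification shows that it defines the intended language.
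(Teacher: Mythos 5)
Your proposal is correct and follows essentially the same route as the paper: for $KL$ (and likewise $KK'$) you existentially guess two siblings bracketing the inserted part, carve out that region with a quantifier-free condition in $<$ and $\lex$, relativize the two \Stwol formulas to the region and its complement, and use the left bracketing node as a proxy for the hole --- which is exactly the paper's construction with $\alpha$, $\beta$, $\varphi^\beta$ and $\phi^{\neg\beta}(x_1)$. The remaining items, which the paper dismisses as ``treated similarly,'' are handled by the expected prenex merging for intersections and a one-split-point relativization for $L+L'$; your explicit atom-by-atom translation of the hole predicates is a harmless refinement of the paper's observation that all nodes of the inserted region bear the same relation to any outside node.
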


\proof
  We only do the proof for $KL$, the others are treated similarly. 
When does a forest $t$ belong to $KL$? There must exist two siblings $x_1$ and $x_2$ such that the set, call it $X$, of gray nodes in the picture below
\medskip
\begin{center}
\includegraphics[scale=0.8]{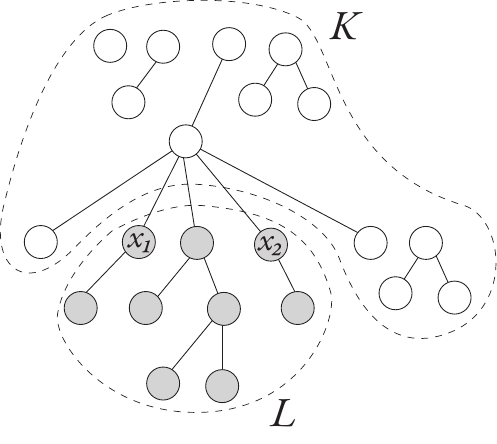}
\end{center}
describes a forest in $L$, and the other nodes describe a context in $K$. Below
we define this property more precisely, and show that it can be defined in
\Stwol.

First, we want to say that the nodes $x_1$ and $x_2$ are siblings. This can be
expressed by a formula, call it $\alpha(x_1,x_2)$, of \Stwol. The formula
quantifies existentially a common ancestor and uses universal quantification to
check that this common ancestor is a parent of both nodes:
\[
\alpha(x_1,x_2) = \exists x~\forall y~ x_1 \lexeq x_2 \wedge  x < x_1 \wedge x < x_2
\wedge  (y < x_1 \rightarrow y \leq x) \wedge (y < x_2 \rightarrow y \leq x_2)
\]

Next, we describe the set $X$.  Membership $x \in X$ is defined by a quantifier-free formula 
\[
	\beta(x) = (x_1 \lexeq x \land x \lexeq x_2) \lor (x > x_2).
\]

Next, we say what it means for the part inside $X$ describes a forest in $L$. Suppose that the forest language $L$ is defined by a formula $\varphi$ of \Stwol. To say that $X$ describes a forest in $L$ we use the formula  $\varphi^{\beta}$  obtained from $\varphi$ by restricting quantification to nodes satisfying $\beta$.  Note that  $\varphi^\beta$  has free variables $x_1,x_2$ from $\beta$. Since $\beta$ is quantifier-free, the formula $\varphi^\beta$ is also in \Stwol.

Finally, we  say that the part outside $X$  describes a context in $K$. The idea is that the hole of this context corresponds to the set $X$. The logical formula is constructed below. Suppose that $\phi(x)$ is a formula of \Stwol that describes $K$. Note that this formula has a free variable, as with formulas for contexts, which corresponds to the hole. Let $\phi^{\neg \beta}(x)$ be the formula obtained from $\phi(x)$ by restricting quantification to nodes not satisfying $\beta$. The remaining question is: which node should we use for $x$? Any node from $X$ will do, since for each node $y \not \in X$,  all nodes from $X$ have the same relationship to $y$, with respect to the descendant and lexicographic orders. We use the node $x_1$ for $x$.

Summing up, the formula for $KL$ is written below.
\[\exists x_1 \exists x_2 \ \alpha(x_1,x_2)\ \land\  \phi^{\neg \beta}(x_1) \ \land \varphi^\beta.\eqno{\qEd}
\]

\subsection{\texorpdfstring{$\Sigma_2$}{Sigma2} expressions.}
In the proofs, it will sometimes be more convenient to use a type of regular expression instead of formulas. These are called  \emph{$\Sigma_2$ forest expressions} and \emph{$\Sigma_2$ context
  expressions}, and are defined below by mutual recursion:
  \begin{enumerate}[$\bullet$]
  \item Any forest (respectively, context) language that is closed
    under pieces is a $\Sigma_2$ forest (respectively, context)
    expression.  For any label $a \in A$, $\set{a\hole}$ is a
    $\Sigma_2$ context expression. Likewise for $\set{\hole}$, the language containing only the empty context.
\item If $K,K'$ are $\Sigma_2$ context expressions and $L,L'$ are
  $\Sigma_2$ forest expressions, then
  \begin{enumerate}[$-$]
  \item $K \cdot K'$ is a $\Sigma_2$ context expression;
\item $L + L'$ is a $\Sigma_2$ forest expression;
\item $K \cdot L$ is a $\Sigma_2$ forest expression.
\item $L \cup L'$ is a $\Sigma_2$ forest expression. 
\item $K \cup K'$ is a $\Sigma_2$ forest expression. 
\item $L \cap L'$ is a $\Sigma_2$ forest expression. 
\item $K \cap K'$ is a $\Sigma_2$ forest expression. 
  \end{enumerate}
  \end{enumerate}
From Lemmas~\ref{lem:pi1-lemma} and~\ref{fact:simple-lex}  it follows that languages defined by  $\Sigma_2$ forest
and context expressions are definable in \Stwol.
%
%
%

\section{Characterization of \texorpdfstring{\Dtwol}{Dtwol}}
\label{sec:char-inter}
The main result of this paper is the following theorem:

\begin{thm}[Effective characterization of $\Delta_2$ with descendant
  and lexicographic orders]\label{thm:main}\ \vspace{-0.3cm} \\ 
 A forest language is definable in \Dtwol if and only if its syntactic
  forest algebra satisfies the following  identity, called the
  $\Delta_2$ identity,
  \begin{equation}
    \label{eq:swallow}
    v^\omega w v^\omega = v^\omega \qquad \mbox{for }w \piece v\ .
  \end{equation}
\end{thm}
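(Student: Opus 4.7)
The plan is to prove the theorem in two directions, both having analogues in the word case of Pin and Weil, but requiring substantial new ideas for forests.

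For the easier \emph{necessity} direction, assume $L \in$~\Dtwol, and fix $v, w \in V_L$ with $w \piece v$. By Definition~\ref{df:piece-alg} pick a morphism $\alpha: \freef A \to (H_L,V_L)$ and contexts $p \piece q$ over $A$ with $\alpha(p) = w$ and $\alpha(q) = v$. I would show that for every $r \in V_A$ and $s \in H_A$, and every prescribed quantifier rank $k$, the two forests $r q^N p q^N s$ and $r q^{2N+1} s$ are indistinguishable by \Stwo formulas of rank $k$, provided $N$ is large enough. The argument is an Ehrenfeucht--Fraïssé-style game: given Spoiler's existential witnesses in one structure, Duplicator replies by exploiting $p \piece q$ to absorb the middle $p$-block using nodes drawn from one of the many copies of $q$ inside $q^N$; the universal round then preserves the quantifier-free type because $p$ embeds into $q$. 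Since \Dtwol is closed under negation, the symmetric argument handles \Ptwol, so the contexts $v^\omega w v^\omega$ and $v^\omega$ are $\sim_L$-equivalent, yielding the identity.

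For the harder \emph{sufficiency} direction, assume $(H_L,V_L)$ satisfies the $\Delta_2$ identity. I would show that every $\sim_L$-class of forests and of contexts is defined by a $\Sigma_2$ expression in the sense of the previous subsection. By closure of such expressions under finite union and by Lemma~\ref{fact:simple-lex}, this immediately gives $L \in$~\Stwol; a dual argument (the $\Delta_2$ identity is self-dual under reversal) gives $L \in$~\Ptwol, hence $L \in$~\Dtwol. The $\Sigma_2$ expression for a given type is built by induction along the $\piece$-order on $V_L$: an idempotent $v^\omega$ absorbs every piece $w$ by the identity, so types that are $\piece$-maximal form piece-closed sets and are captured by Lemma~\ref{lem:pi1-lemma}. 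In the inductive step one existentially guesses a ``skeleton'' of nodes witnessing a non-trivial factorization of the ambient forest into context/forest factors whose individual types are strictly higher in the $\piece$-order than the product type; each factor between skeleton nodes, and each sibling region to the side, is then described by the inductively constructed expression for its own type.

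The main obstacle is the sufficiency direction, where two tree-specific complications appear. First, unlike in words a forest factorizes both vertically (context composed with subforest) and horizontally (sibling concatenation), so the induction must produce skeletons that cut the forest along both axes and glue the two kinds of expressions together coherently using the product operations of Lemma~\ref{fact:simple-lex}. Second, one must guarantee that within a context realizing $v^\omega$, every subforest hanging off the spine can indeed be an arbitrary member of a piece-closed set; this is where the full strength of the identity $v^\omega w v^\omega = v^\omega$ for \emph{all} $w \piece v$ (rather than just the word-style $w = v$) is needed, since these lateral subforests contribute pieces of $v$ that must be swallowed by the surrounding idempotent. Making this absorption argument precise, and bookkeeping the mutual recursion between forest and context expressions, is the technical heart of the proof.
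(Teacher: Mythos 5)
Your necessity argument matches the paper's Lemma~\ref{lemma:correctness} in substance: the paper does not phrase it as an explicit Ehrenfeucht--Fra\"iss\'e game, but the mechanism is identical---use aperiodicity to normalize the exponents, place the existential witnesses so that one block of $p$'s is witness-free, insert the piece $q$ there, and observe that the universal round is unaffected because $q$ embeds into $p$. Applying this to $L$ and to its complement (both $\Sigma_2$-definable) yields the identity in the syntactic algebra, so this half is fine.

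The sufficiency half has a genuine gap, located exactly where the paper has to work hardest. Your induction ``along the $\piece$-order on $V_L$'' cannot get off the ground for the forest types the paper collects in $\botcomp$, namely those reachable from every other type. For such $h$, every type of $H$ is a piece of $h$, so $h$ is already $\piece$-maximal; your base-case claim that maximal types ``form piece-closed sets'' fails, because $L_h$ itself is not closed under pieces (removing nodes changes the type), and the piece-downward closure of $\set h$ is all of $H$, which defines the trivial language of all forests rather than $L_h$. Nor can the inductive step help: any factorization $t=ps$ produces a factor $s$ with $\alpha(s)\piece\alpha(t)$, and for $h\in\botcomp$ one cannot force the factors' types to be strictly separated from $h$ in the piece order or in the reachability pre-order. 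This is precisely why the paper's bottom-up induction (Section~\ref{sec:induction-step}, on the reachability pre-order together with algebra size) is explicitly declared to break down on $\botcomp$, and why Section~\ref{sec:top-down} introduces a second, top-down induction on the prefix pre-order of context types, working with the congruences $\equiv_v$ and with $v$-overapproximations of the languages $L_h$, because the exact type classes cannot be produced there. Your proposal neither identifies this obstruction nor offers a substitute for the overapproximation machinery. A second omission: to define the context languages $K_v$ appearing along your ``skeleton'' you must apply a Pin--Weil-style decomposition to contexts read as words over the \emph{infinite} alphabet of elementary contexts $a\hole$, $t+\hole$, $\hole+t$; this requires strengthening the word theorem (induction on the image $\beta(A)$ rather than on $|A|$, and blocks $A_i^*$ downward closed with respect to a stratified-monoid order), which is Proposition~\ref{prop:word-da} in the paper and is absent from your plan. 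Finally, the ``dual argument under reversal'' for $\Pi_2$ is unnecessary and not really available for forests; once every type class is $\Sigma_2$-expressible, the complement of $L$ is too, which gives $\Pi_2$ directly.
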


Before we prove the main theorem, we state and prove
an important corollary.
\begin{cor}\label{cor:decid}
  It is decidable whether a forest language can be defined in \Dtwol.
\end{cor}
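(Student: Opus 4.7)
The plan is to derive the corollary as a direct consequence of Theorem~\ref{thm:main}, so the only work is to argue that the stated $\Delta_2$ identity is effectively verifiable on the syntactic forest algebra. First I would invoke the standard fact that, from any of the usual finite representations of a regular forest language (a tree automaton, or any recognizing forest algebra together with the accepting set), the syntactic forest algebra $(H_L,V_L)$ and the syntactic morphism can be effectively computed. Once this is in hand, an integer $\omega$ satisfying $u^\omega u^\omega = u^\omega$ for every $u \in V_L$ can be picked effectively, e.g.\ $\omega = |V_L|!$.

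The core point is to argue that the piece relation $\piece$ on $V_L$ is itself decidable, since the $\Delta_2$ identity from Theorem~\ref{thm:main} is a universally quantified statement over pairs $w \piece v$. Unpacking Definition~\ref{df:piece-alg}, one has $w \piece v$ iff there exists \emph{some} alphabet $A$, some morphism $\alpha : \freef A \to (H_L,V_L)$ and contexts $p \piece q$ over $A$ with $\alpha(p) = w$ and $\alpha(q) = v$. I would finitize this quantification by taking $A = V_L$ with $\alpha(a\hole) = a$, so that every element of $V_L$ is the $\alpha$-image of some context over $A$. A standard pumping argument on $(H_L,V_L)$ then bounds the size of the contexts $q$ needed to realize each $v \in V_L$, which reduces checking $w \piece v$ to a finite search; equivalently, $\piece$ can be obtained as the least preorder on $V_L$ closed under composition and under insertion of a letter-labeled node, computed by a straightforward saturation.

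With the operation tables of $(H_L,V_L)$, the value of $\omega$, and the relation $\piece$ all available, verifying the identity $v^\omega w v^\omega = v^\omega$ for every pair $w \piece v$ in $V_L$ is a finite check. By Theorem~\ref{thm:main}, this check answers exactly the question of whether $L$ is definable in \Dtwol, proving the corollary. The only conceptually nontrivial step is the decidability of $\piece$; everything else is routine finite-algebra bookkeeping, so I expect the bounded-alphabet/pumping reduction for the piece relation to be the main thing to justify carefully.
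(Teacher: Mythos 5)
Your proof is correct and follows essentially the same route as the paper: compute the syntactic forest algebra, compute the piece relation on $V_L$ by a saturation/fixpoint procedure (the paper cites the fixpoint algorithm of~\cite{fo2tree} for exactly this), and then verify the $\Delta_2$ identity by a finite check over all pairs $w \piece v$. Your extra discussion of why the existential quantification over alphabets and morphisms in Definition~\ref{df:piece-alg} can be finitized is a reasonable elaboration of what the paper leaves to the citation, and nothing in it is off track.
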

\begin{proof}
  We assume that the language is represented as a forest algebra. This
  representation can be computed based on other representations, such
  as automata or monadic second-order logic. 

  Once the forest algebra is given, the $\Delta_2$ identity can be
  tested in polynomial time by searching through all elements of the
  algebra. The relation $\piece$ can be computed in polynomial time,
  using a fixpoint algorithm as in~\cite{fo2tree}.
\end{proof}


The following lemma gives the easier implication from the main theorem.

\begin{lem}\label{lemma:correctness}
  Let $\varphi$ be a formula of \Stwol and let $q \piece p$ be two
contexts. For $n \in \Nat$ sufficiently large, forests  satisfying $\varphi$ are closed under replacing
$p^n p^n$ with $p^nqp^n$.
\end{lem}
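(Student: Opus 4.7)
The plan is to argue via an Ehrenfeucht--Fra\"iss\'e-style game for the $\Sigma_2$ quantifier prefix. Write $\varphi = \exists x_1 \ldots x_k\ \forall y_1 \ldots y_m\ \gamma(\bar{x},\bar{y})$ with $\gamma$ quantifier-free, and choose $n$ large enough in terms of $k$, $m$, and the number of nodes of $p$; the precise bound is polynomial and does not matter for the statement.

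Let $t_1 = rp^n p^n s$ be a forest satisfying $\varphi$ and fix existential witnesses $\bar{a} = (a_1, \ldots, a_k)$ with $t_1 \models \forall \bar{y}\,\gamma(\bar{a},\bar{y})$; write $t_2 = rp^n q p^n s$. First I define the natural injection $f : t_1 \hookrightarrow t_2$: it is the identity on $r$ and $s$ and matches the $j$-th copy of $p$ in $t_1$ with the $j$-th copy of $p$ in the $p^n\cdot p^n$ part of $t_2$, skipping over the inserted $q$. The map $f$ preserves labels, the ancestor relation $<$, and the lexicographic order $\lex$, so I take $\bar{a}' := f(\bar{a})$ as the existential witnesses in $t_2$.

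To verify $t_2 \models \forall \bar{y}\,\gamma(\bar{a}',\bar{y})$, I build, for every tuple $\bar{b}' = (b_1',\ldots,b_m')$ in $t_2$, a tuple $\bar{b}$ in $t_1$ such that $(t_1,\bar{a},\bar{b})$ and $(t_2,\bar{a}',\bar{b}')$ have the same quantifier-free type; since $\gamma$ is quantifier-free, $t_1 \models \gamma(\bar{a},\bar{b})$ then yields $t_2 \models \gamma(\bar{a}',\bar{b}')$. For each $b_i'$ lying in the image of $f$ I set $b_i := f^{-1}(b_i')$; for each $b_i'$ lying in the inserted $q$, I use the inclusion of $q$ into $p$ coming from $q \piece p$ to identify $b_i'$ with a node of $p$ and place $b_i$ at that same position inside a free copy $p_{j^*}$ of $p$ in $t_1$.

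The selection of $j^*$ is the crux, and it is what forces $n$ to be large. The analysis of the ancestor and lexicographic relations between $b_i'$ and each $f(a_l)$ in $t_2$ forces $p_{j^*}$ to sit at the boundary between the top and bottom halves of $p^{2n}$ in $t_1$, that is, $j^* \in \{n, n+1\}$. The piece inclusion $q \hookrightarrow p$ maps spine to spine and side branch to side branch, so once $j^*$ is chosen, the verification of quantifier-free type reduces to a case analysis on whether each $a_l$ and each $b_i'$ lies on the spine or in a side branch of its host copy. Since at most $k$ of the $2n$ copies of $p$ in $t_1$ contain any element of $\bar{a}$, for $n \gg k$ at least one middle copy is free. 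The main obstacle is the case where witnesses occupy both $p_n$ and $p_{n+1}$: it is handled by taking $n$ yet larger, so that by pigeonhole there is a long consecutive block of free $p$-copies somewhere in $p^{2n}$, and by adapting the placement of the $b_i$'s to a suitable side branch or deep spine node of an adjacent occupied copy to match the relations forced by the position of $q$ in $t_2$.
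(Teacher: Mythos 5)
Your overall strategy (fix existential witnesses in $t_1=rp^np^ns$, transport them to $t_2=rp^nqp^ns$ along the natural injection $f$, then answer each universal tuple of $t_2$ by a tuple of $t_1$ of the same quantifier-free type) is the right kind of argument, but it has a genuine gap at exactly the point you flag as ``the crux''. Because $q$ sits at the fixed position between the $n$-th and $(n+1)$-st copies of $p$, and because any order- and label-preserving matching of the $2n$ copies of $p$ in $t_1$ with those in $t_2$ is forced to be the identity, you cannot avoid the situation where some witness $a_l$ lies in copy $n$ and some witness $a_{l'}$ lies in copy $n+1$. The adversary may then place a universal variable $b'$ on the spine of $q$ so that $f(a_l) < b' < f(a_{l'})$ in $t_2$; to match the quantifier-free type you need a node $b$ of $t_1$ with $a_l < b < a_{l'}$ and the same label, and no such node need exist (take $a_l$ the deepest spine node of copy $n$ and $a_{l'}$ the root of copy $n+1$, so that $a_{l'}$ is a child of $a_l$). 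Your proposed remedy --- a long free block of copies \emph{somewhere else} in $p^{2n}$ --- cannot help: a node in a distant block is either an ancestor of both $a_l$ and $a_{l'}$ or a descendant of both, never strictly between them, and no ``side branch or deep spine node of an adjacent occupied copy'' escapes this constraint. So the obstacle is not handled by taking $n$ larger.

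The missing idea is to \emph{not} prove the statement directly for $q$ inserted dead centre. The paper first invokes aperiodicity of first-order definable languages: there is $m$ such that $p^i$ and $p^j$ are interchangeable for $i,j\ge m$, so it suffices to show that $rp^iqp^js\models\varphi$ for \emph{some} $i,j\ge m$. This freedom lets one write $t=rp^ip^lp^js$ with the middle block $p^l$ containing none of the $k$ existential witnesses (a pigeonhole over the $2n$ copies, with $n=2m+(k+1)(l+1)$), and insert $q$ immediately after that witness-free block, yielding $t'=rp^ip^lqp^js$. Then, when some of the $l$ universal variables fall into $q$, at least one copy of $p$ in the adjacent block is free of \emph{all} selected nodes; deleting that free copy and replacing $q$ by $p$ via the piece embedding $q\piece p$ produces exactly $t$ again, with a tuple of the same quantifier-free type. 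This ``delete a free copy, then substitute $q\to p$'' step is what replaces your unresolved case analysis, and aperiodicity is what makes the bookkeeping of the number of copies come out right. Without the aperiodicity reduction (or some equivalent device for moving the insertion point away from the witnesses), the argument does not close.
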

\begin{proof}
  Assume that $\varphi$ is $\exists x_1 \cdots x_k \forall y_1 \ldots
  y_l \psi$, with $\psi$ quantifier-free.  Any first-order definable
  tree language is aperiodic~\cite{schutzenberger}, i.e.~there is a number $m$ such that any
  context $p^i$ can be replaced by $p^j$ without affecting membership
  in the language, for any $i,j\geq m$. We set $n=2m+(k+1)(l+1)$.

 Consider a forest $t=rp^np^ns$ that satisfies $\varphi$. We want to show that
 the forest $rp^nqp^ns$ also satisfies $\varphi$. By aperiodicity, it
 is sufficient to show that 
 for some numbers $i,j>m$, $rp^iqp^js$ satisfies $\varphi$.

 Because $t$ satisfies $\varphi$ we can fix $k$ nodes $x_1,\ldots, x_k$
 that make $\forall y_1 \ldots y_l \psi$ true. By the choice of $n$,
 $t$ can be decomposed as $rp^ip^lp^js$ such that $i,j\geq m$ and the
 middle $p^l$ part contains none of the nodes $x_1,\ldots,x_k$. We show that
 $t'=rp^ip^lqp^js$ satisfies $\varphi$, which will conclude the proof
 of the lemma.

 We identify the nodes of $t$ with the nodes of $t'$ outside the inserted
 context $q$.  Consider the valuation of the variables $x_1,\cdots, x_k$ that
 we fixed above. We show that this valuation, when seen as nodes of $t'$, makes
 $\varphi$ true. Indeed, for any valuation for the variables $y_1,\cdots,y_l$
 in $t'$, we show that the valuation makes the quantifier-free part $\psi$
 true. This is obvious if none of the $y_i$ are in $q$ because $\varphi$ holds
 in $t$ and the insertion of $q$ does not affect the relationship $<$ and
 $\lex$ between the selected nodes.  If some of the $y_i$ are in $q$ then one
 of the contexts $p$ in the middle block $p^l$ of $t'$ does not contain any
 variable. As removing the context $p$ that does not contain any variable does
 not affect the relationship $<$ and $\lex$ between the selected nodes, $\psi(x_1,\cdots,x_k,y_1,\cdots,y_l)$
 holds on $rp^ip^lqp^js$ iff it holds on $rp^ip^{l-1}qp^js$.
 Moreover, because $q$ is a piece of $p$, replacing $q$ by $p$  does
 not affect the relationship $<$ and $\lex$ between the selected nodes and therefore
$\psi(x_1,\cdots,x_k,y_1,\cdots,y_l)$ holds on $rp^ip^{l-1}qp^js$ iff it holds on $rp^ip^{l-1}pp^js=t$.
 Hence $\psi$ must hold with the new valuation.
\end{proof}

The rest of the paper contains the more difficult implication of
Theorem~\ref{thm:main} which is a consequence of the proposition below.

\begin{prop}\label{prop:bottom-up}
  Fix a morphism $\alpha : \freef A \to (H,V)$, with $(H,V)$
  satisfying the $\Delta_2$ identity.  For each $h \in H$, the set $L_h$
   of forests $t$ with type $\alpha(t)=h$ is definable by a $\Sigma_2$ forest
  expression, and thus also by a formula of \Stwol.
\end{prop}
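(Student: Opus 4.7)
The natural strategy is a simultaneous induction on types in $H$ and $V$, ordered by the piece relation $\piece$ (with ties broken by going from $V$ to $H$, so that when handling $v \in V$ we may use already-built expressions for all forest types, and when handling $h \in H$ we may use all strictly $\piece$-smaller types). The inductive statement must be strengthened to cover both shapes at once: for every $h \in H$ we build a $\Sigma_2$ forest expression for $L_h=\alpha^{-1}(h)$, and for every $v \in V$ a $\Sigma_2$ context expression for $K_v=\alpha^{-1}(v)$. The base case is the $\piece$-minimal types, for which $L_h$ and $K_v$ are themselves closed under pieces (any piece of an element has type still $\piece$-below, hence is the same minimal type or the type is unreachable), and therefore are $\Sigma_2$ expressions directly by the definition of the expression class. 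The singleton context languages $\{\hole\}$ and $\{a\hole\}$ are also available as $\Sigma_2$ base expressions, which will give us the atomic building blocks for forest construction.

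\textbf{Inductive step.} The plan is to factor every element of $L_h$ through a carefully chosen splitting. In the easy case, $h$ has a nontrivial decomposition $h = v\cdot h'$ or $h = h_1 + h_2$ with all factors strictly $\piece$-below $h$, in which case we can write $L_h$ as a union of products $K_v \cdot L_{h'}$ or $L_{h_1}+L_{h_2}$ (possibly intersected to pin down the exact product), and Lemma~\ref{fact:simple-lex} together with the closure clauses in the definition of $\Sigma_2$ expressions finishes the job by the induction hypothesis. The nontrivial case is when $h$ (or $v$) is $\piece$-maximal in its equivalence class, i.e.\ an idempotent at the top of a $\piece$-interval; this is exactly the case where the $\Delta_2$ identity $v^\omega w v^\omega = v^\omega$ (for $w\piece v$) applies. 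The identity says that once one has committed to a context of type $e=v^\omega$, inserting or deleting any number of occurrences of any $w\piece v$ is invisible. The consequence I would exploit is that the set of context types reachable from $e$ by multiplying on either side by pieces of $v$ is bounded in the piece ordering by $e$ itself, so the language $K_e$ can be characterized as: contexts that factor as $p\,r\,q$ where $p,q$ have strictly smaller type or are built from pieces, and $r$ is ``built only from pieces of $v$''—the latter being a piece-closed condition, hence a legal $\Sigma_2$ base expression.

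\textbf{Main obstacle.} The heart of the argument is to turn the absorbing property of the $\Delta_2$ identity into an \emph{existential} choice of a distinguished splitting point (two siblings, in the style of Lemma~\ref{fact:simple-lex}), so that the outer parts have strictly $\piece$-smaller types while the ``inside'' is expressible by a piece-closed language. For words this is essentially an unambiguous factorization through an alphabet $B_e$ of letters whose image is $\piece$-below $e$; for forests, the corresponding notion must deal with both vertical and horizontal concatenation, and with the fact that an idempotent forest type $g = \omega g$ may have many incomparable realizations. I expect the proof will isolate a canonical ``first witness'' node (e.g.\ the topmost occurrence of a label or context type not appearing in the local piece-closed description), guess it existentially, and use the $\Delta_2$ identity to iterate. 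Making the outer parts strictly smaller in a well-founded measure, while ensuring that every forest of type $h$ really admits such a witness, is the step I expect to require the most care; the remaining combinatorics—composing products, unions, and intersections of the resulting $\Sigma_2$ expressions—follows routinely from Lemma~\ref{fact:simple-lex} and the closure properties of $\Sigma_2$ expressions.
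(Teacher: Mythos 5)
Your proposal has the right general flavour (well-founded induction plus decompositions glued together by Lemma~\ref{fact:simple-lex}), but the induction you set up does not work, and the step you yourself flag as the ``main obstacle'' is exactly the part that carries all the weight in the actual proof and is not supplied. First, the base case is wrong: if $h$ is $\piece$-minimal, a piece of a forest of type $h$ has some type $g\piece h$ in the same $\piece$-class, but $g$ need not equal $h$, so $L_h$ itself is not closed under pieces (only the union of $L_g$ over the whole class is). Worse, the types in $\botcomp$ --- those reachable from every type --- are all mutually pieces of one another, so they form a single $\piece$-class in which your induction makes no progress at all; these are precisely the hard types. Second, you are missing the second induction parameter used in the paper, namely the \emph{size of the algebra}. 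When $h+h\sim h$, the type $h$ cannot be split into strictly smaller summands, and the paper escapes by showing that the pieces of $h$ together with $\stab(h)$ form a proper subalgebra (Lemma~\ref{lem:sub-forest-algebra}) and recursing on that; no ordering of types within a fixed algebra can replace this.

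Third, for $h\in\botcomp$ the paper does \emph{not} produce $L_h$ by guessing a witness and factoring through strictly smaller types --- there are no smaller types available. It runs an entirely separate induction on context types $v$ under the prefix pre-order, works with the equivalences $\equiv_v$, and constructs only \emph{$v$-overapproximations} of $L_h$ (Propositions~\ref{prop:contexts} and~\ref{prop:overapprox}), recovering $L_h$ itself only at $v=\hole$. Nothing in your plan anticipates that one must give up on defining $L_h$ exactly during the induction. Finally, the context languages $K_v$ are not handled by the same induction either: the paper proves Proposition~\ref{prop:context-expression} by viewing a context as a word over an infinite alphabet of small contexts and adapting the Th\'erien--Wilke argument to stratified monoids satisfying~(\ref{eq:da}). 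Your intuition about an unambiguous factorization with a piece-closed middle block is roughly the right shape for that word-level layer, but as stated your argument for the idempotent case is a plan for a proof rather than a proof, and the combinatorial core (Sections~\ref{sec:induction-step}--\ref{sec:top-down} of the paper) is absent.
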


Before proving this proposition, we show how it concludes the proof of
Theorem~\ref{thm:main}.  Since $\Sigma_2$ expressions allow union, the
above proposition shows that any language recognized by $\alpha$ can
be defined by a $\Sigma_2$ forest expression. In particular, if $L$ is
recognized by $\alpha$, then both $L$ and its complement can be
defined by $\Sigma_2$ forest expressions, and consequently formulas of
\Stwol.  Since the complement of a \Stwol formula is a \Ptwol
formula, we get the right-to-left implication in
Theorem~\ref{thm:main}.

The rest of the paper is devoted to showing
Proposition~\ref{prop:bottom-up}. The proof is by induction on two
parameters.  For the second parameter, we need to define a pre-order on $H$. We say that a type $h$ is
\emph{reachable from} a type $g$ if there is a context type $v \in V$ such
that $h=vg$. If $h$ and $g$ are mutually reachable from each other,
then we write $h \sim g$. Note that $\sim$ is an equivalence relation.
Note also that if $g$ is reachable from $h$, then $h$ is a piece of
$g$. We write $\botcomp$ for the set of types $h$ that can be reached
from every type $g \in H$. Note that $\botcomp$ is not empty, since it
contains the type $h_1 + \cdots + h_n$, for any enumeration
$H=\set{h_1,\ldots,h_n}$.

The proof of Proposition~\ref{prop:bottom-up} is by induction on the
size of the algebra $(H,V)$ and then on the position of $h$ in the
reachability pre-order. The two parameters are ordered
lexicographically, the most important parameter being the size of the
algebra. That is we will either decrease the size of the algebra or
stay within the same algebra, but go from a type $g$ to a type $h$
such that $h$ is reachable from $g$ but not vice versa. As far as $h$
is concerned, the induction corresponds to a bottom-up pass, where
types close to the leaves are treated first.

Part of the  induction proof is presented in Section~\ref{sec:induction-step}.
However, the induction breaks down for types from $\botcomp$, which are
treated in Section~\ref{sec:top-down}.

\section{Types  outside \texorpdfstring{$\botcomp$}{Hbot}}\label{sec:induction-step}

In this section we prove Proposition~\ref{prop:bottom-up} for forest types outside $\botcomp$. We fix such a forest type $h$ for the rest of the section. By induction assumption, for each type $g \not \sim h$ from
which $h$ is reachable, we have a $\Sigma_2$ forest expression defining the
language $L_g$ of forests of type $g$. (The case when there are no
such types $g$ corresponds to the induction base, which is treated the
same way as the induction step.) In this section we assume that $h$ is
outside $\botcomp$, and we will produce a $\Sigma_2$ forest expression
for $L_h$. The case where $h$ is in $\botcomp$ will be treated in
Section~\ref{sec:top-down}.

In the following, we will be using the \emph{stabilizer} of $h$,
defined as
\[
  \stab(h)  =  \set{v : vh \sim h} \subseteq V \ .
\]
We say that a context type $v$ stabilizes $h$ if it belongs to the
stabilizer of $h$.  The key lemma is that the $\Delta_2$ identity
implies that the stabilizer is a submonoid of $V$.

\begin{lem}\label{lemma:closed-under-comp}
  The stabilizer of $h$ only depends on the $\sim$-class of $h$. In
  particular, it is a  submonoid of $V$.
\end{lem}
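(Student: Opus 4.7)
The plan is to use the $\Delta_2$ identity on a carefully chosen element of $V$ that fixes $h$ and has the desired product as a piece. Although the lemma is phrased so that ``in particular'' suggests the submonoid property follows from $\sim$-class invariance, I would prove closure under multiplication first---this is where the $\Delta_2$ identity does the real work---and then deduce invariance as an easy corollary.

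For closure under multiplication, suppose $u, v \in \stab(h)$. Since $uh \sim h$ and $vh \sim h$, there exist $w_u, w_v \in V$ with $w_u u h = h$ and $w_v v h = h$. Set $a = w_u u w_v v$. A direct computation gives
\[
a h \;=\; w_u u (w_v v h) \;=\; w_u u h \;=\; h,
\]
so $a^\omega h = h$ as well. The key observation is that $uv \piece a$: picking any morphism $\alpha \colon \freef A \to (H,V)$ realizing $w_u, u, w_v, v$ by contexts $c_{w_u}, c_u, c_{w_v}, c_v$, the composition $c_{w_u} c_u c_{w_v} c_v$ realizes $a$, and deleting the non-hole nodes of $c_{w_u}$ and $c_{w_v}$ produces the sub-composition $c_u c_v$ realizing $uv$. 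The $\Delta_2$ identity then yields $a^\omega (uv) a^\omega = a^\omega$; evaluating both sides at $h$ and using $a^\omega h = h$ gives $a^\omega \cdot (uv\,h) = h$. Hence $h$ is reachable from $uvh$, and since $uvh$ is reachable from $h$ trivially, $uvh \sim h$, i.e., $uv \in \stab(h)$.

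For the $\sim$-class invariance, let $h \sim h'$, witnessed by $u, u' \in V$ with $uh = h'$ and $u'h' = h$. Then $uh = h' \sim h$ immediately gives $u \in \stab(h)$. For any $v \in \stab(h)$, closure under multiplication provides $vu \in \stab(h)$, so $vh' = vuh \sim h \sim h'$ and hence $v \in \stab(h')$; the reverse inclusion is symmetric. The main obstacle is identifying the right ansatz for $a$: it must simultaneously fix $h$ (so that $a^\omega h = h$ collapses the $\Delta_2$ identity when applied to $h$) and admit $uv$ as a piece. The interleaved product $w_u u w_v v$ achieves both---the inserted $w_u, w_v$ ensure $a$ fixes $h$, while their removal is precisely what exhibits $uv$ as a piece---so once this construction is spotted the rest is essentially mechanical.
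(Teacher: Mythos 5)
Your proof is correct and uses essentially the same mechanism as the paper's: sandwich the target element inside a product that fixes the forest type, pass to the $\omega$-power, and apply the $\Delta_2$ identity (the paper takes $a=u_3u_1vu_2$ with $v\piece a$ to show $v$ still stabilizes $h'$; you take $a=w_uuw_vv$ with $uv\piece a$ to show $uv$ stabilizes $h$). The only structural difference is that you prove closure under multiplication first and derive the $\sim$-class invariance from it, whereas the paper does the reverse; both orderings are sound.
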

\begin{proof}
  We need to show that if $h \sim h'$ then
  $\stab(h)=\stab(h')$. Assume $v \in \stab(h)$. Then $vh \sim
  h$. Hence we have $u_1,u_2,u_3$ such that $h=u_1vh, h=u_2h'$ and
  $h'=u_3h$. This implies that $h'=u_3u_1vu_2 h'$ and therefore
  $h'=(u_3u_1vu_2)^\omega h'$. From the $\Delta_2$ identity we have that
  \[
h'=(u_3u_1vu_2)^\omega h'= (u_3u_1vu_2)^\omega v (u_3u_1vu_2)^\omega h' = (u_3u_1vu_2)^\omega v h'\ .
\]
  Hence $h'$ is reachable from $vh'$. Since $vh'$ is clearly reachable
from $h'$, we get $vh' \sim h'$  and $v \in \stab(h')$.

To see that $\stab(h)$ is a submonoid consider $v,v' \in \stab(h)$. We need to
show that $vv' \in \stab(h)$. Let $h'=v'h$. Because $v' \in \stab(h)$ we have $h'
\sim h$. As $v \in \stab(h)=\stab(h')$ we have $vh'\sim h'$ and hence $vv'h
\sim h$.
\end{proof}

Recall now the piece order on forest types from the end of
Section~\ref{sec:forest-algebra}, which corresponds to removing nodes
from a forest. We say a set $F \subseteq H$ of forest types is
\emph{closed under pieces} if any piece of a forest type $f \in F$
also belongs to $F$. A similar definition is also given for sets of
context types. Another consequence of the $\Delta_2$ identity is:

\begin{lem}\label{lemma:closed-under-piece}
  Each stabilizer is closed under pieces.
\end{lem}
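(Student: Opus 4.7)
The goal is to show that if $v \in \stab(h)$ and $w \piece v$, then $w \in \stab(h)$, i.e., $wh \sim h$. The one easy half is that $wh$ is reachable from $h$ by applying $w$; the whole content is to show that $h$ is reachable back from $wh$. Since $w$ itself might not stabilize $h$ directly, the plan is to work instead with a conveniently chosen representative of the $\sim$-class of $h$, namely $h' := v^\omega h$, and then invoke Lemma~\ref{lemma:closed-under-comp} to transfer the conclusion back to $h$.

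First I would use Lemma~\ref{lemma:closed-under-comp}: since $\stab(h)$ is a submonoid, $v^\omega \in \stab(h)$, so $h' = v^\omega h \sim h$. Next I would apply the $\Delta_2$ identity, which is available because $w \piece v$: we have $v^\omega w v^\omega = v^\omega$ in $V$, and evaluating both sides on $h$ yields
\[
  v^\omega w h' \;=\; v^\omega w v^\omega h \;=\; v^\omega h \;=\; h'.
\]
This equation exhibits $h'$ as reachable from $wh'$ via $v^\omega$; and of course $wh'$ is reachable from $h'$ via $w$. Therefore $wh' \sim h'$, i.e., $w \in \stab(h')$. Finally, invoking Lemma~\ref{lemma:closed-under-comp} once more, $\stab(h') = \stab(h)$ because $h' \sim h$, and so $w \in \stab(h)$, as required.

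I do not expect any real obstacle. The only subtle point is the minor mismatch between the statement of the identity (which puts $w$ between two copies of $v^\omega$) and the membership question for $\stab(h)$ (which wants to compare $wh$ with $h$); the $v^\omega$-shifted representative $h'$ is exactly the device that reconciles these, after which Lemma~\ref{lemma:closed-under-comp} does the rest.
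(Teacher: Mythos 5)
Your proof is correct. It reaches the same destination by a slightly different route than the paper. The paper unfolds the definition of $u \in \stab(h)$ to get a reachability witness $v$ with $h = vuh$, forms the idempotent $(vu)^\omega$, which fixes $h$ exactly, and applies the $\Delta_2$ identity to it (noting $u' \piece u \piece vu$) to conclude $h = (vu)^\omega u' h$ in one line --- it never needs Lemma~\ref{lemma:closed-under-comp}. You instead build the idempotent $v^\omega$ from the stabilizing element alone; since $v^\omega$ need not fix $h$, you pass to the shifted representative $h' = v^\omega h$, which it does fix, and then lean on both halves of Lemma~\ref{lemma:closed-under-comp} (that $\stab(h)$ is a submonoid, to get $h' \sim h$, and that the stabilizer depends only on the $\sim$-class, to transfer $w \in \stab(h')$ back to $w \in \stab(h)$). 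Your version avoids unpacking the definition of $\sim$ at the cost of invoking the preceding lemma twice; the paper's is more self-contained but requires the small trick of folding the reachability witness into the idempotent. Both uses of the $\Delta_2$ identity are legitimate, and there is no circularity since Lemma~\ref{lemma:closed-under-comp} is established independently beforehand.
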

\begin{proof}
  We need to show that if $u$ stabilizes $h$, then each piece $u'$ of
  $u$ also stabilizes $h$. By definition of the stabilizer we have a
  context type $v$ such that $h=vuh$. We are looking for a context type $w$ such
  that $wu'h=h$. From $h=vuh$ we get $h=(vu)^\omega h$. Hence by the
  $\Delta_2$ identity we have $h=(vu)^\omega u' (vu)^\omega h =
  (vu)^\omega u' h$ as desired.
\end{proof}

We now consider two possible cases: either $h+h \sim h$, or not. Equivalently,
we could have asked if $h+\hole$ stabilizes $h$. Again equivalently, we could
have asked if $\hole+h$ stabilizes $h$. When $h+h \sim h$, we will conclude by
induction on the size of the algebra. When $h+h \not \sim h$, we will conclude
by induction on the reachability pre-order. These cases are treated separately
in Sections~\ref{sec:h-does-not-preserve-itself}
and~\ref{sec:h-preserves-itself}, respectively.

\subsection{\texorpdfstring{$h + h \sim h$}{h+h sim h}}
\label{sec:h-preserves-itself}
Let $G$ be the set of pieces of $h$. By assumption that $h+h \sim h$,
we know that both $h + \hole$ and $\hole + h$ stabilize $h$. 
\begin{lem}\label{lem:sub-forest-algebra}
  If $h + h \sim h$ then $(G,\stab(h))$ is a forest
  algebra.
\end{lem}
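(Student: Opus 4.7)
The plan is to verify the axioms of a forest algebra for the pair $(G, \stab(h))$ as laid out in Section~\ref{sec:forest-algebra}: I need $G$ to be a submonoid of $H$ with identity $0$, $\stab(h)$ to be a submonoid of $V$, the action of $\stab(h)$ on $G$ to be well-defined, monoidal, and faithful, and for every $g \in G$ the elements $\hole + g$ and $g + \hole$ to lie in $\stab(h)$.

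The monoid structure of $\stab(h)$ is Lemma~\ref{lemma:closed-under-comp}, and since $0$ is a piece of every element of $H$ I get $0 \in G$ for free. For closure of $G$ under $+$, I would fix a concrete representative $t$ of type $h$ and concrete pieces $s_1, s_2 \piece t$ of types $g_1, g_2$; then $s_1 + s_2 \piece t + t$ gives $g_1 + g_2 \piece h + h$, and the hypothesis $h + h \sim h$ upgrades to $h + h \piece h$ (mutual reachability forces the piece relation in both directions), yielding $g_1 + g_2 \in G$. For the action, if $v \in \stab(h)$ and $g \in G$, then $vh \sim h$ gives $vh \piece h$, and applying a context of type $v$ to the concrete piece representatives shows $vg \piece vh \piece h$, so $vg \in G$. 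Monoidality $\hole \cdot g = g$ is inherited from the ambient algebra.

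For the $\hole + g,\, g + \hole$ clause, I would first observe that $h + h \sim h$ is literally the statement that both $\hole + h$ and $h + \hole$ stabilize $h$. Since $g \piece h$ entails $\hole + g \piece \hole + h$ and $g + \hole \piece h + \hole$ (remove the same nodes from the forest side while keeping the hole fixed), Lemma~\ref{lemma:closed-under-piece} then places $\hole + g$ and $g + \hole$ in $\stab(h)$.

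The one delicate point is faithfulness of the restricted action: the ambient $V$ acts faithfully on all of $H$, but this need not survive restriction to a smaller domain $G \subsetneq H$. The standard remedy is to quotient $\stab(h)$ by the congruence $v \equiv w$ iff $vg = wg$ for all $g \in G$; the quotient remains a monoid, inherits closure under pieces together with the required $\hole + g,\, g + \hole$ elements, and now acts faithfully on $G$ by construction. Following the paper's convention I would still denote this algebra by $(G, \stab(h))$. Beyond this bookkeeping concern I do not anticipate any serious obstacle, since each axiom reduces to a short piece-manipulation once Lemmas~\ref{lemma:closed-under-comp} and~\ref{lemma:closed-under-piece} are in hand and the hypothesis $h + h \sim h$ is used to supply the two witnessing stabilizer elements $\hole + h, h + \hole$.
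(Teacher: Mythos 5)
Your proof is correct and follows essentially the same route as the paper's: both verify the four closure inclusions ($\stab(h)\stab(h) \subseteq \stab(h)$, $G+G \subseteq G$, $\hole+G, G+\hole \subseteq \stab(h)$, $\stab(h)G \subseteq G$) using Lemmas~\ref{lemma:closed-under-comp} and~\ref{lemma:closed-under-piece}, with $h+h \sim h$ supplying both $h+h \piece h$ and the stabilizing elements $h+\hole$, $\hole+h$. Your additional care about faithfulness of the restricted action is a point the paper silently glosses over, and your quotient remedy is the standard, correct fix (and harmless for the intended use, since the quotient still recognizes $L_h$).
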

\begin{proof}
    We need to show that the two sets are closed under all operations:
\begin{eqnarray*}
    stab(h) stab(h) & \subseteq &\stab(h) \\
    G +  G &  \subseteq & G \\
 \hole +G, G + \hole     & \subseteq & \stab(h) \\
    \stab(h)  G & \subseteq & G
\end{eqnarray*}
  The first of the above inclusions follows from
  Lemma~\ref{lemma:closed-under-comp}. For the second inclusion, we
  note that $h+h$ is a piece of $h$ by assumption on $h+h \sim h$. In
  particular, each forest type in $G + G$ is a piece of $h$. For the
  third inclusion, $h+h \sim h$ implies that both $h + \hole$ and
  $\hole + h$ stabilize $h$. Since by
  Lemma~\ref{lemma:closed-under-piece} the stabilizer is closed under
  pieces, we get the third inclusion. For the last inclusion, consider
  $v \in \stab(h)$ and $g \in G$. We need to show that $vg \in
  G$. This holds because $vg$ is a piece of $vh$, which is a piece of
  $h$ as $vh \sim h$.
\end{proof}

Recall that in this section we are dealing with the case when $h$ is
outside $\botcomp$, i.e.~there are some forest types that can be
reached from $h$ but not vice versa. In this case we show that $G$ is a proper
subset of $H$. To see this, we show that for $h_\bot \in \botcomp$, $h_\bot
\not\in G$. Assume for contradiction that $h_\bot \in G$. Then $h_\bot$ is a
piece of $h$ and because $h + \hole \in \stab(h)$ and $\stab(h)$ is closed
under pieces (Lemma~\ref{lemma:closed-under-piece}), we infer $h_\bot + \hole
\in \stab(h)$ and $h$ is reachable from $h_\bot$, a contradiction.

Therefore the algebra from the above lemma is a
proper subalgebra of the original $(H,V)$.  Furthermore, this algebra
contains all pieces of $h$; so it still recognizes the language $L_h$;
at least as long as the alphabet in the morphism is reduced to include
only letters that can appear in $h$.  We can then use the induction
assumption on the smaller algebra to get a $\Sigma_2$ forest
expression for $L_h$.

\subsection{\texorpdfstring{$h + h \not \sim h$}{h+h notsim h}}
\label{sec:h-does-not-preserve-itself}

For $v \in V$, we write $K_v$ for the set of contexts of type $v$. For
$g \in H$, we write $L_g$ for the set of forests of type $g$, and
$M_g$ for the set of trees of type $g$.

Let $G$ be the set of forest types $g$ such that $h$ is reachable from
$g$ but not vice-versa.  By induction assumption,  for each $g \in G$,
the  language $L_g$
is definable by a $\Sigma_2$ forest expression. Our
goal is to give a $\Sigma_2$ forest expression for $L_h$.

\begin{lem}\label{lem:decompo-nonminim}
  Any forest $t$ of type $h$ can be decomposed as $t=ps$, with $s$ a
  forest whose type is  reachable from $h$, and which furthermore is:
\begin{enumerate}[\em(1)]
\item A tree $s=as'$ with the type of $s'$ in $G$; or
\item A forest $s=s_1+s_2$ with the types of $s_1,s_2$ in $G$.
\end{enumerate}
\end{lem}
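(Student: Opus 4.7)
The plan is to take $s$ of minimum node count among all forests admitting a decomposition $t=ps$ with $\alpha(s)$ reachable from $h$; since $t=\hole\cdot t$ is such a decomposition, this minimum exists. The preliminary fact to establish first is that $0$ is not reachable from $h$, which forces the minimal $s$ to be non-empty since the empty forest has type $0$.

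For the preliminary fact, suppose toward a contradiction that $vh=0$ for some $v\in V$ and set $u=(\hole+h)v$; then $uh=(\hole+h)\cdot 0=h$, so $u^\omega h=h$. Picking representatives $t_h$ and $q_v$ with $\alpha(t_h)=h$ and $\alpha(q_v)=v$, the context $(\hole+t_h)\cdot q_v$ realizes $u$, and erasing every non-hole node of $q_v$ leaves $\hole+t_h$, so $\hole+h\piece u$. Applying the $\Delta_2$ identity $u^\omega(\hole+h)u^\omega=u^\omega$ to $h$ gives
\[
u^\omega(h+h)\;=\;u^\omega(\hole+h)\,u^\omega h\;=\;u^\omega h\;=\;h,
\]
so $h$ is reachable from $h+h$; combined with the reachability of $h+h$ from $h$ via $\hole+h$, this forces $h+h\sim h$, contradicting the hypothesis of this subsection.

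The remaining step is a case split on the shape of the minimal $s$. If $s=as'$ is a single tree, then $s'$ also decomposes $t$, via $p'=p\cdot(a\hole)$, with $|s'|=|s|-1$ and $h=\alpha(p')\alpha(s')$ certifying reachability of $h$ from $\alpha(s')$; conversely, if $s'\neq 0$ then minimality forbids $\alpha(s')$ from being reachable from $h$, while if $s'=0$ the same conclusion follows from the preliminary fact. Hence $\alpha(s')\in G$, giving case~(1). If $s=t_1+\cdots+t_n$ with $n\geq 2$, set $s_1=t_1$ and $s_2=t_2+\cdots+t_n$, both non-empty; the contexts $p\cdot(\hole+s_2)$ and $p\cdot(s_1+\hole)$ witness reachability of $h$ from each $\alpha(s_i)$, while minimality rules out the converse (each $s_i$ would be a strictly smaller candidate), placing both types in $G$ and giving case~(2). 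The main obstacle is the preliminary fact $0\not\sim h$: it is the only substantive use of the $\Delta_2$ identity here; everything else is routine minimum-size bookkeeping.
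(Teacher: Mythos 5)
Your proof is correct and follows essentially the same route as the paper's: choose a decomposition $t=ps$ with $s$ minimal among forests whose type is reachable from $h$, then split on whether $s$ is a single tree or a concatenation of at least two trees. Your preliminary fact that $0$ is not reachable from $h$ (derived from the $\Delta_2$ identity together with $h+h\not\sim h$) is a welcome addition: the paper's one-line proof silently assumes the minimal $s$ is nonempty, and your argument supplies exactly the justification needed for that step.
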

\begin{proof}
  Consider decompositions of $t$ as $t=ps$. Among such decompositions,
  take a decomposition where the forest $s$ has a type reachable from
  $h$, but $s$ has no subforest with a type reachable from $h$. Such  a
  decomposition always exists as $t$ is of type $h$. If $s$
  is a tree, we get case (1), if $s$ is a forest, we get case (2).
\end{proof}

Note that in the above lemma, the type of the context $p$ must
stabilize $h$, since both the type of $s$ and the type of the whole
forest $t$ are in the class of $h$. Therefore, thanks to the above
lemma, the set $L_h$ of forests with type $h$ can be decomposed as
\begin{equation}
  \label{eq:lh-nonmaximal-decomp}
    L_h = \bigcup_{u \in \stab(h)}\qquad \big(\bigcup_{\substack{a \in A, g \in G\\
      uag=h}} K_uaL_g \qquad \cup \qquad  \bigcup_{\substack{g_1,g_2 \in G\\
      u(g_1+g_2)=h}} K_u(L_{g_1} + L_{g_2}) \big)
\end{equation}
Note that $L_g$, $L_{g_1}$ and $L_{g_2}$ can all be written as
$\Sigma_2$ forest expressions thanks to the induction assumption from
Proposition~\ref{prop:bottom-up}.  The only thing that remains is
showing that the context language $K_u$ can be defined by a $\Sigma_2$
context expression.  For this, we use the following proposition.

\begin{prop}\label{prop:context-expression}
  For any $v \in V$, the language $K_v$ of contexts of type $v$ is
  defined by a finite union of concatenations of the form $K_{1} \cdots
  K_{m}$ where each context language $K_{i}$ is either:
  \begin{enumerate}[\em(1)]
  \item A singleton language $\set{a \hole}$ for some $a \in A$; or
\item A context language closed under pieces; or
\item  A  context  language    $\hole + M_g$
  or $M_g + \hole$ for  some $g \in H$.
  \end{enumerate}
\end{prop}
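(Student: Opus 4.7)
The plan is to reduce Proposition~\ref{prop:context-expression} to a known decomposition result for word languages whose syntactic monoid lies in the variety DA.

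I would first establish a canonical atomic decomposition of contexts: every context $p$ factors uniquely along the spine of its hole as a product $p = f_1 \cdots f_N$, where each atomic factor is either a spine label $a\hole$ (for some $a \in A$), a left side-tree $t + \hole$, or a right side-tree $\hole + t$. Let $T \subseteq V$ denote the (finite) set of types of such atomic contexts. This decomposition identifies contexts of type $v$ with words $\sigma \in T^*$ whose $V$-product equals $v$. Writing $S_v \subseteq T^*$ for this set of words, $S_v$ is recognized as a regular language over the alphabet $T$ by the product morphism $T^* \to V$.

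The second step is to observe that the monoid $V$ lies in DA, i.e.\ satisfies $(xy)^\omega x (xy)^\omega = (xy)^\omega$ for all $x,y \in V$. Given $x, y \in V$, one can choose contexts $p_x, p_y$ over some alphabet realizing types $x$ and $y$; the context $p_x$ is a piece of $p_x p_y$ (erase every non-hole node of $p_y$), hence $x \piece xy$ per Definition~\ref{df:piece-alg}. Applying the $\Delta_2$ identity with $v := xy$ and $w := x$ yields the monoid DA identity. Since the syntactic monoid of $S_v$ divides $V$, it too is in DA, and so $S_v$ belongs to $\Sigma_2(<)$ over the alphabet $T$. By the Pin-Weil/Arfi polynomial characterization of $\Sigma_2(<)$~\cite{arfi91,weilpinpoly}, $S_v$ is a finite union of monomials of the form $T_0^* w_1 T_1^* w_2 \cdots w_k T_k^*$ with $T_i \subseteq T$ and $w_j \in T$.

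The final step translates each such monomial back into a concatenation of context languages, $C(T_0) \cdot L(w_1) \cdot C(T_1) \cdots L(w_k) \cdot C(T_k)$, where $C(T_i)$ is the set of contexts whose atomic decomposition uses only types in $T_i$, and $L(w_j)$ is the set of all contexts of atomic type $w_j$. Each $C(T_i)$ is closed under pieces: removing nodes from a context only deletes atomic factors and cannot introduce a new atomic type. This matches clause~(2) of the proposition. The language $L(w_j)$ is either the singleton $\{a\hole\}$, when $w_j$ is the type of a spine label (clause~(1)), or the language $M_g + \hole$ or $\hole + M_g$ when $w_j$ is the type of a side-tree associated with tree-type $g$ (clause~(3)). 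Taking the union over the finitely many monomials thus yields the required expression for $K_v$.

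The main technical subtlety is the passage from the forest-algebra $\Delta_2$ identity (which quantifies only over $w \piece v$) to the unrestricted monoid DA identity for $V$; this is precisely where the piece relation on $V$ is used, and where one must verify that the factorisation $x \piece xy$ is valid in the sense of Definition~\ref{df:piece-alg}. Once this is in hand, the remainder combines the bijective atomic decomposition with the standard polynomial characterisation of $\Sigma_2(<)$ over finite alphabets.
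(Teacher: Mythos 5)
Your overall strategy---decompose a context along its spine into atomic factors ($a\hole$, $t+\hole$, $\hole+t$), view $K_v$ as the preimage of $v$ under a word morphism, invoke a DA-type decomposition into monomials, and translate back---is exactly the route the paper takes, and your derivation of the DA identity for $V$ from the $\Delta_2$ identity via $x \piece xy$ is correct and necessary. However, there is a genuine gap in the final translation step, precisely at the claim that each $C(T_i)$ is closed under pieces because ``removing nodes from a context only deletes atomic factors and cannot introduce a new atomic type.'' This is false: removing nodes from the side-tree $t$ of a factor $\hole + t$ does not delete the factor, it replaces it by $\hole + t'$ (or by a product of such factors, if the root of $t$ is removed), where $\alpha(t')$ is a piece of $\alpha(t)$ but in general a \emph{different} element of $H$. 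So the piece of a context in $C(T_i)$ has atomic types lying in the downward closure of $T_i$ under $\piece$, not in $T_i$ itself, and $C(T_i)$ is closed under pieces only if $T_i$ is downward closed. The off-the-shelf Pin--Weil/Arfi monomial decomposition that you invoke gives no control over the letter-sets $T_i$ appearing in the starred blocks, so clause~(2) of the proposition is not established.

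This is exactly the point where the paper has to strengthen the word-level result rather than cite it: Proposition~\ref{prop:word-da} is proved for \emph{stratified monoids}, i.e.\ monoids equipped with a pre-order satisfying $m^\omega n m^\omega = m^\omega$ for $n \piece m$, and its conclusion guarantees that each starred block is of the form $A \cap \beta^{-1}(N)$ with $N$ \emph{downward closed under} $\piece$. The downward closure is obtained inside the Th\'erien--Wilke-style induction (see the last step of Lemma~\ref{lemma:wilke3}, which uses the stratified identity to show that $mk \rgreensim m$ whenever $k \piece n$ and $mn \rgreensim m$). A further, more minor divergence: the paper works over the infinite alphabet $B$ of actual atomic contexts rather than the finite alphabet $T$ of their types, which forces the induction to run on $|\beta(A)|$ instead of the alphabet size; your finite-type-alphabet reduction is a legitimate simplification for obtaining the monomial shape, but it does not supply the missing downward-closure property. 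To repair your argument you would need to reprove the DA decomposition with the piece pre-order built in, at which point you have essentially reconstructed the paper's Section on treating contexts like words.
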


The proof of this proposition will be presented in
Section~\ref{sec:treat-cont-like-words}. Meanwhile, we show how the
proposition gives a $\Sigma_2$ context expression for each language $K_u$
in~(\ref{eq:lh-nonmaximal-decomp}). The singleton languages, and the
languages closed under pieces are $\Sigma_2$ context expressions by
definition.  The only potential problem is with the languages $\hole +
M_g$ or $M_g + \hole$ that appear in the proposition.  Since the
context types $u$ that appear in~(\ref{eq:lh-nonmaximal-decomp})
stabilize $h$, the forest type $g$ has to be such that $\hole + g$ or
$g + \hole$ stabilizes $h$. In either case, $g$ cannot be reachable
from $h$, since $h + h \not \sim h$ and the stabilizer is closed under
pieces. As $h$ is obviously reachable from $g$, the language $L_g$ is
definable by a $\Sigma_2$ forest expression thanks to the induction
assumption from Proposition~\ref{prop:bottom-up}. Finally, $M_g$ is
the intersection of $L_g$ with the set of all trees, which is
definable by a $\Sigma_2$ forest expression.

\section{Treating contexts like words}
\label{sec:treat-cont-like-words}
In this section, we prove Proposition~\ref{prop:context-expression}.
The basic idea is that a context is treated as a word, whose letters are smaller contexts.
The proof strategy is as follows. First, in Section~\ref{sec:inter-words}, we
present the characterization of \Dtwo for words, which was shown by Pin and
Weil in~\cite{weilpinpoly}. This characterization is slightly strengthened to
include what we call stratified monoids, which are used to model the contexts
that appear in Proposition~\ref{prop:context-expression}.  Then, in
Section~\ref{sec:proof-prop-refpr}, we apply the word result, in its
strengthened form, to prove Proposition~\ref{prop:context-expression}.

\subsection{\texorpdfstring{\Dtwo}{Dtwol} for words}
\label{sec:inter-words}
In this section we present the characterization of \Dtwo for words,
extended to stratified monoids.  A \emph{stratified monoid} is a
monoid $M$ along with a pre-order $\preceq$ that satisfies the following
property:
\[
  m^\omega n m^\omega = m^\omega \qquad \mbox{for }n \piece m\ .
\]
A subset $N \subseteq M$ is called \emph{downward closed under $\piece$} if for every $n \in N$, and every $m \piece n$, we also have $m \in N$.
 
\begin{prop}\label{prop:word-da}
  Let $A$ be an alphabet (possibly infinite), and let $\beta : A^* \to
  M$ be a morphism into a stratified monoid $(M, \piece)$ that satisfies
  the identity
  \begin{equation}
    \label{eq:da}
        (mn)^\omega m (mn)^\omega = (mn)^\omega
  \end{equation}
  For any $m \in M$, the language $\beta^{-1}(m)$ is defined by a
  finite union of expressions
  \[
    A_0^* B_1 A_1^* \cdots B_{i}A_i^*
  \]
  where each $B_j$ is of the form $A \cap \beta^{-1}(n)$ for some $n
  \in M$, and each $A_j$ is of the form $A \cap \beta^{-1}(N)$ for
  some $N \subseteq M$ downward closed under $\piece$.
\end{prop}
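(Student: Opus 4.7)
The plan is to adapt the classical Pin--Weil proof that a word language recognized by a morphism into a DA monoid is a finite union of unambiguous marked products $C_0^* b_1 C_1^* \cdots b_k C_k^*$ (with each $b_i$ an individual letter and each $C_i$ a subalphabet). Two generalizations are needed for the present statement. First, because the alphabet $A$ may be infinite, letters must be grouped by $\beta$-image: the atomic $b_i$ of Pin--Weil becomes a set $B_j = A \cap \beta^{-1}(n_j)$ containing all letters with a fixed image. Second, in order to keep the resulting union finite regardless of $|A|$, each star subalphabet $C_i$ must be enlarged to $A \cap \beta^{-1}(N_i)$ for some $\piece$-downward closed $N_i \subseteq M$. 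The stratified identity $m^\omega n m^\omega = m^\omega$ (for $n \piece m$) is exactly what licenses this enlargement.

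\textbf{Key steps.} The argument proceeds by induction on $|M|$. Fix $m \in M$. Given $w = a_1 \cdots a_k \in \beta^{-1}(m)$, consider the partial-product sequence $p_i = \beta(a_1 \cdots a_i)$; under the DA identity these descend monotonically in the $\mathcal{J}$-preorder. Let $i$ be the first index at which $p_i$ drops strictly below the $\mathcal{J}$-class of $m$, and factor $w = u \cdot a_i \cdot v$. The prefix $u$ is then a word whose letter-images all lie in $\mathcal{J}$-classes at least that of $m$; the pivot letter $a_i$ has some image $n$ with $p_{i-1} \cdot n$ strictly $\mathcal{J}$-below $m$; and the suffix $v$ is constrained so that $\beta(u \cdot a_i) \cdot \beta(v) = m$, the latter equation governed by a proper ideal/quotient of $M$ to which the induction hypothesis applies. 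The pivot letter ranges over finitely many images $n \in M$ (yielding the $B_j$'s), the suffix language is handled by induction on a strictly smaller instance, and the prefix language is shown, using DA together with the stratified identity, to be of the form $(A \cap \beta^{-1}(N))^*$ for an appropriate $\piece$-downward closed $N$. Iterating the factorization (applied to the suffix) yields an expression $A_0^* B_1 A_1^* \cdots B_i A_i^*$ as required, and finitely many choices of pivot images $n$ give the finite union.

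\textbf{Main obstacle.} The delicate point will be justifying the enlargement of star factors from subalphabets to $\piece$-downward closed preimages. Concretely, I must show: if a word $u$ contributes to the prefix slot and some letter $a$ satisfies $\beta(a) \piece n$ for an image $n$ already present, then $u$ with extra occurrences of $a$ inserted still has the same image $\beta(u)$ (modulo the surrounding $m^\omega$-factors). This is exactly where the stratified hypothesis enters essentially, via $m^\omega \beta(a) m^\omega = m^\omega$; combined with DA, which guarantees each star block is effectively idempotent in its role, this yields the desired absorption. The bookkeeping — matching star positions to the correct downward-closed sets $N_i$, and verifying that the induction parameter strictly decreases as we pass to the suffix $v$ — is the main technical work; beyond that, the proof is essentially the standard Pin--Weil argument carried through with the two substitutions described.
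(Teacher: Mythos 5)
Your strategy belongs to the same family as the paper's (a Thérien--Wilke-style factorization at positions where the running prefix product drops in the Green preorder, with the stratified identity used to show the star blocks are $\piece$-downward closed), but as written the recursion has two genuine gaps. The first is the induction parameter. You induct on $|M|$ and dispose of the suffix condition $\beta(ua_i)\beta(v)=m$ by appeal to ``a proper ideal/quotient of $M$''; but this is an exact-value condition for the \emph{same} morphism into the \emph{same} monoid, and nothing shows the relevant quotient is proper, so the recursion is not seen to terminate. The paper instead inducts on $|\beta(A)|$ (finite even when $A$ is infinite), and the strict decrease is earned by proving that the pivot letter $a$ lies \emph{outside} the restricted alphabet $A_n$ over which the inner exact-value problem $U_{n,k}=\{w\in A_n^*:n\beta(w)=k\}$ is posed. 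Some argument of this shape is indispensable and is missing from your sketch. (A smaller slip: if $\beta(w)=m$ then no prefix product is strictly $\mathcal J$-below $m$, so your pivot should be the first prefix falling \emph{into} the $\mathcal R$-class of $m$, as in the paper's languages $V_m$.)

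The second gap is more fundamental: a purely left-to-right factorization records only the successive $\mathcal R$-classes of prefix products, and in a DA monoid this determines $\beta(w)$ only up to $\rgreensim$, not exactly. The paper resolves this by building one $\Sigma_2$ expression for $\{w:\beta(w)\rgreensim m\}$ from the left-to-right factorization, a symmetric one for $\{w:\beta(w)\lgreensim m\}$ from a right-to-left factorization, and then recovering $\beta^{-1}(m)$ as their intersection via the $\mathcal H$-triviality consequence of the DA identity ($m\rgreensim n$ and $m\lgreensim n$ imply $m=n$). Your proposal has no counterpart of this two-sided step; the exact-value requirement on the suffix is precisely the point where the $\mathcal L$-side information is needed, and ``handled by induction on a smaller instance'' does not supply it. The use of the stratified identity to close the star alphabets downward is correctly located, but note that what is actually needed (and what the paper proves) is that the defining set $N=\{n:mn\rgreensim m\}$ is $\piece$-downward closed, not the stronger absorption statement that inserting extra letters preserves the value of $\beta(u)$.
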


The difference between the above result and the main technical result
in Pin and Weil is twofold. First, we use infinite alphabets here.
Second, we use stratified monoids to get a
stronger conclusion, where the letters in the blocks $A_i^*$ are
downward closed.  Both differences are necessary for our application
to context languages.

Our proof is a straightforward adaptation of a proof of Th\'erien and Wilke
in~\cite{therienwilkefo2}, which analyzed the languages recognized by
semigroups in DA. 

  Before proving this result, we remark how
  Proposition~\ref{prop:word-da} gives the
characterization of \Dtwo presented by Pin and Weil:
\begin{cor}[Pin and Weil~\cite{weilpinpoly}]
  A word language (over a finite alphabet) is definable \Dtwo if and only if
  its syntactic monoid satisfies the identity~(\ref{eq:da}).
\end{cor}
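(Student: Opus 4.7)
The plan is to derive this corollary directly from the two results that precede it in the paper: Proposition~\ref{prop:word-da} supplies the ``if'' direction, and Lemma~\ref{lemma:correctness}, specialized to words seen as unary forests, supplies the ``only if'' direction.

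For the ``if'' direction, suppose the syntactic monoid $M$ of $L$ satisfies~(\ref{eq:da}). I would equip $M$ with the trivial pre-order in which $n \piece m$ iff $n=m$. Under this choice the stratification axiom $m^\omega n m^\omega = m^\omega$ collapses to $m^\omega m m^\omega = m^\omega$, which holds by idempotency, and every subset of $M$ is automatically downward closed under $\piece$. Proposition~\ref{prop:word-da} then expresses each preimage $\beta^{-1}(m)$ under the syntactic morphism $\beta \colon A^* \to M$ as a finite union of expressions $A_0^* B_1 A_1^* \cdots B_i A_i^*$ with $A_j, B_j \subseteq A$. Such an expression is directly \Stwo-definable: existentially guess positions $x_1 < \cdots < x_i$ whose labels lie in the respective $B_j$'s, and universally require that every other position carry a label in the $A_j$ determined by its position between consecutive $x_j$'s. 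Hence $L = \bigcup_{m \in X} \beta^{-1}(m)$ lies in \Stwo, and since the same argument applied to $\bigcup_{m \notin X} \beta^{-1}(m)$ places the complement of $L$ in \Stwo too, we conclude $L \in \Dtwo$.

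For the ``only if'' direction, suppose $L \in \Dtwo$. Given $m,n \in M$, pick words $u,v \in A^*$ with $\beta(u)=m$ and $\beta(v)=n$; viewing $p=uv$ as a unary context and $q=u$ as a piece of $p$, Lemma~\ref{lemma:correctness} applied to a \Stwo formula defining $L$ shows that for sufficiently large $N$, membership in $L$ is preserved when $p^N p^N$ is replaced by $p^N q p^N$ inside any surrounding context. The same lemma applied to a \Stwo formula defining the complement of $L$ gives the reverse replacement, so $p^N p^N$ and $p^N q p^N$ are syntactically congruent. Choosing $N$ to be a multiple of $\omega(M)$ and projecting to $M$ yields $(mn)^\omega = (mn)^\omega m (mn)^\omega$, as required. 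The main obstacle is purely bookkeeping: one must verify that the trivial stratification really trivializes the extra axiom of Proposition~\ref{prop:word-da}, and notice that Lemma~\ref{lemma:correctness} only gives a one-sided closure, so recovering a two-sided syntactic identity forces us to invoke the lemma on the complement as well. Once these two points are settled, the corollary is immediate.
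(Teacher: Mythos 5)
Your proof is correct. The ``if'' direction is exactly the paper's argument: equip $M$ with the identity pre-order, observe that the stratification axiom and the downward-closure condition become vacuous, apply Proposition~\ref{prop:word-da} to both $\beta^{-1}(X)$ and $\beta^{-1}(M\setminus X)$, and note that the resulting expressions are \Stwo-definable. Where you genuinely diverge is the ``only if'' direction: the paper simply asserts that it follows from ``a standard Ehrenfeucht-Fra\"iss\'e argument'' and gives no details, whereas you derive it by specializing Lemma~\ref{lemma:correctness} to words viewed as unary contexts. Your route is legitimate and arguably tidier, since it reuses a lemma the paper has already proved rather than appealing to an unstated EF game; the two applications (to a \Stwo formula for $L$ and one for its complement) correctly upgrade the one-sided closure of the lemma to the two-sided syntactic congruence $p^{2N}\sim_L p^N q p^N$, and taking $N$ a multiple of $\omega(M)$ gives~(\ref{eq:da}). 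The only points worth making explicit are (a) that words embed into forests as single paths so that $\Sigma_2(<)$ on words coincides with $\Sigma_2(<)$ on these unary forests (the paper itself remarks that the word setting is a special case of the tree setting), and (b) that $u\hole$ is indeed a piece of $uv\hole$ as contexts, which is immediate since one obtains the former by deleting the nodes of $v$. Neither is a gap, just bookkeeping you have essentially already flagged.
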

\begin{proof}
  The only if implication is shown using a standard
  Ehrenfeucht-Fraïssé argument, we only consider the if implication.

  Let then $L\subseteq A^*$ be a language recognized by a morphism
  $\beta : A^* \to M$, with $M$ satisfying (\ref{eq:da}). We can see
  this $M$ as a stratified monoid under the identity pre-order. By
  applying Proposition~\ref{prop:word-da}, we see that each inverse
  image $\beta(m)$ is defined by an expression as in
  Proposition~\ref{prop:word-da} (the downward closure is a vacuous
  condition, since the order is trivial). Since each such expression
  is clearly expressible in \Stwo, we get that $L$ is definable
  in \Stwo. Furthermore, by Proposition~\ref{prop:word-da} also
  the complement of $L$ is definable in \Stwo, and therefore $L$
  is also definable in \Ptwo.
\end{proof}

We now proceed to the proof of Proposition~\ref{prop:word-da}.  The proof is by
induction on the size of $\beta(A) \subseteq M$ or, equivalently, the number of
elements in the monoid that correspond to single letters. In the proof
of Thérien and Wilke, the induction was simply on the size of the alphabet,
but this will not work here, since the alphabet is infinite.

 We use the term
\emph{$\Sigma_2$ word expression} for the word expressions as in the
statement of the Proposition~\ref{prop:word-da}. It is not difficult to
show that languages defined by $\Sigma_2$ word expressions are closed under
union, intersection and concatenation.

We will use the following notation. Given two elements $m$ and $n$ of $M$ we
say that $m \lgreensim n$ if there exist $k,l \in M$ such that $m=kn$ and $n=lm$. We
say that $m \rgreensim n$ if there exist $k,l \in M$ such that $m=nk$
and $n=ml$. These are the left and right Green's relations.

A classical consequence of aperiodicity, itself a consequence
of~(\ref{eq:da}), is:
\begin{equation}\label{eq:wilke2}
  m \rgreensim n\  \land \  m \lgreensim n \ \Rightarrow \ m=n\ 
  \qquad \mbox{ for }m,n \in M\ .
\end{equation}

  We will also use the following property of monoids
  satisfying~(\ref{eq:da}), which can be proved along the lines of
  Lemma~\ref{lemma:closed-under-comp}.
\begin{equation}\label{eq:wilke1}
 m \rgreensim n \rgreensim mk \ \Rightarrow nk \rgreensim n \qquad \mbox{ for
  }m,n,k \in M\ .
\end{equation}

\begin{lem}\label{lemma:wilke3}
  For all $m \in M$, the language $U_m=\set{w : m \beta(w) \rgreensim m}$ is
  definable by a $\Sigma_2$ word expression. 
\end{lem}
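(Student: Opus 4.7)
The plan is to exhibit $U_m$ in the form $(A \cap \beta^{-1}(N_m))^*$ for a subset $N_m \subseteq M$ that is downward closed under $\piece$. This is a $\Sigma_2$ word expression of the simplest shape $A_0^*$ (no interior singleton blocks $B_j$), so it yields the claim directly.

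First I would set $A_m = \{\, a \in A : m \beta(a) \rgreensim m \,\}$ and prove the equality $U_m = A_m^*$. The inclusion $A_m^* \subseteq U_m$ proceeds by induction on word length, once one establishes the preliminary stability property: if $m' \rgreensim m$ and $a \in A_m$, then $m'\beta(a) \rgreensim m'$. This is exactly~(\ref{eq:wilke1}) applied with $x := m$, $y := m'$, $z := \beta(a)$, the required second hypothesis $m' \rgreensim m \beta(a)$ following from $m' \rgreensim m \rgreensim m\beta(a)$ by transitivity of $\rgreensim$. For the reverse inclusion, let $j$ be the least index at which a letter $a_j$ of $w = a_1 \cdots a_n$ fails to lie in $A_m$. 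By the stability just shown, $m' := m \beta(a_1 \cdots a_{j-1}) \rgreensim m$, so the letter $a_j$ strictly decreases the $\mathcal{R}$-class at position $j$; since right multiplication can only lower the $\le_\rtrivial$-preorder, the drop cannot be recovered, so $m \beta(w) <_\rtrivial m$, contradicting $w \in U_m$.

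Second, I would take $N_m = \{\, n \in M : m n \rgreensim m \,\}$, which gives $A_m = A \cap \beta^{-1}(N_m)$ on the nose. The heart of the proof is to verify that $N_m$ is downward closed under $\piece$. Pick $n \in N_m$ with a witness $k$ satisfying $m n k = m$, so $m = m (nk)^\omega$, and fix $n' \piece n$. In the stratified monoids relevant to Proposition~\ref{prop:word-da}, the piece preorder is compatible with products on the right (a piece of $nk$ is obtained by deleting the $k$-part), hence $n \piece nk$, and so $n' \piece nk$ by transitivity. The stratified identity now yields $(nk)^\omega n' (nk)^\omega = (nk)^\omega$; multiplying on the left by $m$ gives $m n' (nk)^\omega = m$, so $m n' \rgreensim m$ and $n' \in N_m$. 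This establishes downward closure and completes the proof.

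The main obstacle lies precisely in this closure step: it relies on the stratified identity together with the mild monotonicity $n \piece nk$ of the piece preorder under right multiplication, a property that holds in the context-algebra instances to which Proposition~\ref{prop:word-da} is applied. Once this step is in place, everything else is a direct adaptation of the familiar Thérien--Wilke analysis of the $\mathcal{R}$-action underlying the DA characterization.
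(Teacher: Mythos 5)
Your identification of $U_m$ as $A_m^*$ with $A_m = A\cap\beta^{-1}(N_m)$, and your verification of both inclusions of that equality via~(\ref{eq:wilke1}), coincide with the paper's own argument. The problem is in the downward-closure step, which you correctly identify as the heart of the matter. You apply the stratified identity to the element $nk$ (where $k$ witnesses $mnk=m$), and for that you need $n' \piece nk$, which you derive from $n \piece nk$. But $n \piece nk$ is not available: a stratified monoid is, by definition, just a monoid with a pre-order satisfying $m^\omega n m^\omega = m^\omega$ for $n \piece m$; no compatibility of $\piece$ with multiplication is assumed, and Proposition~\ref{prop:word-da} is stated and used at that level of generality. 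In particular it is invoked with the identity pre-order (to recover the Pin--Weil corollary), where $n \piece nk$ would mean $n = nk$ and is generally false. You acknowledge that you are importing a property of the context-algebra instance, but then you are no longer proving the lemma as stated, and the lemma is needed for more than that one instantiation.

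The fix --- and the paper's actual route --- is to apply the stratified identity to $n$ itself rather than to $nk$: for $n' \piece n$ one gets $n^\omega n' n^\omega = n^\omega$ directly, hence $m n^\omega n' n^\omega = m n^\omega \rgreensim m$ (the latter because $mn \rgreensim m$ propagates to $mn^\omega \rgreensim m$ by iterating~(\ref{eq:wilke1})). This gives $m n^\omega n' \rgreensim m$, and one more application of~(\ref{eq:wilke1}), using $mn^\omega \rgreensim m$, yields $m n' \rgreensim m$, i.e., $n' \in N_m$. No hypothesis on the interaction of $\piece$ with products is needed. Everything else in your write-up is sound and matches the paper.
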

\begin{proof}
  Let $A_m$ be the set of letters $a$ of $A$ such that $m\beta(a) \rgreensim
  m$. In other words, $A_m = A \cap \beta^{-1}(N)$, where $N$ is the set
  $\set{n: mn\rgreensim m}$.

  We will show that $U_m = A_m^*$. Stated differently, a word belongs $U_m$ if
  and only if all of its letters belong to $A_m$.

  Thanks to~(\ref{eq:wilke1}), for all $n \rgreensim m$ we have $n\beta(a)
  \rgreensim m$. Hence by induction on the length of $w \in A_m^*$ we can prove
  $w \in U_m$. For the converse implication, let $w$ be a word outside $A_m^*$,
  of the form $w=w_1aw_2$ with $w_1 \in A_m^*$ and $a \not\in A_m$. Then
$m\beta(w) = n \beta(a) n'$ for some $n,n'$, and from the discussion above we
have $n \rgreensim m$. Hence by~(\ref{eq:wilke1}) and by hypothesis on $a$, $n
\beta(a) \not\rgreensim m$ and $w$ cannot be in $U_m$.

 To conclude, we need to show that $N$ is closed under $\piece$.
 Indeed, let $k \piece n$ and let $n \in N$. By assumption on the
 monoid being stratified, we have $n^\omega k n^\omega = n^\omega$. In
 particular, we have 
 \[
   mn^\omega k n^\omega \rgreensim mn^\omega \rgreensim m\ .
 \]
 From the above it follows that $mn^\omega k \rgreensim m$, which gives
 $mk \rgreensim m$ by~(\ref{eq:wilke1}), and hence $k \in N$.
\end{proof}

Note that by~(\ref{eq:wilke1}) we have $U_n=U_m$ whenever $m \rgreensim n$.

\begin{lem}\label{lemma:wilke4}
  For any $m \in M$, the following language can be defined by a
  $\Sigma_2$ word expression:
\[
V_m =  \set{a_1\cdots a_i: \ \beta(a_1 \cdots a_i)=m\mbox{ and }
  \beta(a_1 \cdots a_{i-1}) \not\rgreensim  m}\,.
\]
\end{lem}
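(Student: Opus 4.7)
The strategy is a Th\'erien--Wilke style decomposition of $V_m$ that follows the trajectory of each word through the $\mathcal R$-preorder of $M$. For $w = a_1\cdots a_i$, the sequence $\beta(\epsilon), \beta(a_1), \ldots, \beta(w)$ is non-increasing in $\leq_{\mathcal R}$, and the aperiodicity that follows from~(\ref{eq:da}) guarantees that once this sequence drops out of an $\mathcal R$-class it cannot return. Combined with the defining condition $\beta(a_1\cdots a_{i-1})\not\rgreensim m$, this forces the trajectory to remain strictly $\mathcal R$-above $m$ throughout and to enter the $\mathcal R$-class of $m$ only at the very last letter.

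I would then group $w \in V_m$ into maximal blocks that stay inside a single $\mathcal R$-class, separated by single letters effecting strict drops. A block that keeps the running image in the $\mathcal R$-class of a specific element $n$ is precisely a word in $U_n$, which by Lemma~\ref{lemma:wilke3} equals $A_n^*$ for the piece-downward-closed set $A_n$. The separating drop letters, grouped by image, form sets of the form $A \cap \beta^{-1}(c)$, which are exactly the $B_j$ building blocks allowed by Proposition~\ref{prop:word-da}.

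Concretely, enumerate the finite collection of ``traces'': a trace consists of a chain $n_0 = 1, n_1, \ldots, n_\ell = m$ of specific elements of $M$ with strictly decreasing $\mathcal R$-classes, together with drop-letter images $c_1, \ldots, c_\ell$ witnessing each transition. The words matching a given trace form a language of the form
\[
  A_{n_0}^* \cdot (A \cap \beta^{-1}(c_1)) \cdot A_{n_1}^* \cdots A_{n_{\ell-1}}^* \cdot (A \cap \beta^{-1}(c_\ell))\,,
\]
which is a $\Sigma_2$ word expression. Taking the finite union over all traces yields $V_m$ as a $\Sigma_2$ word expression, invoking closure of $\Sigma_2$ word expressions under union and concatenation.

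The main technical obstacle is parameterizing the traces so that this union recovers exactly $V_m$: a block lying in $U_{n_j}$ may end at any element $\rgreensim n_j$ rather than at $n_j$ itself, so the post-drop element depends on the block content and not only on the trace. Overcoming this requires either refining the parameterization to also track the exact exit element within each $\mathcal R$-class (using finiteness of $M$), or invoking a consequence of~(\ref{eq:da}) in the spirit of~(\ref{eq:wilke1})---namely, that right multiplication lifts coherently to $\mathcal R$-classes---so that the trace data already determines the next element modulo the choices absorbed into the $U_{n_j}$ factors.
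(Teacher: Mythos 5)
Your opening observations are sound, and your closing paragraph correctly identifies the obstacle---but that obstacle is not a loose end to be tidied up; it is the entire content of the lemma, and neither of your two proposed remedies closes it. The union of trace languages $A_{n_0}^*(A\cap\beta^{-1}(c_1))A_{n_1}^*\cdots(A\cap\beta^{-1}(c_\ell))$ is in general a strict superset of $V_m$: a block $w_j\in A_{n_j}^*$ only guarantees that the running image stays $\rgreensim n_j$, so the element reached at the end of the block, and hence the element reached after the next drop letter, depends on $w_j$ and not only on the trace. Your second remedy is not available: right multiplication does not act coherently on $\rgreensim$-classes (from $n\rgreensim n'$ one cannot conclude $nc\rgreensim n'c$, let alone $nc=n'c$), and~(\ref{eq:wilke1}) only controls multiplications that stay \emph{inside} the class. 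Your first remedy---tracking the exact exit element $k$ of each block---is the right idea, but it destroys the form of your expression: the block language becomes $\set{w: n\beta(w)=k}$, which is a proper sublanguage of $A_n^*$ and is not of the form $C^*$ for any letter set $C$, so it cannot serve as one of the $A_j^*$ factors directly.

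The paper closes exactly this gap with two nested inductions. It peels off only the \emph{last} letter, writing $V_m=\bigcup V_n U_{n,k}\,a$ where $U_{n,k}=\set{w: n\beta(w)=k}$, the union ranging over $n\rgreensim k$, $k$ a prefix of $m$ with $k\not\rgreensim m$, and $k\beta(a)=m$; the factor $V_n$ is handled by an induction on the prefix pre-order of $M$. The factor $U_{n,k}$ is where the real work happens: the drop letter $a$ satisfies $a\notin A_n$ by~(\ref{eq:wilke1}), so the alphabet $A_n$ has a strictly smaller image under $\beta$, and $U_{n,k}=\bigcup_{nx=k}\beta'^{-1}(x)$ for $\beta'$ the restriction of $\beta$ to $A_n$; one can therefore invoke the \emph{outer} induction of Proposition~\ref{prop:word-da} (on the size of $\beta(A)$) to express it as a $\Sigma_2$ word expression. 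This recursive re-entry into Proposition~\ref{prop:word-da} over a smaller alphabet is the missing ingredient in your argument; without it, the trace decomposition cannot be made exact.
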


Before we give the proof, we show how it concludes the proof of
Proposition~\ref{prop:word-da}. Consider the set  $\set{w : \beta(w) \rgreensim
  m}$. Each word $w$ in this set can be written as $uv$ where $u$ is the
smallest prefix of $w$ such that $\beta(u) \rgreensim m$. Hence we have:
\[
   \set{w : \beta(w) \rgreensim m} = \bigcup_{n : n \rgreensim\,m} V_n U_m\,.
\]
Since $\Sigma_2$ word expressions are closed under union and concatenation, the
above language is definable by a $\Sigma_2$ word expression  thanks to
Lemmas~\ref{lemma:wilke3} and \ref{lemma:wilke4}.  Using a symmetric version of
Lemma~\ref{lemma:wilke3} and Lemma~\ref{lemma:wilke4} for $\lgreensim$, we can get
an $\Sigma_2$ word expression for $\set{w : \beta(w) \lgreensim m}$. But we also know
from~(\ref{eq:wilke2}) that
\[
  \beta^{-1}(m)= \set{w : \beta(w) \rgreensim m} \cap \set{w : \beta(w) \lgreensim m}
\]
and the result follows by closure of $\Sigma_2$ word expressions under
intersection.

\medskip
\begin{proof}[Proof of Lemma~~\ref{lemma:wilke4}]
  We say that $m$ is \emph{a prefix} of $n$ if there exists $k \in M$
  such that $n=mk$. This defines a pre-order in $M$. The proof is by
  induction on the position of $m$ relative to this pre-order.

  The induction base is when $m$ has no proper prefixes: If $m=nk$ then $n \rgreensim m$.
  In this case the language $V_m$ contains at most the empty word, since the
  condition on $a_1 \cdots a_{i-1}$ is infeasible.  Clearly both languages
  $\emptyset$ and $\set \epsilon$ are $\Sigma_2$ word expressions.

Assume now that $m$ is not minimal. Each word $w$ of $V_m$ can be written as
$ua$ where $a \in A$, $\beta(u)=k$ and $k\beta(a)=m$. Furthermore, $u$ can be
written as $u_1u_2$ where $u_1$ is the smallest prefix of $u$ such that
$n=\beta(u_1) \rgreensim k$. We therefore have:
\[
  V_m  =  \bigcup  V_n U_{n,k} a \qquad \mbox{with }  U_{n,k}=\set{ w \in A^* :
    n\beta(w)=k}
\]
where the union is taken for $n,k \in M$ such that $n \rgreensim k$,
$k \not \rgreensim m$ a prefix of $m$ and for $a \in A$ with $k
\beta(a)=m$. By induction the language $V_n$ is definable by a
$\Sigma_2$ word expression. It is also clear that $U_{n,k} \subseteq
U_n$. Recall from the proof of Lemma~\ref{lemma:wilke3} that
$U_n=A_n^*$ where $A_n=A \cap \beta^{-1}(N)$ for some $N \subseteq M$.
Therefore we also have $\beta(U_{n,k}) \subseteq A_n^*$.
From~(\ref{eq:wilke1}) and the fact that $k\beta(a)=m$ we know that $a
\not\in A_n$, therefore $A_n$ is a proper subset of $A$. Let $\beta'$
be the restriction of $\beta$ to $A_n$. We have
$U_{n,k}=\bigcup_{nx=k}\beta'^{-1}(x)$, from induction on the size of
the alphabet in Proposition~\ref{prop:word-da} we obtain a $\Sigma_2$
word expression for $U_{n,k}$. This concludes the proof of this lemma
as $\Sigma_2$ word expressions are closed under concatenation and union.
\end{proof}

\subsection{Proof of Proposition~\ref{prop:context-expression}}
\label{sec:proof-prop-refpr}
We now proceed to show how the word result stated in
Proposition~\ref{prop:word-da} can be lifted to the context result in
Proposition~\ref{prop:context-expression}. Proposition~\ref{prop:context-expression}
says that for any context type $v \in V$, the set of contexts of type $v$
  is described by a finite union of expressions of the form $K_{1} \cdots
K_{m}$ where each $K_{i}$ is either: a singleton language $\set{a \hole}$; or
closed under pieces; or an expression $\hole + M_g$ or $M_g + \hole$.

The basic idea is that we treat the context as a word over an infinite
alphabet, which we call $B$. This alphabet has two kinds of letters.
Both kinds are contexts:
\begin{enumerate}[$\bullet$]
\item Contexts of the form $a \hole$, for $a \in A$.
\item Contexts of the form $t + \hole$ or $\hole + t$, for $t$ a tree
  over $A$.
\end{enumerate}

Consider now the morphism $\beta : B^* \to V$, which is simply $\alpha$
restricted to the contexts in~$B^*$.  Every context $p$ in $K_v$ can be
decomposed as $p=b_1 \cdots b_m \in B^*$. In particular, we have
\[
  K_v =\beta^{-1}(v)\ .
\]

We can treat $V$ as a stratified monoid, by using the piece relation
$\piece$ as the pre-order. By applying Proposition~\ref{prop:word-da}, we
see that the inverse image $\beta^{-1}(v)$ can be presented as a
finite union of expressions of the form:
  \[
     (B \cap \beta^{-1}(N_0))^*  (B \cap \beta^{-1}(n_1)) \cdots 
    (B \cap \beta^{-1}(n_k))  (B \cap \beta^{-1}(N_k))^*\ ,
  \]
  where $n_1,\ldots,n_k$ are elements of $V$, and $N_1,\ldots,N_k$ are
  a subsets of $V$ that are downward closed under $\piece$.

  We need to show that the expressions used above are of the three
  forms allowed by Proposition~\ref{prop:context-expression}.
  Consider first an expression $(B \cap \beta^{-1}(W))^*$, where $W
  \subseteq V$ is closed under pieces. Since $W$ is closed under
  pieces (as a set of context types), then so is the language $(B \cap
  \beta^{-1}(W))^*$ (as a set of contexts).  Consider next an
  expression of the form $B \cap \beta^{-1}(n)$. This context language
  is a union of languages of the first (singleton) and third ($\hole +
  M_g$ or $M_g + \hole$) types described in
  Proposition~\ref{prop:context-expression}.  The union is not a
  problem for a $\Sigma_2$ word expression, since union distributes
  across concatenation.

\section{Types in \texorpdfstring{$\botcomp$}{Hbot}}
\label{sec:top-down}

\def\ordereq{\sim}
\def\order{\leq}

Recall that in Section~\ref{sec:induction-step}, we managed to find a
$\Sigma_2$ forest expression for each set $L_h$, assuming $h$ was
outside $\botcomp$. Our techniques failed for forest types $h \in
\botcomp$, i.e.~forest types reachable from every other forest type.
In this section, we deal with these forest types.

In order to deal with the types from $\botcomp$, we will have to do a different
induction, this time on context types.  This induction, stated in 
Proposition~\ref{prop:contexts}, is expressed in terms of an equivalence
relation $\equiv_v$.  Given a context type $v \in V$ and two forest
types $h,h'$, we write 
\begin{equation}
  \label{eq:equiv-forest}
  h \equiv_v h' \qquad \mbox{if} \quad \forall u \in V \ vuh = vuh'\ .
\end{equation}
We extend this equivalence relation to context types, by
\begin{equation}
  \label{eq:equiv-context}
  w \equiv_v w' \qquad \mbox{if} \quad \forall u \in V\  \forall h \in
  H\ vuwh = vuw'h\ .
\end{equation}
By abuse of notation, we also lift the equivalence relation $\equiv_v$
to forests, considering two forests $s,t$ equivalent when their forest
types are equivalent. It is this meaning that is used in the statement
below.
\begin{prop}\label{prop:contexts}
  For any context type $v$, every equivalence class of forests under
  $\equiv_v$ is forest language definable by a $\Sigma_2$ forest
  expression.
\end{prop}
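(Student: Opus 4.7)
The plan is to prove Proposition~\ref{prop:contexts} by induction on $v$, arranged so that the case $v=\hole$ appears last: since $\equiv_{\hole}$ coincides with equality of forest types (by faithfulness of the action of $V$ on $H$), the $\equiv_{\hole}$-classes are precisely the sets $L_h$, and so completing the proof at $v=\hole$ yields the missing $\Sigma_2$ forest expressions for $h \in \botcomp$ required by Proposition~\ref{prop:bottom-up}. The induction parameter will be the position of $v$ under a well-founded pre-order on $V$ that refines right-reachability, chosen so that $\hole$ is the ultimate (not the initial) element.

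The structural observation that drives the inductive step is the decomposition
\[
  h \equiv_v h' \ \miff \ \forall w \in vV,\ wh = wh',
\]
which exhibits each $\equiv_v$-class as a finite intersection of $\sim_w$-classes $\{h' : wh'=wh\}$, one per $w \in vV$. Each $\sim_w$-class is in turn a finite union of type classes $L_{h'}$ with $wh' = wh$. By Lemma~\ref{fact:simple-lex}, $\Sigma_2$ forest expressions are closed under union and intersection, so in the inductive step it is enough to obtain $\Sigma_2$ forest expressions for the $L_{h'}$ that actually appear. If such $h'$ lies outside $\botcomp$, the results of Section~\ref{sec:induction-step} supply the expression directly; otherwise one reduces to a strictly smaller parameter, using the $\Delta_2$ identity $v^\omega w v^\omega = v^\omega$ to collapse repeated blocks of $v$ and its pieces and so route the argument through a $v' \sqsubset v$ for which the induction hypothesis applies.

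At the base of the induction, where no reduction is possible, a direct description of the finitely many $\equiv_v$-classes is needed. Here the plan is to lift the decomposition strategy of Section~\ref{sec:h-does-not-preserve-itself}: split a forest into a context followed by a subforest via (the analogue of) Lemma~\ref{lem:decompo-nonminim}, describe the context with a $\Sigma_2$ context expression furnished by Proposition~\ref{prop:context-expression}, and use closure-under-pieces languages for the subforest parts that survive the $\Delta_2$ collapses. Because at the base of the induction the relevant part of $V$ is bounded, only finitely many such decompositions arise, and they recombine into a $\Sigma_2$ forest expression through the closure properties of Lemma~\ref{fact:simple-lex}.

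The main obstacle is organising the induction so that the reductions genuinely strictly decrease a well-founded parameter while still allowing the intersection formula above to reassemble a full $\equiv_v$-class: in particular, the $w=v$ term of the intersection is $\sim_v$ itself, which threatens circularity, and one must argue via the $\Delta_2$ identity that this term is already implied by the remaining ones. A secondary obstacle is aligning the $\botcomp$-structure on $H$ with the reachability structure on $V$ chosen as induction parameter, so that at each level the $\Delta_2$ identity delivers the collapses that make the decomposition into $\Sigma_2$ forest and context expressions work.
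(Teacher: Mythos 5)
Your overall frame---induction on the extension pre-order of $v$, terminating at $v=\hole$, whose classes are exactly the sets $L_h$---is the same as the paper's, and your observation that the $\equiv_v$-class of $h$ equals $\bigcap_{w\in vV}\{s: w\alpha(s)=wh\}$ is correct. The gap is in how you propose to evaluate that intersection. For the terms with $w\not\sim v$ (i.e.\ $w=vu$ with $u\notin\stab(v)$) the set $\{s: w\alpha(s)=wh\}$ is a union of $\equiv_w$-classes, so the induction hypothesis applies directly; there is no need to descend to the individual languages $L_{h'}$, and indeed descending to them cannot work, since the hypothesis at a smaller parameter only ever yields coarser unions of type classes, never $L_{h'}$ itself for $h'\in\botcomp$. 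The real problem is the terms with $w\sim v$, in particular $w=v$ itself. You propose to argue that these are implied by the remaining ones, but this is false even in the presence of the $\Delta_2$ identity. Take $L$ to be ``forests consisting of a single node labelled $a$'', which is definable in \Dtwol; in its syntactic algebra the type $h_a$ of the single $a$-node and the type $\top$ of any forest with at least two nodes satisfy $uh_a=u\top$ for every context type $u\neq\hole$, yet $h_a\neq\top$. So for $v=\hole$ (where $\stab(v)=\{\hole\}$) the intersection over $w\not\sim v$ strictly overshoots the $\equiv_\hole$-class, and no manipulation of the identity recovers the lost information at the level of types alone.

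The paper escapes this by never trying to compute the intersection type by type. It introduces $v$-overapproximations (languages between $L_h$ and $[L_h]_v$) and proves Proposition~\ref{prop:overapprox} via a decomposition of forests, Lemma~\ref{lemma:decomp}: every forest of type in $\botcomp$ factors as $ps$ with $\alpha(p)\in\stab(v)$ and $s$ of one of four restricted shapes in which the ``unknown'' subforests of type in $\botcomp$ sit under contexts (such as $a\hole$, $\hole+s_2$, $s_1+\hole$, or the factors $M_f+\hole$ arising from Proposition~\ref{prop:context-expression}) that do \emph{not} stabilize $v$. For exactly those positions the relation $\equiv_{v+}=\bigcap_{u\notin\stab(v)}\equiv_{vu}$, which the induction hypothesis does control, is strong enough, and the error introduced by replacing $L_f$ with $[L_f]_{v+}$ becomes invisible to $\equiv_v$ after composition. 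This forest-level bookkeeping, together with Lemma~\ref{lemma:at-least-one-disjoint} (which also disposes of the base case: a context type with no proper extension is constant, so there is a single class, namely all forests), is the essential content your plan is missing; your base case, by contrast, is over-engineered where the paper's is immediate.
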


From Proposition~\ref{prop:contexts} we immediately obtain a
$\Sigma_2$ forest expression for $L_h$, as the equivalence class of
$\equiv_v$ containing $h$, where $v=\hole$.  Hence the proof of
Proposition~\ref{prop:contexts} ends the proof of
Proposition~\ref{prop:bottom-up}.  We note that the proof of
Proposition~\ref{prop:contexts} will be using the $\Sigma_2$ forest
expressions $L_h$ for types $h \not\in \botcomp$ that have been developed
in Section~\ref{sec:induction-step}. In particular if an equivalence
class of $\equiv_v$ is contained in $H \setminus \botcomp$, then it can easily
be defined by the disjunction of all the $\Sigma_2$ forest expressions
corresponding to each type. The difficulty is to handle equivalence
classes that intersect $\botcomp$.

The rest  of Section~\ref{sec:top-down} is devoted to proving Proposition~\ref{prop:contexts}.
The proof uses the following pre-order on context types. We say that a
context type $u$ is a \emph{prefix} of a context type $v$ if there exists a context type $w$
such that $v=uw$ (we also say that $v$ is an \emph{extension} of
$u$). We overload the use of
$\sim$ and denote by $\ordereq$ the equivalence relation induced by
the prefix pre-order, i.e.~$v \sim w$ holds if $v$ is both a prefix
and an extension of $w$.
The proof of Proposition~\ref{prop:contexts} is by induction on the
position of 
$v$ in the prefix pre-order,  starting with context types that have no
proper extension, and ending at the context type  $v=\hole$ that has no
proper prefix.

\subsection{ The induction base}
The base of the induction in the proof of Proposition~\ref{prop:contexts} is
when the context type $v$ has no proper extension, i.e.~$v \sim w$ holds for
all extensions $w$ of $v$. We will show that such a context type is necessarily
\emph{constant}, i.e.~$vg=vh$ holds for all forest types $g,h \in H$.  This
gives the induction base, since for a constant context type $v$, there is only one
equivalence class of $\equiv_v$, and this class is, by definition, the set of all
forests, which can be defined by a $\Sigma_2$ forest expression (it is closed under pieces).

\begin{lem}\label{lemma:minimal-then-constant}
  A context type has no proper extension if and only if it is constant.
\end{lem}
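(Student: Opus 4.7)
The ``if'' direction is immediate from faithfulness: if $v$ is constant with value $c = v \cdot 0$, then for every $d \in V$ and every $h \in H$ we have $(vd)\cdot h = v(dh) = c$, so $vd$ is the constant context with value $c$ and must equal $v$ itself; in particular $vd \sim v$.

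For the ``only if'' direction, the plan is to use the hypothesis to manufacture a single idempotent $e \in V$ with $v = v\cdot e$ whose range $R = \{e h : h \in H\} \subseteq H$ is forced by the $\Delta_2$ identity to be a singleton. Enumerate $H = \{h_1, \dots, h_n\}$ and set
\[
u \;=\; (h_1 + \hole)\cdots(h_n + \hole)\cdot(\hole + h_1)\cdots(\hole + h_n) \;\in\; V.
\]
Since $vu$ is an extension of $v$, the hypothesis gives $vu \sim v$, so $v = vuy$ for some $y \in V$; iterating, $v = v(uy)^k$ for every $k$, hence $v = v\cdot e$ where $e := (uy)^{\omega}$ is idempotent.

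The key observations are then: (i) each $h + \hole$ and each $\hole + h$ is a piece of $u$, obtained from a flat representative of $u$ by deleting all root-siblings except the one corresponding to the chosen factor; (ii) $u$ is a piece of $e$, obtained by deleting all but one $uy$-factor from a representative of $(uy)^\omega$; hence, by transitivity of $\piece$, $h + \hole \piece e$ and $\hole + h \piece e$ for every $h \in H$. Applying the $\Delta_2$ identity to the idempotent $e$ (so $e^\omega = e$) then gives $e(h+\hole)e = e$ and $e(\hole+h)e = e$ for every $h$; unpacking these on an arbitrary $h' \in H$ yields $e(eh' + h) = eh'$ and $e(h + eh') = eh'$, i.e.~$e(r+h) = r = e(h+r)$ for every $r \in R$ and every $h \in H$.

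Finally, given any $r_1, r_2 \in R$, the first identity with $(r,h)=(r_1, r_2)$ and the second with $(r,h)=(r_2, r_1)$ both compute $e(r_1 + r_2)$, forcing $r_1 = r_2$. Hence $|R| = 1$, so $e$ is a constant context, and from $v h = v(eh)$ we conclude that $v$ is constant as well. The main obstacle I anticipate is the careful verification of the piece relations (i) and (ii) against Definition~\ref{df:piece-alg}, together with the transitivity of $\piece$ on context types: both require exhibiting a representative morphism in which the nodes to be removed are cleanly identifiable. Once this bookkeeping is in place, the collapse $|R|=1$ and the conclusion fall out in one line from the $\Delta_2$ identity.
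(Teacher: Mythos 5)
Your proof is correct and is essentially the paper's argument in lightly repackaged form: the paper also absorbs into $v$ an idempotent built from the sum $f=h_1+\cdots+h_n$ of all forest types (it uses $v \sim v(\omega f + \hole + \omega f)$ and shows that $\omega f + \hole + \omega f$ is constant by applying the $\Delta_2$ identity to $(f+\hole)^\omega$ with the piece $h+\hole \piece f+\hole$), which is the same mechanism as your two-sided range collapse $e(r_1+r_2)=r_1=r_2$ for the idempotent $e=(uy)^\omega$. The piece-relation bookkeeping you flag is handled at the same level of informality in the paper (every $h\in H$ is a piece of $f$ since one deletes all but one summand from a representative of $f$), so there is no gap.
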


The if direction is immediate: if a context type $v$ is constant, then
$vu=v$ holds for all context types $u$, and therefore $v$ has no
proper extension. For the converse implication, as well as in the rest
of Section~\ref{sec:top-down}, we will use the notion of stabilizers
for context types:
\[
  \stab(v)  =  \set{w : vw \ordereq v}\ .
\]
When $u \in stab(v)$, we say that \emph{$u$ stabilizes $v$}.  As for
stabilizers of forest types (recall Lemma~\ref{lemma:closed-under-comp} and
Lemma~\ref{lemma:closed-under-piece}), the $\Delta_2$ identity implies that the
stabilizer $\stab(v)$ is a submonoid of $V$ and it is closed under pieces.  The
following lemma implies Lemma~\ref{lemma:minimal-then-constant}, since its
assumptions are met by a context type without proper extensions.
Recall that
$\botcomp$ is the equivalence class of $\sim$ that contains all types reachable
from any other type.


\begin{lem}\label{lemma:at-least-one-disjoint}
  If both $\botcomp+ \hole$ and $\hole + \botcomp$ intersect $\stab(v)$, then
  the context type $v$ is constant.
\end{lem}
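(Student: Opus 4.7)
The plan is to extract, from the two stabilizer witnesses in $\botcomp$, an idempotent $e\in\stab(v)$ whose structure, combined with the $\Delta_2$ identity, collapses $v$ to a constant map. I will work with $\hat h=\omega h_1$, the idempotent power of $h_1$ in $H$, which is again in $\botcomp$ (reachability is preserved under composing with further contexts). Since $\stab(v)$ is a submonoid, $\hat h+\hole=(h_1+\hole)^\omega\in\stab(v)$. The key combinatorial fact I will use is that every $g\in H$ is a piece of every element of $\botcomp$: $g$ is a piece of $h_\bot=h_1+\cdots+h_n$, which is itself a piece of any $h\in\botcomp$ because $h$ is reachable from $h_\bot$. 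Applying this to $\hole+h_2\in\stab(v)$, and using that $\stab(v)$ is closed under pieces (the analogue of Lemma~\ref{lemma:closed-under-piece} noted just above for context stabilizers), I also obtain $\hole+\hat h\in\stab(v)$.

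The crucial step is to apply the $\Delta_2$ identity to the idempotent $\hat h+\hole\in V$. For every $g\in H$, the context type $g+\hole$ is a piece of $\hat h+\hole$ (because $g\piece\hat h$), so
\[
  (\hat h+\hole)\,(g+\hole)\,(\hat h+\hole) \ =\ \hat h+\hole.
\]
Evaluating both sides on $0\in H$ yields the $H$-level identity $\hat h+g+\hat h=\hat h$ for all $g\in H$, which is the heart of the argument: it packages the hypothesis $h_1\in\botcomp$ together with the $\Delta_2$ axiom into a powerful absorption property of $\hat h$.

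To conclude, I form $e=(\hat h+\hole)(\hole+\hat h)=\hat h+\hole+\hat h\in\stab(v)$. Since $ve\sim v$, there is $\tilde w\in V$ with $v\,e\,\tilde w=v$, and then for any $h\in H$
\[
  vh \ =\ v(\hat h+\tilde w h+\hat h)\ =\ v\hat h,
\]
by the $H$-identity above with $g=\tilde w h$. Hence $vh=v\hat h$ for all $h\in H$, so $v$ is constant, as required. The main hurdle to notice is that one must first pass to the idempotent $\hat h$ and then form the \emph{sandwich} context $\hat h+\hole+\hat h$; only with both sides of the hole in $\botcomp$ and with $\hat h$ idempotent does the $\Delta_2$ identity deliver the absorption $\hat h+g+\hat h=\hat h$ that makes the final calculation collapse.
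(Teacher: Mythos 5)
Your proof is correct and follows essentially the same route as the paper's: the paper takes $f=h_1+\cdots+h_n$ (the sum over all of $H$, which lies in $\botcomp$), uses closure of $\stab(v)$ under pieces and composition to get the sandwich $\omega f+\hole+\omega f$ into $\stab(v)$, and shows this sandwich is constant via the $\Delta_2$ identity exactly as you do with $\hat h+\hole+\hat h$. Your substitution of $\omega h_1$ for $\omega f$ and the explicit final factorization $v=ve\tilde w$ are only cosmetic variations on the same argument.
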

\proof
  First note that if some context type in $\botcomp + \hole$
  stabilizes $v$, then all context types in $\botcomp + \hole$
  stabilize $v$, likewise for $\hole +\botcomp$. This is because the
  stabilizer is closed under pieces, and every type in $H + \hole$ is
  a piece of every type in $\botcomp + \hole$.

 Let $f = h_1 + \cdots +
  h_n$, for some arbitrary enumeration $h_1,\ldots,h_n$ of $H$. As we
  noted above, both $f + \hole$ and $\hole+ f$ stabilize $v$. Therefore,
  \[
    v( \omega f + \hole  + \omega f)  \sim  v\ .
  \]
  The context type $( \omega f + \hole + \omega f)$ is constant,
  since any forest type $h \in H$ is a piece of $f$, and therefore
  by~(\ref{eq:swallow}) we have
  \[
    \omega f + h + \omega f = (f + \hole)^\omega \cdot (h + \hole)
    \cdot (f + \hole)^\omega \cdot  0  = (f + \hole)^\omega 
    \cdot (f + \hole)^\omega \cdot  0 = \omega  f + \omega f \ .
  \]
Hence the context type $v$ is constant as it is equal to $wv( \omega f
+ \hole + \omega f)$ for some context type $w$.\qed

\subsection{The induction step}
We now proceed to the induction step in
Proposition~\ref{prop:contexts}.  Recall that our goal is to find a
$\Sigma_2$ forest expression for every equivalence class of
$\equiv_v$. For a forest language $L$, we denote by $[L]_v$ the union
of equivalence classes of $\equiv_v$ that intersect $L$, i.e.
\[
  [L]_v =  \set{ t : t \equiv_v s \mbox{ for some }s \in L} \quad
  \supseteq L\ .
\]
We use a similar notation $[K]_v$ for languages of contexts.  We say
that a forest language is a \emph{$v$-overapproximation}
of a forest language $L$ if it contains $L$, but is contained in
$[L]_v$. In other words, a $v$-overapproximation may add forests to
$L$, but it adds no new forest types, at least as far as the context
type $v$ is concerned.  Note that a language may have several
$v$-overapproximations.
\begin{prop}\label{prop:overapprox}
  For every $h \in H$, some $v$-overapproximation of $L_h$ can be
  defined by a $\Sigma_2$ forest expression.
\end{prop}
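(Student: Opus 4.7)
The plan is to split on whether $h$ lies in $\botcomp$. For $h \notin \botcomp$, Section~\ref{sec:induction-step} already constructs a $\Sigma_2$ forest expression for $L_h$ itself, and this expression serves as a $v$-overapproximation, since any forest language is trivially a $v$-overapproximation of itself. The interesting case is $h \in \botcomp$, and here the strategy is to exploit the induction hypothesis of Proposition~\ref{prop:contexts}, which applies to every context type strictly above $v$ in the prefix pre-order, in particular to every $vu$ with $vu \not\sim v$.

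The first step is to form the intersection
\[
D \;=\; \bigcap_{u \in V : vu \not\sim v} [L_h]_{vu},
\]
where $[L_h]_{vu}$ is the $\equiv_{vu}$-class of the type $h$. By induction, each $[L_h]_{vu}$ is a $\Sigma_2$ forest expression, and $\Sigma_2$ forest expressions are closed under finite intersection by Lemma~\ref{fact:simple-lex}, so $D$ is $\Sigma_2$-expressible. Because $\equiv_v$ refines every $\equiv_{vu}$, the inclusions $L_h \subseteq [L_h]_v \subseteq D$ hold. However, $D$ may strictly contain $[L_h]_v$: the equivalence $\equiv_v$ also demands that the action of each $v' \sim v$ agree on the type of $t$ and on $h$, and these stabilizer constraints are not yet captured.

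The second step cuts $D$ down by intersecting with an additional $\Sigma_2$ forest expression that enforces the stabilizer constraints. Here I would follow the pattern of Section~\ref{sec:induction-step}: decompose any forest of type $h$ as $t = ps$ in the style of Lemma~\ref{lem:decompo-nonminim}, where $p$ is a context whose type stabilizes $h$ and $s$ is a $\piece$-minimal subforest witness. The subforest $s$ has a type that is either in $H \setminus \botcomp$, in which case Section~\ref{sec:induction-step} provides a $\Sigma_2$ expression for the set of forests of that type, or in $\botcomp$, in which case the $\Delta_2$ identity allows us to absorb $s$ into a canonical idempotent form in the manner of the proof of Lemma~\ref{lemma:at-least-one-disjoint}. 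The surrounding context type is described using Proposition~\ref{prop:context-expression}, which treats contexts as words over the infinite alphabet of single letters and one-subforest insertions, exploiting the fact that $\stab(v)$ is a submonoid closed under pieces (the context analogue of Lemmas~\ref{lemma:closed-under-comp} and~\ref{lemma:closed-under-piece}).

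The main obstacle I expect is that Proposition~\ref{prop:context-expression}, when applied in this setting, may produce one-subforest contexts of the form $\hole + M_g$ and $M_g + \hole$ for trees of types $g \in \botcomp$, whose sets $M_g$ are not directly available as $\Sigma_2$ expressions. The key point to overcome this is that we are only building an overapproximation modulo $\equiv_v$, so each such $M_g$ may be replaced by a larger $\Sigma_2$ set, namely the $\Sigma_2$ overapproximation of its surrounding $\sim$-class provided by induction on one of the strict extensions $vu$. The $\Delta_2$ identity then guarantees that any $\piece$-comparable replacement, once absorbed by the idempotent surrounding context, does not disturb the $\equiv_v$-type of the whole forest, so the overapproximation remains inside $[L_h]_v$.
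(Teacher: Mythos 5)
There is a genuine gap in the case $h \in \botcomp$, and it sits exactly where the real work happens. Your decomposition $t = ps$ ``where $p$ stabilizes $h$'' gives no leverage here: for $h \in \botcomp$ every context type stabilizes $h$ (since $wh$ is reachable from every type and $h$ is reachable from $wh$), so $\stab(h) = V$ and the condition is vacuous. The paper's Lemma~\ref{lemma:decomp} instead decomposes relative to $\stab(v)$, the stabilizer of the \emph{context type} $v$ in the prefix pre-order, and takes a \emph{maximal} such decomposition; the payoff of maximality is that every component whose type lies in $\botcomp$ (and hence can only be overapproximated) is guarded by a context --- $a\hole$, $\hole + s_2$, or $s_1 + \hole$ --- that does \emph{not} stabilize $v$. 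That guard is what makes the overapproximation sound: composing $v$ with a non-stabilizing context lands in a strict extension $vu$, and $\equiv_{v+} \subseteq \equiv_{vu}$ then guarantees the $\equiv_v$-type of the whole forest is unchanged. Your justification by ``absorption into a canonical idempotent form'' via the $\Delta_2$ identity is not the operative mechanism and does not work in general --- there is no idempotent context surrounding the replaced component.

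Two further ingredients you omit are needed to run this. First, the standing assumption~(\ref{eq:assumption-left}) (WLOG $\hole + \botcomp$ is disjoint from $\stab(v)$, which follows from Lemma~\ref{lemma:at-least-one-disjoint} because $v$ has a proper extension) is what lets you conclude that the blocks $\hole + M_f$ with $f \in \botcomp$ cannot occur in $K_u$ for $u \in \stab(v)$, and that $\hole + \alpha(s)$ fails to stabilize $v$ whenever $\alpha(s) \in \botcomp$. Second, the replacement of $M_f + \hole$ by $[L_f]_{v+} + \hole$ is only a valid overapproximation \emph{conditional on the hole being filled with a forest of type in $\botcomp$}; this forces the paper to intersect the inner language with $L_\bot$ (shown separately to be a $\Sigma_2$ expression) before plugging it into $\overap K_u$. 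Your first step, forming $D = \bigcap_{vu \not\sim v}[L_h]_{vu} = [L_h]_{v+}$, is correct and is exactly the paper's tool, but it must be applied to the \emph{component} types $f$ inside the decomposition $\bigcup_{u \in \stab(v)} K_u \cdot Y_f$, not merely intersected with $L_h$'s own class at the top level, where it yields nothing contained in $[L_h]_v$.
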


\newcommand{\overap}[1]{\hat{#1}} The above result concludes
the proof of Proposition~\ref{prop:contexts}. To see this, consider an equivalence
class consisting of forest types $f_1,\cdots,f_n$. The set of
forests with a type in the class is by definition equal to
$\bigcup L_{f_i}$. By definition of $v$-overapproximation this
set is also equal to $\bigcup \overap L_{f_i}$ where $\overap
L_{f_i}$ is any $v$-overapproximation of $L_{f_i}$. Hence it is definable by a
$\Sigma_2$ forest expression by Proposition~\ref{prop:overapprox}.

The rest of this section is devoted to proving
Proposition~\ref{prop:overapprox}. 

When $h$ is outside $\botcomp$, then $L_h$ itself, which is its own
$v$-overapproximation, is definable by a $\Sigma_2$ forest expression
by the results from the previous sections. The problem is when $h$ is
in $\botcomp$.  Because $v$ has some proper extension, from
Lemma~\ref{lemma:at-least-one-disjoint} we know that at least one of  $\hole +
\botcomp$ or $\botcomp + \hole$ is disjoint with the stabilizer of $
v$. Without loss of generality we assume
\begin{equation}
  \label{eq:assumption-left}
  \hole + \botcomp \  \cap\  \stab(v) \quad  = \quad 
  \emptyset\ .
\end{equation}
In other words, if the type of a context stabilizes $v$, then it is
possible that some tree to the left of the hole has a type in $\botcomp$,
however all trees to the right of the hole must have types outside $\botcomp$.

In order to obtain a $v$-overapproximation of $L_h$ for $h \in \botcomp$ we
will use the following decomposition of forests with types in $\botcomp$.

\begin{lem}\label{lemma:decomp}
  Any forest $t$ of type in $\botcomp$ has a decomposition $t=ps$  where $p$ is a context whose type stabilizes $v$, and  $s$ is a forest of type   in  $\botcomp$ that has one of the two forms below.
  \begin{enumerate}[\em(1)]
\item\label{item:decomp2bis} $s=as'$ with  the type of the forest $s'$ outside $\botcomp$; or
	\item\label{item:decomp2} $s=as'$ with  the type of the context $a\hole$ not stabilizing
	    $v$; or
	\item\label{item:decomp3bis} $s=s_1+s_2$ with the types of the forests $s_1,s_2$ outside $\botcomp$; or
  \item\label{item:decomp3} $s=s_1+s_2$  and \begin{enumerate}[$\bullet$]
	\item if $s_1$ has  type in $\botcomp$, then the type of $\hole + s_2$  does not stabilize $v$.
		\item if $s_2$ has  type in $\botcomp$, then the type of $s_1 + \hole$  does not stabilize $v$.
\end{enumerate}
  \end{enumerate}
\end{lem}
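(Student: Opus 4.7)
The plan is to choose a decomposition $t=ps$ with $\alpha(p)\in\stab(v)$ and $\alpha(s)\in\botcomp$ in which the forest $s$ has as few nodes as possible. Such a decomposition exists because $t=\hole\cdot t$ already qualifies ($\hole$ stabilizes $v$ trivially, and $\alpha(t)\in\botcomp$ by hypothesis). The crucial ingredient is the analogue for context-type stabilizers of Lemma~\ref{lemma:closed-under-comp}, noted just before Lemma~\ref{lemma:at-least-one-disjoint}: $\stab(v)$ is a submonoid of $V$. From this one extracts the following \emph{minimality principle}: whenever $s$ can be written as $qs'$ with $\alpha(s')\in\botcomp$ and $s'$ strictly smaller than $s$, one must have $\alpha(q)\notin\stab(v)$; for otherwise $\alpha(pq)=\alpha(p)\alpha(q)\in\stab(v)$ and $t=(pq)s'$ would be a strictly smaller valid decomposition.

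Next I case-split on the shape of $s$. If $s=as'$ is a tree, then either $\alpha(s')\notin\botcomp$, giving case~(1), or $\alpha(s')\in\botcomp$; in the latter case the decomposition $s=(a\hole)s'$ meets the hypothesis of the minimality principle, so $\alpha(a\hole)\notin\stab(v)$ and case~(2) holds.

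If instead $s=s_1'+\cdots+s_k'$ is a forest with $k\geq 2$ trees, I distinguish two subcases according to whether some binary split $s=S_1+S_2$ (with $S_1$ a prefix and $S_2$ the complementary suffix of the list of children) satisfies $\alpha(S_1)\notin\botcomp$ \emph{and} $\alpha(S_2)\notin\botcomp$. If such a split exists, it directly realises case~(3). Otherwise pick any split, say $S_1=s_1'$ and $S_2=s_2'+\cdots+s_k'$. If $\alpha(S_1)\in\botcomp$, applying the minimality principle to the decomposition $s=(\hole+S_2)S_1$ yields $\alpha(\hole+S_2)\notin\stab(v)$; symmetrically, if $\alpha(S_2)\in\botcomp$ then from $s=(S_1+\hole)S_2$ we obtain $\alpha(S_1+\hole)\notin\stab(v)$. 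Both conditional clauses of case~(4) are thus verified for this split.

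The main obstacle is noticing that in the second subcase one need not search for a specific ``correct'' split: the assumption of the subcase together with the minimality of $s$ automatically enforces each of the two conditionals of case~(4) for \emph{any} chosen split. Once the minimality of $s$ and the submonoid structure of $\stab(v)$ are in hand, the case analysis is mechanical.
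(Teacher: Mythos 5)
Your proof is correct and follows essentially the same route as the paper's: the paper selects a pair $(p,s)$ with $p$ maximal (no nonempty context $q$ admits a pair with $pq$ as first coordinate), which is exactly your minimality of $s$ phrased dually, and both arguments use the submonoid property of $\stab(v)$ to pass from ``$pq$ does not stabilize $v$'' to ``$q$ does not stabilize $v$''. The subsequent case analysis on whether $s$ is a tree or a forest of at least two trees is identical.
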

\begin{proof}
Consider the set $D$ of all possible pairs $(p,s)$ such that $t=ps$, the type of $p$ preserves $v$ and the type of $s$ is in $\botcomp$.  Take a pair $(p,s) \in D$ that is maximal in the following sense: if $q$ is a nonempty context, then $D$ has no pair with $pq$ on the first coordinate.

Suppose $s$ is a tree of the form $s=as'$.  If $s'$ has a type outside
$\botcomp$, we get item~(\ref{item:decomp2bis}). If $s'$ has type in
$\botcomp$, then by maximality, the context type of $pa\hole$ does not stabilize $v$,
and therefore $a \hole$ has a type that does not stabilize $v$, so we get
item~(\ref{item:decomp2}).

Suppose   $s$ is a forest of at least two trees.
Consider any partition of $s$ into two nonempty forests $s=s_1+s_2$.
If both $s_1,s_2$  have type outside $\botcomp$, then we get item~(\ref{item:decomp3bis}). Otherwise, we get case~(\ref{item:decomp3}) by maximality of $(p,s)$.
\end{proof}

From Lemma~\ref{lemma:decomp}, we have for $h \in \botcomp$:
\[
 L_h = \bigcup_{\substack{u \in \stab(v)} } \bigcup_{\substack{f
       \in \botcomp  \\ uf=h}} K_u \cdot Y_f
\]
where $Y_f$ stands for the set of all forests $s$ that have type $f \in \botcomp$ and that
satisfy one of the conditions~(\ref{item:decomp2bis})-(\ref{item:decomp3}) of
Lemma~\ref{lemma:decomp}.
To get the $v$-overapproximation of $L_h$ we will use
$v$-overapproximations for the smaller expressions above, as stated by the
following two lemmas. The first lemma is concerned with languages of the form $Y_f$.

\begin{lem}\label{lemma:y-approx}
  For every $f \in \botcomp$, some $v$-overapproximation $\overap Y_f$ of $Y_f$ can be
  defined by a $\Sigma_2$ forest expression.
\end{lem}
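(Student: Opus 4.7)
The plan is to decompose $Y_f$ along the four cases of Lemma~\ref{lemma:decomp} and build a $\Sigma_2$ forest expression for a $v$-overapproximation of each piece, then take their union. Since $\Sigma_2$ forest expressions are closed under union, and a union of $v$-overapproximations of the pieces is itself a $v$-overapproximation of $Y_f$, it suffices to handle each of the four cases separately.

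Cases~(1) and~(3) will be immediate. In these cases all sub-forests of $s$ have types outside $\botcomp$, so the relevant languages $L_g$ are already $\Sigma_2$ forest expressions by Proposition~\ref{prop:bottom-up} applied to $g \notin \botcomp$, as carried out in Section~\ref{sec:induction-step}. It suffices to set
\[
\hat Y_f^{(1)} = \bigcup_{\substack{a \in A,\, g \notin \botcomp \\ \alpha(a)\, g \,=\, f}} \{a\hole\} \cdot L_g, \qquad
\hat Y_f^{(3)} = \bigcup_{\substack{g_1,g_2 \notin \botcomp \\ g_1 + g_2 \,=\, f}} (L_{g_1} + L_{g_2}),
\]
each of which equals the corresponding part of $Y_f$ exactly, so no overapproximation is actually needed there.

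For cases~(2) and~(4) we invoke the induction hypothesis of Proposition~\ref{prop:contexts}. In case~(2), $s = as'$ with $a \notin \stab(v)$, so $va$ is a proper extension of $v$ in the prefix pre-order, and by induction every $\equiv_{va}$-equivalence class of forests is a $\Sigma_2$ forest expression. In case~(4), the stated conditions force one of the context types $v(\hole + \alpha(s_2))$ or $v(\alpha(s_1)+\hole)$ to be a proper extension of $v$, giving access to the analogous inductive $\Sigma_2$ descriptions. We then build $\hat Y_f^{(2)}$ and $\hat Y_f^{(4)}$ from these equivalence class descriptions of the sub-forests whose types may lie in $\botcomp$, combined with the singleton context $\{a\hole\}$ (or the operations $\hole + t$, $t + \hole$) and with the already $\Sigma_2$-expressible $L_g$ for $g \notin \botcomp$.

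The hard part is verifying that this construction really is a $v$-overapproximation, i.e., that every forest $t$ added to $\hat Y_f$ has type $\equiv_v$-equivalent to $f$. The subtlety is that $\equiv_{va}$-equivalence on a sub-forest $s'$ does not directly yield $\equiv_v$-equivalence on the full forest $as'$: the distinguishing contexts $\{vau : u \in V\}$ used by $\equiv_{va}$ and the contexts $\{vua : u \in V\}$ needed for $\equiv_v$ on $as'$ are in general different subsets of the extensions of $v$. Bridging this gap is where the $\Delta_2$ identity~(\ref{eq:swallow}) does essential work, together with the structural assumptions $a \notin \stab(v)$ and~(\ref{eq:assumption-left}), and with the fact that $\stab(v)$ is a submonoid of $V$ closed under pieces. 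The key move will be to pass through idempotents like $(va)^\omega$, using $v \piece va$ and $a \piece va$ to invoke~(\ref{eq:swallow}), and thereby produce, for each $t$ added to $\hat Y_f$, a genuine witness in $Y_f$ that is $\equiv_v$-equivalent to $t$. Making this idempotent argument precise in each of the four sub-cases will be the core of the proof.
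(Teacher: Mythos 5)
Your skeleton matches the paper's: split $Y_f$ along the four cases of Lemma~\ref{lemma:decomp}, dispose of cases~(\ref{item:decomp2bis}) and~(\ref{item:decomp3bis}) using the $\Sigma_2$ expressions for $L_g$ with $g \notin \botcomp$, and attack cases~(\ref{item:decomp2}) and~(\ref{item:decomp3}) via the induction hypothesis of Proposition~\ref{prop:contexts}. You also correctly identify the central obstacle: knowing $g \equiv_{va} h$ gives you $vau'g = vau'h$, whereas proving $at \in [aL_h]_v$ requires $vu\alpha(a)g = vu\alpha(a)h$ for all $u$, and these families of contexts do not coincide. But your proposed resolution --- ``pass through idempotents like $(va)^\omega$ and invoke the $\Delta_2$ identity'' --- is left entirely unexecuted, and it is not the mechanism that closes the gap. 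The paper's fix is different and simpler: it does not overapproximate by a single class of $\equiv_{va}$, but by $[L_h]_{v+}$, defined as the intersection $\bigcap_{w \notin \stab(v)} [L_h]_{vw}$ over \emph{all} non-stabilizing extensions. Then for an arbitrary $u \in V$ one observes that $ua \notin \stab(v)$ --- because $a$ is a piece of $ua$ and $\stab(v)$ is closed under pieces (this is where the $\Delta_2$ identity enters, already packaged in Lemma~\ref{lemma:closed-under-piece}) --- so the context $vua$ is among those controlled by $\equiv_{v+}$, and $vuag = vuah$ follows immediately. No fresh idempotent computation is needed, and I do not see how your $(va)^\omega$ manipulation would produce $vuag = vuah$ for arbitrary $u$.

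A second gap is in case~(\ref{item:decomp3}). The paper needs three subcases ($h_1 \notin \botcomp$, $h_2 \notin \botcomp$, both in $\botcomp$), and in the last one the overapproximation must be $([L_{h_1}]_{v+} \cap L_\bot) + ([L_{h_2}]_{v+} \cap L_\bot)$: the intersection with $L_\bot$ is essential, because a forest in $[L_{h_1}]_{v+}$ may have a type $g_1$ outside $\botcomp$, in which case $g_1 + \hole$ could stabilize $v$ and the chain $vu(g_1+g_2) = vu(g_1+h_2) = vu(h_1+h_2)$ breaks down. (This in turn requires showing separately that $L_\bot$ is a $\Sigma_2$ forest expression.) Your proposal does not mention this intersection or the subcase analysis, so as written the construction for case~(\ref{item:decomp3}) would admit forests that are not $\equiv_v$-equivalent to anything in $Y_f$.
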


For the second lemma, concerning $K_u$, we need a more careful
statement. The overapproximation that we give is not really an
overapproximation of $K_u$, but it is an overapproximation that works
as long as a forest of type in $\botcomp$ is inserted into the hole.
\begin{lem}\label{lemma:k-approx}
  For any $u \in stab(v)$, one can define a $\Sigma_2$ context
  expression $\overap K_u$ such that for any forest $s$ of type in
  $\botcomp$, $\overap K_us$ is a $v$-overapproximation of $K_us$.
\end{lem}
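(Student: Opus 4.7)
The approach follows Section~\ref{sec:proof-prop-refpr} in spirit: apply Proposition~\ref{prop:word-da} to decompose $K_u$ as a word language over the alphabet $B$ of one-step contexts, and then convert each letter-set factor into a $\Sigma_2$ context expression. The key preliminary observation is that since $u \in \stab(v)$ and the stabilizer is piece-closed (Lemma~\ref{lemma:closed-under-piece}), every letter appearing in any $B$-decomposition of a context in $K_u$ has its type in $W := \stab(v)$. I therefore work with the restricted alphabet $B_W = B \cap \beta^{-1}(W)$ and restricted morphism $\beta_W : B_W^* \to W$, so that $K_u = \beta_W^{-1}(u)$. The submonoid $W$ inherits the $\Delta_2$ identity, is piece-stratified, and satisfies the DA identity~(\ref{eq:da}) because $m \piece mn$ always holds in $V$. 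Proposition~\ref{prop:word-da} then yields a finite-union decomposition of $K_u$ as $\Sigma_2$-word expressions over $B_W$.

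It remains to express each letter-set factor $B_W \cap \beta_W^{-1}(X)$ as a $\Sigma_2$ context expression. Singletons $\{a\hole\}$ are trivial. For side-tree letters $\hole + t$ with $\hole + \alpha(t) \in W$, assumption~(\ref{eq:assumption-left}) forces $\alpha(t) \notin \botcomp$, so $L_{\alpha(t)}$ is a $\Sigma_2$ forest expression by Section~\ref{sec:induction-step} and hence $\hole + M_{\alpha(t)}$ is a $\Sigma_2$ context expression. The same applies to $t + \hole$ letters whose tree type does not lie in $\botcomp$.

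The main obstacle is the remaining case of $t + \hole$ letters with $g := \alpha(t) \in \botcomp$, which can arise when $\botcomp + \hole$ intersects $\stab(v)$; for these, $L_g$ is not yet known to be a $\Sigma_2$ forest expression. I would overapproximate $M_g$ by a $\Sigma_2$ forest expression $\overap{M}_g$ built from the decomposition $L_g = \bigcup K_{u'} \cdot Y_{f'}$ of Lemma~\ref{lemma:decomp} (with $u' \in W$ and $f' \in \botcomp$), replacing $Y_{f'}$ with the expression $\overap{Y}_{f'}$ from Lemma~\ref{lemma:y-approx} and $K_{u'}$ with $\overap{K}_{u'}$ constructed in parallel. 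Soundness of the resulting $\overap{K}_u s$ as a $v$-overapproximation of $K_u s$ (for $s$ of type in $\botcomp$) follows because the surrounding context of any replaced side tree in the final forest has type in $W \cdot (\hole + w)$ for some $w \in \botcomp$, and left-multiplication of forest types by such elements preserves the $\equiv_v$ equivalence. The potential circularity in letting $\overap{K}_u$ depend on $\overap{K}_{u'}$ for other $u' \in W$ is resolved by organizing the construction as a simultaneous induction over $W$ with a well-founded measure, arranged so that each recursive invocation accesses a strictly smaller $u'$ in the piece-order on $W$; the base case consists of elements of $W$ whose associated contexts contain no $\botcomp$-typed side-tree letters of the problematic form, where the $\Sigma_2$-word decomposition already yields $\overap{K}_u$ directly.
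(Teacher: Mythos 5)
Your setup is on the right track: you correctly reduce to the word decomposition of $K_u$, correctly observe that piece-closure of $\stab(v)$ together with assumption~(\ref{eq:assumption-left}) rules out letters $\hole + t$ with $\alpha(t) \in \botcomp$, and correctly identify that the whole difficulty sits in the letters $t + \hole$ with $\alpha(t) \in \botcomp$. Your soundness heuristic for those letters --- that after plugging in $s$ the side tree ends up inside a context of the form $p(\hole+w)$ with $w \in \botcomp$, which by~(\ref{eq:assumption-left}) and piece-closure cannot stabilize $v$ --- is also the right mechanism. But the construction you propose for overapproximating $M_g$, $g \in \botcomp$, is circular, and the circularity is not resolved by the measure you invoke. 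You unfold $L_g$ via Lemma~\ref{lemma:decomp} into $\bigcup K_{u'}\cdot Y_{f'}$ and call $\overap K_{u'}$ recursively; however, the $u'$ in that decomposition ranges over \emph{all} of $\stab(v)$ and bears no piece-order relation to the $u$ you started from, so there is no strictly decreasing quantity. Likewise your proposed base case (``elements of $W$ whose contexts contain no problematic side-tree letters'') need not be reached: $K_{u'}$ is the set of \emph{all} contexts of type $u'$, and generically it contains contexts with $\botcomp$-typed left side trees whenever $\botcomp+\hole$ meets $\stab(v)$. As written, the recursion does not terminate and the lemma is not proved.

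The paper closes exactly this gap with a different induction, namely the one already running in Proposition~\ref{prop:contexts} over the prefix pre-order on context types. It introduces $\equiv_{v+}$, the intersection of the relations $\equiv_{vu}$ over all $u \notin \stab(v)$ (so each $vu$ is a \emph{strict} extension of $v$), and sets $\overap K = [L_g]_{v+} + \hole$. Since each class of $\equiv_{vu}$ for such $u$ is definable by a $\Sigma_2$ forest expression by the induction hypothesis of Proposition~\ref{prop:contexts}, so is $[L_g]_{v+}$, with no recursion through other elements of $\stab(v)$ at all. Soundness is then exactly the computation you gestured at: for $s$ of type in $\botcomp$, the context $\hole + \alpha(s)$ does not stabilize $v$ by~(\ref{eq:assumption-left}), hence $v\cdot(\hole+\alpha(s))\cdot\alpha([L_g]_{v+}) = v\cdot(\hole+\alpha(s))\cdot\alpha(L_g)$ by the defining property of $\equiv_{v+}$. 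The missing idea in your proposal is precisely this relation $\equiv_{v+}$ and the appeal to the outer induction on $v$; without it, your construction does not go through.
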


These two lemmas are proved in Sections~\ref{sec:proof-lemma-y}
and~\ref{sec:proof-lemma-k}, respectively. First we show how they
complete the proof of Proposition~\ref{prop:overapprox}. We write $L_\bot$ for the
set of all forests with a type in $\botcomp$. We claim that the  following
language
\[
 \overap L_h =   \bigcup_{\substack{u \in \stab(v)} } \bigcup_{\substack{f
     \in \botcomp  \\ uf=h}} \overap K_u \cdot (\overap Y_f \cap L_\bot)
\]
is a $v$-overapproximation of $L_h$. 

The first property required from a $v$-overapproximation, $L_h \subseteq
\overap L_h$, is immediate. For the second part, $\overap L_h \subseteq
[L_h]_v$, we need a bit more effort. We show a stronger result, namely that for
any $u$ and $f$ as in the summation above, we have
\[
\overap K_u (\overap Y_f  \cap
 L_\bot) \quad   \subseteq \quad   [K_u Y_f]_v
\]
This completes the proof of $\overap L_h \subseteq [L_h]_v$, since
$[\_]_v$ distributes across union.  To prove the above, we apply the
properties of  $\overap K_u$ and $\overap Y_f$ to get
\[
 \overap K_u (\overap Y_f \cap L_\bot)
\quad  \subseteq \quad   [K_u  \cdot  (\overap Y_f   \cap L_\bot)]_v
\quad  \subseteq \quad   [K_u  \cdot \overap Y_f   ]_v
\quad  \subseteq \quad   [K_u  \cdot [Y_f]_v   ]_v
\]
To complete the proof, we would like to replace $[Y_f]_v$ by $Y_f$ in
the last expression above. This can be done thanks to the following
easily verifiable consequence of the fact that $\equiv_v$ is a congruence for
forest algebras.
\begin{fact}
 For any set of contexts $K$ and set of languages $L$, we have
 $[K[L]_v]_v=[KL]_v$.
\end{fact}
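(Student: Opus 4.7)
The plan is to prove the two inclusions $[KL]_v \subseteq [K[L]_v]_v$ and $[K[L]_v]_v \subseteq [KL]_v$ separately.

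The first inclusion is immediate. By definition $L \subseteq [L]_v$, so $KL \subseteq K[L]_v$, and the operator $[\,\cdot\,]_v$ is clearly monotone, yielding $[KL]_v \subseteq [K[L]_v]_v$.

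The second inclusion rests on the observation that $\equiv_v$ is a congruence for the forest-algebra operations in the following sense: if $p \equiv_v p'$ as contexts and $s \equiv_v s'$ as forests, then $ps \equiv_v p's'$ as forests. This unfolds directly from definitions~(\ref{eq:equiv-forest}) and~(\ref{eq:equiv-context}). Indeed, for an arbitrary $u \in V$, the definition of $p \equiv_v p'$ applied to the pair consisting of $u$ and the type of $s$ gives $vups = vup's$, while the definition of $s \equiv_v s'$ applied with $u$ replaced by the composite $up'$ gives $vup's = vup's'$; chaining these two equalities yields $vu(ps)=vu(p's')$, which is what is required.

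Granted the congruence, the second inclusion goes as follows. Pick $t \in [K[L]_v]_v$, so that $t \equiv_v s$ for some $s \in K[L]_v$. Decompose $s = pr$ with $p \in K$ and $r \in [L]_v$, and choose $r' \in L$ with $r \equiv_v r'$. The congruence (with $p'=p$) gives $pr \equiv_v pr'$, and transitivity of $\equiv_v$ yields $t \equiv_v pr'$ with $pr' \in KL$, so $t \in [KL]_v$. The only step meriting any attention is the verification of the congruence property above, but this is a routine reading of the definitions; in particular, no appeal to the $\Delta_2$ identity is needed here.
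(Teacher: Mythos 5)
Your proof is correct and is exactly the argument the paper has in mind: the paper states the fact as an "easily verifiable consequence of the fact that $\equiv_v$ is a congruence," and your verification of that congruence from definitions~(\ref{eq:equiv-forest}) and~(\ref{eq:equiv-context}) (substituting $u\alpha(p')$ for $u$), together with the monotonicity direction, is precisely the routine check being left to the reader. Nothing is missing, and you are right that the $\Delta_2$ identity plays no role here.
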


Thanks to Lemmas~\ref{lemma:k-approx} and~\ref{lemma:y-approx}, the
only thing keeping $\overap L_h $ from being defined by a $\Sigma_2$
forest expression is  the language $L_\bot$. We deal with this
language in the following lemma.

\begin{lem}
 The language  $L_\bot$   is definable by a
 $\Sigma_2$ forest expression.
\end{lem}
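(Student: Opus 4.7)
The plan is to show that $L_\bot$ is upward-closed under the piece relation on forests, and then to express it as a finite union of ``forests containing a fixed small piece''. First, I would observe that $\botcomp$ is upward-closed under pieces in $H$: the remark following its definition identifies $h \in \botcomp$ with the property that every $g \in H$ is a piece of $h$, and this is preserved under enlargement of $h$ by transitivity of $\piece$. Hence $L_\bot$ is upward-closed under forest pieces, and the pumping argument from the proof of Lemma~\ref{lem:pi1-lemma} shows that every forest $t \in L_\bot$ admits a piece of size at most $|H|^{|H|}$ that is still in $L_\bot$. Let $T_\bot$ denote the finite set of these small witnesses; then
\[
  L_\bot \ = \ \bigcup_{t_0 \in T_\bot} P(t_0), \qquad \mbox{where } P(t_0) = \{t : t_0 \piece t\}.
\]

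Next, I would show by induction on the number of nodes of $t_0$ that each $P(t_0)$ is a $\Sigma_2$ forest expression. Write $K$ for the $\Sigma_2$ context expression consisting of all contexts and $L_{\mathrm{all}}$ for the $\Sigma_2$ forest expression consisting of all forests (each trivially closed under pieces, hence a valid base expression). The three cases are: $P(0) = L_{\mathrm{all}}$; for a single tree $t_0 = as'$, $P(as') = K \cdot \{a\hole\} \cdot P(s')$; and for $t_0 = \tau_1 + \cdots + \tau_k$ with $k \geq 2$,
\[
  P(t_0) \ = \ \bigcup_{i=1}^{k-1} K \cdot \bigl( P(\tau_1 + \cdots + \tau_i) + P(\tau_{i+1} + \cdots + \tau_k) \bigr).
\]
Closure of $\Sigma_2$ forest expressions under finite union, forest sum, and context multiplication then yields a $\Sigma_2$ forest expression for $L_\bot$.

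The main obstacle is justifying the union-over-splits formula in the third case: given a witness of $t_0 \piece t$, one must locate a node $w$ of $t$ (or a virtual root above $t$, if the embedded roots of $\tau_1, \ldots, \tau_k$ span several top-level trees of $t$) whose children distribute these roots into nonempty ranges, and then read off a split of $t_0$ aligned to a boundary of $w$'s children. Lex-order preservation of the piece relation is the essential tool: it forces the embedded roots to occupy consecutive lex-ranges under any grouping by children of $w$, so that the boundary split makes the factorization $t = p \cdot (u + v)$ work with $\tau_1 + \cdots + \tau_i \piece u$ and $\tau_{i+1} + \cdots + \tau_k \piece v$.
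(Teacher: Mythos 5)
Your construction of a $\Sigma_2$ forest expression for the upward closure $P(t_0)=\{t: t_0\piece t\}$ of a single small forest is correct (the split-at-a-child-boundary argument for the multi-tree case, using preservation of $\lex$ and of incomparability by piece embeddings, does go through, and all the operations you use are in the expression grammar). The problem is the foundation: you need $L_\bot$ to be \emph{upward closed under the piece relation}, i.e.\ that $h\in\botcomp$ and $h\piece h'$ imply $h'\in\botcomp$, and your justification for this misreads the paper. The remark after the definition of $\botcomp$ only states that if $g$ is reachable from $h$ then $h$ is a piece of $g$; it does \emph{not} identify membership in $\botcomp$ with ``every $g\in H$ is a piece of $h$''. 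From that remark you get the forward direction (every $g$ is a piece of every $h\in\botcomp$, hence of every $h'$ above it), but to conclude $h'\in\botcomp$ you would need the converse, namely that having every element of $H$ as a piece forces reachability from every element of $H$. That is exactly the content of your claim, it is not definitional, and if it holds at all it needs a genuine argument from the $\Delta_2$ identity (reachability and the piece order point in different directions: from $h\in\botcomp$ and $h\piece h'$ one only gets, via the identity, facts like $\omega h' + h + \omega h' = \omega h'$, which puts $\omega h'$ in $\botcomp$ but does not obviously put $h'$ there). Without this step the inclusion $\bigcup_{t_0\in T_\bot}P(t_0)\subseteq L_\bot$ is unproved and the whole decomposition collapses.

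The paper avoids this issue by working with \emph{subforests} rather than pieces: if $t=ps$ and $\alpha(s)\in\botcomp$, then $\alpha(t)=\alpha(p)\alpha(s)$ is reachable from $\alpha(s)$ and hence from everything, so $L_\bot$ is trivially upward closed under the subforest relation. It then characterizes $L_\bot$ as the forests having a subforest in a finite union of languages $aL_g$ and $L_{g_1}+L_{g_2}$ with $g,g_1,g_2\notin\botcomp$, and prefixes this union with the set of all contexts; crucially this re-uses the already-established $\Sigma_2$ expressions for $L_g$ with $g\notin\botcomp$. Your proof uses no induction hypothesis at all, which is itself a warning sign. If you want to salvage your approach, you must either prove the upward-closure claim from the $\Delta_2$ identity or switch to the subforest decomposition; in the latter case the small-witness sets are no longer singletons you can take upward closures of, and you are led back to the paper's argument.
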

\begin{proof}
  A subforest of $t$ is a forest $s$ such that $t=ps$ for some context
  $p$.  Take a forest $t \in L_\bot$ and consider a subforest
  $s$ of $t$ that is in $L_\bot$, but has no proper subforests in
  $L_\bot$.  Then either $s$ is a tree $as'$ with $s' \not \in
  L_\bot$, or $s=s_1+s_2$ with $s_1,s_2 \not \in L_\bot$. Therefore, a
  forest is in $L_\bot$  if and only if it has a subforest in
  \[
\bigcup_{\substack{a \in A, g \not \in \botcomp\\ ag  \in \botcomp} }
     a L_g \qquad \cup \qquad \bigcup_{\substack{g_1,g_2 \not \in \botcomp\\ g_1 +g_2  \in \botcomp} }
      L_{g_1} + L_{g_2} \ .
  \]
  Containing such a subforest can be expressed by a $\Sigma_2$ forest
  expression, by prefixing the set above with the set of all contexts.
  The expressions for $L_g, L_{g_1}$ and $L_{g_2}$ are $\Sigma_2$
  forest expressions by the results from the section on types outside $\botcomp$.
\end{proof}

\subsubsection{Proof of Lemma~\ref{lemma:k-approx}}
\label{sec:proof-lemma-k}
Our goal in this section is to prove Lemma~\ref{lemma:k-approx}, which
says that for any context type $u$ stabilizing $v$, there is a
$\Sigma_2$ context expression $\overap K_u$ such that for any forest
$s$ with a type in $\botcomp$, $\overap K_us$ is a
$v$-overapproximation of $K_us$.

We apply Proposition~\ref{prop:context-expression} to get an
expression for the context language $K_u$ of the form
\begin{equation}
 \label{eq:ku-decomp}
 K_u = \bigcup_i K_{i,1} \cdots K_{i,n_i}
\end{equation} The problem with the expression above is that it may use,
in some of the subexpressions $K_{i,j}$, languages
$M_f + \hole$ or $\hole + M_f$  that involve  forest types $f \in \botcomp$, and we do not know
how to describe types in $\botcomp$. This is where the
overapproximation comes in. We show that if the languages for types in
$\botcomp$ are overapproximated, then the result satisfies the
properties required by Lemma~\ref{lemma:k-approx}. A more detailed
argument is described below.

We say a context language $K$ satisfies (*) if it has the property
required from $K_u$ by Lemma~\ref{lemma:k-approx}, namely
\begin{quote}
	(*) there
	is a $\Sigma_2$ context expression $\overap K$ such that for any
	forest $s$ with a type in $\botcomp$, the language $\overap K s$ is a
	$v$-overapproximation of $Ks$.	
\end{quote}
  It is not difficult to see that
property (*) is preserved by unions and compositions of context
languages. In particular, in order to prove
Lemma~\ref{lemma:k-approx}, it suffices to show (*) is satisfied by
all languages $K_{i,j}$ that appear in~(\ref{eq:ku-decomp}).

The only problem with the overapproximation is when $K_{i,j}$ is of
the form $M_f + \hole$ or $\hole + M_f$, for $f \in \botcomp$. In all
the other cases, $K_{i,j}$ is known to be definable by a $\Sigma_2$
context expression, and no overapproximation is needed. Note that
by~(\ref{eq:assumption-left}), the expressions $\hole+L_f$ cannot be
used, since a forest type from $\botcomp$ cannot appear to the right
of the hole in a context type that stabilizes $v$. Therefore, to complete
the proof of Lemma~\ref{lemma:k-approx}, it remains to show that for
any $f \in \botcomp$, the context language $M_f + \hole$ satisfies
(*).

In the following, we use an equivalence relation $\equiv_{v+}$. This is defined
to be the intersection of all equivalence relations $\equiv_{vu}$, for context
types $u$ that do not stabilize $v$ (and hence $v$ is a strict prefix of $vu$).
For a forest language $L$, we write
\begin{equation}
  \label{eq:v+}
    [L]_{v+} = \bigcap_{u \not \in stab(v)} [L]_{vu}
\end{equation}
By the induction assumption in Proposition~\ref{prop:contexts}, each
equivalence class of $\equiv_{v+}$ is definable by a $\Sigma_2$ forest
expression and therefore so is each language $[L]_{v+}$, as a union of
equivalence classes of $\equiv_{v+}$. We will show that, for any $f
\in \botcomp$, the context language $K=M_f + \hole$ satisfies (*) with
$\overap K = [L_f]_{v+}+\hole$.  Assume that the type of $s$ is
 in $\botcomp$. Then we have:
\[
  Ks  = M_f +s \quad \subseteq \quad
  \overap K s = 
  [L_f]_{v+} + s
  \quad \subseteq \quad [Ks]_v = [L_f+s]_v \ .
\]
The first inequality is clear. For the second inequality, we need to
show that 
\[
  v \cdot \alpha([L_f]_{v+} +s) \qquad \subseteq \qquad v \cdot
  \alpha(L_f + s) 
\]
This inclusion holds because we have:
\[
  v \cdot \alpha([L_f]_{v+} +s) = v \cdot (\hole +
  \alpha(s)) \cdot \alpha([L_f]_{v+})  =    v \cdot (\hole +
  \alpha(s)) \cdot \alpha(L_f)  = v \cdot   \alpha(L_f + s) 
\]
In the second equality, we used the definition of $[L_f]_{v+}$ and the
assumption that $\hole + \alpha(s)$ does not stabilize $v$. The latter
follows from assumption~(\ref{eq:assumption-left}) since the type of
$s$ is in $\botcomp$.

\subsubsection{Proof of Lemma~\ref{lemma:y-approx}}
\label{sec:proof-lemma-y}
In this section, we show that for every type $f \in \botcomp$,
a $v$-overapproximation of $Y_f$ can be defined by a $\Sigma_2$ forest
expression. Recall that the language $Y_f$ was defined based on a case
distinction in Lemma~\ref{lemma:decomp}, and therefore it can be decomposed into a union of four languages, one for each of the four cases in the lemma.  As
$v$-overapproximations are closed under union, for each of these
languages, we provide a $v$-overapproximation defined by a $\Sigma_2$
forest expression.

For the languages corresponding to cases (\ref{item:decomp2bis}) and (\ref{item:decomp3bis}), we use the assumption that $L_h$ can be defined by a $\Sigma_2$ expression for every type $h \not \in \botcomp$. The interesting cases are (\ref{item:decomp2}) and~(\ref{item:decomp3}).

The language corresponding to  case (\ref{item:decomp2}) is a union of forest languages of the form
\begin{equation}
 \label{eq:case2}
 a  \cdot  L_h  
\end{equation}
ranging over letters $a$ such that  $a\hole$ does not stabilize
   $v$,  and forest types $h$ with $ah=f$. We treat each language $a \cdot L_h$ separately.

It may be the case that $h$ belongs to $\botcomp$ and therefore we
have no $\Sigma_2$ forest expression for $L_h$. However, we can
use overapproximation.  As $a$ does not stabilize $v$, we can apply
the induction assumption in Proposition~\ref{prop:contexts} and obtain
a $\Sigma_2$ forest expression for $[L_h]_{v+}$.  But then the
$\Sigma_2$ forest expression  $ a \cdot [L_h]_{v+} $ is
a $v$-overapproximation of $a \cdot L_h$. It clearly contain $a \cdot L_h$ so
it remains to show that it is included in $[aL_h]_v$. To see this consider a forest $t \in
[L_h]_{v+}$ of type $g$ and an arbitrary $u \in V$. As $a$ does not stabilize
$v$, $ua$ does not stabilize $v$. Hence by the choice of $g$ we have
$vuag=vuah$ and $at$ is in $[aL_h]_v$.


It remains to consider the case of  (\ref{item:decomp3}),
where have a union of sets
\begin{equation}
 \label{eq:case3a}
 L_{h_1} + L_{h_2} 
\end{equation}
ranging over $h_1,h_2$ that satisfy the two implications in item (\ref{item:decomp3}) of  Lemma~\ref{lemma:decomp}. We do each pair $h_1,h_2$ separately. 
We consider three subcases.

The first subcase is when $h_1 \not\in \botcomp$.
Therefore we have a $\Sigma_2$ forest expression for $L_{h_1}$. In
this case we claim that
\[
  L_{h_1} + [L_{h_2}]_{v+}
\]
is a $v$-overapproximation of of $L_{h_1} + L_{h_2}$.  The first
requirement of  $v$-overapproximation, 
\[
L_{h_1} + L_{h_2} \quad \subseteq \quad  L_{h_1} + [L_{h_2}]_{v+}\ ,
\]
is immediate. For the other requirement, we need to show that for any
forests $t_1 \in L_{h_1}$ and $t_2 \in [L_{h_2}]_{v+}$, the type of
$t_1 + t_2$ is $\equiv_{v}$-equivalent to $h_1 + h_2$.  From $t_1 \in
L_{h_1}$, we know that the type of $t_1$ is $h_1$, but all we know
about $t_2$ is that its type $g_2$ satisfies $g_2 \equiv_{v+}
h_2$. Consider an arbitrary $u\in V$. Since $h_1 + \hole$ does not stabilize
$v$, we also have $u(h_1 + \hole)$ does not stabilize $v$. Hence from $g_2 \equiv_{v+}
h_2$ we get $vu(h_1 + g_2)=vu(h_1+h_2)$.

The second subcase, when $h_2 \not\in \botcomp$, is treated as above by symmetry.

The third subcase is when both $h_1$ and $h_2$ are in $\botcomp$.
As a consequence of $h_2 \in \botcomp$ and the second condition of item~(\ref{item:decomp3}) in Lemma~\ref{lemma:decomp} is that
\begin{equation}
  \label{eq:case-with-intersection}
   \botcomp + \hole \quad \cap \quad  \stab(v) \quad  =  \quad \emptyset\ .
\end{equation}
We claim that a $v$-overapproximation of $L_{h_1}
+ L_{h_2}$ is 
\begin{equation}
  \label{eq:xy-overapprox}
    ([L_{h_1}]_{v+} \cap L_\bot) \quad   + \quad     ([L_{h_2}]_{v+}
    \cap L_\bot)\ .
\end{equation}
As before, the problem boils down to showing that for any forests 
\[
    t_1 \in [L_{h_1}]_{v+} \cap L_\bot  \qquad      t_2 \in [L_{h_2}]_{v+}
    \cap L_\bot\ ,
\]
the types $g_1$ of $t_1$ and $g_2$ of $t_2$ satisfy $g_1 + g_2
\equiv_v h_1 + h_2$. In other words, we need to show that for an arbitrary $u
\in V$ 
\[
  vu(g_1 + g_2) = vu(h_1+h_2)\ .
\]
Since $t_1 \in L_\bot$, then by~(\ref{eq:case-with-intersection}) the context
type $g_1+ \hole$ does not stabilize $v$ and therefore the same holds for
$u(g_1+ \hole)$. Hence, we can use the assumption on $g_2 \equiv_{v+} h_2$
to infer
\[
  vu(g_1 + g_2) = vu(g_1+\hole)g_2 = vu(g_1+\hole)h_2 = vu(g_1 + h_2) \ .
\]
In a similar way, we use $h_2 \in \botcomp$, the
assumption~(\ref{eq:assumption-left}), and $g_1 \equiv_{v+} h_1$, to
complete the proof of this case, and of  Lemma~\ref{lemma:y-approx}.
\[
  vu(g_1 + h_2) = vu(\hole + h_2)g_1 =vu(\hole + h_2)h_1 = vu(h_1+h_2)\ .
\]

\section{No lexicographic order }\label{sec:nolex}
In this section, we consider the logic \Dtwo where only the descendant
order, and not the lexicographic order, is available. We give an
effective characterization in the following theorem.
\begin{thm}[Effective characterization of $\Delta_2$ with  the descendant
order only]\label{thm:main-commutative}\ \\
 A forest language is definable in \Dtwo if and only if its syntactic
  forest algebra satisfies the $\Delta_2$ identity, as well as
  horizontal commutativity:
\begin{equation}
  \label{eq:horziontal-commutativity}
    h+g = g+h \ .
\end{equation}
\end{thm}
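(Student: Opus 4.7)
The proof has two directions. For the easy direction, assume $L$ is definable in \Dtwo. Since \Dtwo is a sub-logic of \Dtwol, Theorem~\ref{thm:main} gives the $\Delta_2$ identity. For horizontal commutativity, observe that two forests differing only by a permutation of siblings are isomorphic as logical structures in the signature of \Dtwo, which contains only $<$ and the label predicates; hence they satisfy the same first-order sentences, so $L$ is closed under sibling permutations, and its syntactic algebra satisfies $g + h = h + g$.

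For the hard direction, assume the syntactic algebra satisfies both identities. We adapt the proof of Theorem~\ref{thm:main}, showing every $\Sigma_2$ forest expression built in the course of proving Proposition~\ref{prop:bottom-up} (together with its supporting Propositions~\ref{prop:context-expression} and~\ref{prop:contexts}) can be replaced by a formula of \Stwo, i.e.\ one using only $<$.  The induction through Sections~\ref{sec:char-inter}--\ref{sec:top-down} transfers once we establish a commutative analog of Lemma~\ref{fact:simple-lex}: the operations $KL$, $L+L'$, $KK'$ and intersections, when applied to commutative context and forest languages definable in \Stwo (with the piece-closed building blocks using $<$ only, as in the commutative case of Lemma~\ref{lem:pi1-lemma}), produce languages definable in \Stwo.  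A small but useful simplification along the way is that horizontal commutativity forces $g + \hole = \hole + g$ as context types, collapsing the two cases in Proposition~\ref{prop:context-expression} and in Lemma~\ref{lemma:at-least-one-disjoint}.

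The main obstacle is establishing this commutative closure, particularly for $L + L'$, since the original proof crucially uses $\lex$ to pinpoint the split. The key observation is that the $\Delta_2$ identity combined with horizontal commutativity implies that the forest type of $t_1 + \cdots + t_n$ depends only on the multiset of tree types $\alpha(t_i)$, with counts capped at the idempotent exponent $\omega$: indeed, applying the idempotency of $(\hole + g)^\omega$ to the empty forest yields $\omega g + g = \omega g$ for every $g \in H$. To express $L + L'$ in \Stwo we existentially guess, for each of the finitely many tree types $g \in H$, up to $\omega$ representative subtree roots to be designated as ``going into $L$'', existentially guess the outer witnesses of the \Stwo formulas defining $L$ and $L'$, and universally verify, for each remaining node, that the designation is consistent and that the quantifier-free matrices of $L$ and $L'$---relativized by quantifier-free formulas in $<$ to the selected and unselected subtrees respectively---both hold. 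The resulting formula is in \Stwo.  The analogous treatment of $KL$, $KK'$, and intersections completes the commutative version of Lemma~\ref{fact:simple-lex} and hence, by the straightforward adaptation of the arguments in Sections~\ref{sec:char-inter}--\ref{sec:top-down}, the proof of Theorem~\ref{thm:main-commutative}.
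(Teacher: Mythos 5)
Your first direction and your overall strategy for the converse (reuse the $\Sigma_2$ forest expressions supplied by the \Dtwol argument, and show that their commutative closures are definable in \Stwo) coincide with the paper's. The gap is in the one step you correctly identify as the main obstacle: defining the commutative closure of $L+L'$ (and of $K\cdot L$, $K\cdot K'$) in \Stwo. Your marking scheme does not work as described. You designate, per type $g$, up to $\omega$ roots as ``going into $L$'' and relativize the formula for $L'$ to the unselected subtrees. But the unselected roots are then all forced onto the $L'$ side of the partition, and they do change the type of that side: a root of type $g$ left unmarked can be absorbed without changing anything only if $g$ already occurs $\omega$ times \emph{on the side it is dumped on}, which is neither checked by your formula nor true in general (the actual witnessing partition may put many trees, of types occurring only once, on the $L$ side). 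Capping at $\omega$ tells you that the type of each part depends only on a bounded multiset, but it does not let you recover a valid partition from a bounded set of marked roots. Two further problems: (i) your capped-multiset observation applies only to languages that are unions of $\alpha$-type classes, whereas the closure lemma you need must apply to all the building blocks of $\Sigma_2$ expressions, including arbitrary piece-closed languages, which need not be recognized by $\alpha$; and (ii) any attempt to verify, under the universal block, that each remaining root's subtree has an appropriate type requires evaluating an \Stwo property of that subtree inside a universal quantifier, which pushes the formula into $\Sigma_3$ unless that property happens to be $\Pi_1$.

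The paper resolves this with a normal form (Lemma~\ref{lem:experimental} for forests, Lemma~\ref{lem:experimental2} for contexts): every \Stwo-definable forest language is a finite union of languages of the form $L_0 \oplus M_1 \oplus \cdots \oplus M_n$, where $L_0$ is definable in $\Pi_1(<)$ (hence piece-closed and commutative) and the $M_i$ are \emph{tree} languages definable in \Stwo. Only the finitely many tree components need marked roots (one existential variable each), and the bulk parts are merged by observing that the commutative closure of $L_0 + L_0'$ is again piece-closed, hence $\Pi_1(<)$ by Lemma~\ref{lem:pi1-lemma}. This is the missing idea in your argument; without it, or a genuine substitute, the commutative analogue of Lemma~\ref{fact:simple-lex} is unproved and the induction does not go through. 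Your side remarks are fine: the easy direction is exactly the paper's, and horizontal commutativity does identify $g+\hole$ with $\hole+g$, simplifying Proposition~\ref{prop:context-expression}.
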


The ``only if'' implication is easy: we have already shown that the $\Delta_2$
identity must hold in the syntactic forest algebra of a language definable in
\Dtwol, and \Dtwo is a fragment of \Dtwol. Horizontal commutativity must
also hold: the logic only has the descendant relation, and therefore its
formulas are invariant under rearranging sibling subtrees.

The ``if'' implication is a minor variation on the work done in the
previous sections.  Recall that we proved before that if the syntactic
forest algebra of a language $L$ satisfies the $\Delta_2$ identity,
then both $L$ and its complement can be defined by $\Sigma_2$ forest
expressions. We apply this result also in our case. The problem is
that the $\Sigma_2$ forest expressions are not commutative, and thus
need not be definable in \Stwo.  We will show, however, that their
commutative closure can be defined in \Stwo. Here, we use the term
\emph{commutative closure of $L$} for the smallest language that
contains $L$ and is closed under rearranging sibling subtrees.

\begin{prop}
  The commutative closure of a $\Sigma_2$ forest expression is
  definable in \Stwo.
\end{prop}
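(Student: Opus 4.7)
The plan is to prove the proposition by structural induction on $\Sigma_2$ forest and context expressions (mutually recursively, since the grammar is mutually recursive). For each expression $E$, the goal is to construct a formula of \Stwo (with a free variable for context expressions) that defines the commutative closure $\overline{L(E)}$.

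For the base cases, a language closed under pieces has a commutative closure that remains closed under pieces and is furthermore commutative; by the parenthetical remark at the end of the proof of Lemma~\ref{lem:pi1-lemma}, such a language is definable in $\Pi_1(<)$ and hence in \Stwo. Singleton context languages $\set{a\hole}$ and $\set{\hole}$ are trivially \Stwo-definable. Union is handled by $\overline{L \cup L'} = \overline{L} \cup \overline{L'}$ together with closure of \Stwo under disjunction.

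The core of the inductive step is the concatenation operations ($K \cdot K'$, $L+L'$, $K \cdot L$). For each, I would mimic the formula construction from Lemma~\ref{fact:simple-lex}, replacing the $\lex$-based description of the ``$L$-region'' by one using only the descendant relation. For instance, to express that a forest belongs to $\overline{K \cdot L}$, I would say: there exists a parent node $x$ and a (possibly empty) subset $X$ of children of $x$ such that the sub-forest consisting of $X$ and all its descendants lies in $\overline{L}$, while the complementary context---with a hole at the ``location'' of $X$ among the children of $x$---lies in $\overline{K}$. The difficulty is that $X$ itself may have unbounded size. The way around this is that the inductive formula $\varphi_{\overline{L}}$ has a bounded number of existential witnesses: we existentially pick those witnesses together with $x$; since $\overline{L}$ is already commutative by the inductive hypothesis, the $\forall$ part of $\varphi_{\overline{L}}$ may be relativized to ``descendants of children of $x$ that sit above some chosen existential witness,'' and analogous descendant-based relativization can be written for $\varphi_{\overline{K}}$. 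Correctly matching up these relativizations so that the two pieces partition the forest---without $\lex$---is the technical heart of the argument.

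The main obstacle is the intersection case, because in general $\overline{L \cap L'} \neq \overline{L} \cap \overline{L'}$ (for example, $\{a+b\} \cap \{b+a\} = \emptyset$ while their commutative closures both equal $\{a+b, b+a\}$). One cannot simply conjoin the formulas from the induction. My proposed remedy is to normalize the input $\Sigma_2$ expression so that intersection is pushed to the base case: distribute $\cap$ over $\cup$ and, with some care, over concatenation, reducing to situations in which intersections apply only to languages closed under pieces. At that level, intersections are benign---the intersection of two commutative piece-closed languages is again commutative and piece-closed---so the base case handles them. Making this normalization work for all the $\Sigma_2$ operations, and in particular showing that pushing $\cap$ past concatenation does not change the commutative closure, is the step I would expect to require the most care.
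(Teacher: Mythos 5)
Your overall skeleton (structural induction on the expression, with the base cases handled exactly as the paper does via the commutative reading of Lemma~\ref{lem:pi1-lemma}) matches the paper's proof, but the heart of the argument---concatenation and composition---has a genuine gap. You propose to redo the relativization of Lemma~\ref{fact:simple-lex} using only the descendant order, guessing the existential witnesses of $\varphi_{\overline L}$ and relativizing its universal part to descendants of those children of $x$ that lie above some witness. This cannot work as stated: in a forest of the commutative closure of $K\cdot L$, the $L$-part is an unbounded multiset of sibling subtrees, and any such subtree containing \emph{no} existential witness of $\varphi_{\overline L}$ cannot be assigned to the $L$-side or the $K$-side by a descendant-only condition---yet it must still be constrained, since it falls under the universal quantifiers of $\varphi_{\overline L}$ (or of $\varphi_{\overline K}$). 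In the extreme case where $L$ and $K$ are both piece-closed there are no witnesses at all and your scheme draws no boundary whatsoever. The paper's way out is a normal form (Lemma~\ref{lem:experimental}): every \Stwo-definable forest language is a finite union of languages $L_0 \oplus M_1 \oplus \cdots \oplus M_n$ with $L_0$ definable in $\Pi_1(<)$ and the $M_i$ \emph{tree} languages in \Stwo. Trees admit a single guessable root, so the $M_i$ can be split off by existential quantification (Lemma~\ref{oplus-forests}), while the two residual witness-free blocks coming from $L$ and $L'$ are merged into a single language $L_0\oplus L_0'$ that is closed under pieces and hence $\Pi_1(<)$-definable \emph{as one block}, with no need to decide which tree belongs to which factor. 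The ``matching up of relativizations'' that you defer is precisely where the proof lives, and without this normalization (or an equivalent device, e.g.\ the analogous Lemma~\ref{lem:experimental2} for contexts) I do not see how to complete your construction.

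Your point about intersection is well taken: $\overline{L\cap L'}\neq\overline L\cap\overline{L'}$ in general, and the paper is terse here, declaring intersection ``easy.'' In every place the proposition is actually invoked, at least one of the two intersected languages (e.g.\ $L_\bot$, or the set of all trees) is itself commutative, and one checks that closure then does commute with intersection; so the gap is benign for the theorem, though worth flagging. Your proposed repair---distributing $\cap$ over concatenation---is the weakest part of your plan, since $KL\cap K'L'$ does not decompose into concatenations of intersections and you give no argument that such a rewriting preserves the commutative closure; restricting the proposition to expressions whose intersections have a commutative argument is both sufficient and much easier to justify.
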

Before we prove this proposition, we remark that this is not a
completely generic result. For instance consider the following language over a one letter alphabet: ``Each node is a
leaf or has two children, and some leaf has an even number of
ancestors''. This language is definable in $\Sigma_3(<,\lex)$ and is horizontally
commutative (the formula comes from Potthoff~\cite{potthoff}). However, this language cannot be defined in $\Sigma_3(<)$. Actually,   an Ehrenfeucht-Fraiss\'e argument shows that every   first-order formula, that has  quantifier depth $n$  and only uses the
descendant order, will give the same result for all balanced binary trees of depths larger than $2^n$.

The proof of the above proposition is by induction on the size of the
$\Sigma_2$ forest expression. 

The base case is when the  $\Sigma_2$ expression is either $\set{a
  \hole}$, or a language that is closed under pieces. In the first
case, the language is clearly definable in \Stwo. In the second case,
we revisit the proof of Lemma~\ref{lem:pi1-lemma}, which showed that a
language $L$ that is closed under pieces is definable in $\Pi_1$. If
we take the commutative closure of $L$, we get a commutative language closed
under pieces. In the proof of Lemma~\ref{lem:pi1-lemma}, we
constructed the $\Pi_1$ formula by  forbidding a finite number of
pieces; in the commutative case the formula does not need to worry
about the order of siblings in the forbidden pieces.

In the induction step, we have to consider the operations that are
allowed by $\Sigma_2$ expressions: union, intersection, (horizontal)
concatenation
\begin{equation}
  \label{eq:concat}
    L+L'
\end{equation}
and (vertical) composition
\begin{equation}
  \label{eq:compos}
    K \cdot L \quad K \cdot K'
\end{equation}
for a forest languages $L,L'$ and context languages $K,K'$. Union and intersection are easy. Concatenation and composition are more problematic. Actually, 
$\Stwo$ is not closed under these two operations. For instance, the
languages
\begin{eqnarray*}
  K &= & \set{a+\hole, \hole + a}\\
L &=& \set{b+c,c+b}
\end{eqnarray*}
are both definable in $\Stwo$, but their concatenation
\[
  KL=\set{a+b+c,a+c+b,b+c+a,c+b+a}
\]
is not, since it does not contain the forest $b+a+c$. 

Nevertheless, if we use commutative closure, the problem disappears. That
is, we will show that if the languages $K,K',L,L'$ are definable in
\Stwo, then the commutative closure of each of the languages
in~(\ref{eq:concat}) and~(\ref{eq:compos}) can be defined in \Stwo.
We only do the cases $L+L'$ and $K \cdot K'$, the language $K \cdot L$ is 
done the same way.

\begin{lem}\label{oplus-forests}
  If forest languages $L,L'$ are definable in \Stwo,  then so is the
  commutative closure of $L+L'$.
\end{lem}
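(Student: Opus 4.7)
In the descendant-only signature, any first-order formula is invariant under sibling rearrangement, so $L$ and $L'$ are both commutative. The commutative closure of $L+L'$ is therefore the set of forests $t$ whose multiset of root-subtrees can be partitioned into two submultisets that form (in any sibling ordering) an element of $L$ and an element of $L'$ respectively. The plan is to directly construct a \Stwo formula for this set.

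A first attempt combines the existential witnesses $\vec x_L$ of $\varphi_L=\exists\vec x_L\,\forall\vec y_L\,\psi_L$ and $\vec x_{L'}$ of $\varphi_{L'}=\exists\vec x_{L'}\,\forall\vec y_{L'}\,\psi_{L'}$ into a single existential block, and then asserts universally (a) that the two sets of witnesses live in disjoint root-subtrees of $t$, and (b) that whenever the universally-quantified $\vec y_L$ (respectively $\vec y_{L'}$) all lie in the root-subtrees of the corresponding side's witnesses, $\psi_L$ (respectively $\psi_{L'}$) holds. The relativisation ``$y$ shares a root-ancestor with some witness'' is existential, but it appears negated in the hypothesis of the implication, so the whole conjunct becomes universal once the contrapositive is taken; using the equivalence $(\forall r_1\,\Phi_1)\lor(\forall r_2\,\Phi_2)\equiv\forall r_1\,r_2\,(\Phi_1\lor\Phi_2)$ the extra quantifiers fit neatly into the outer $\forall$-block, leaving the entire formula in $\exists^*\forall^*$ shape.

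The main obstacle is that this naive formula silently ignores \emph{free} root-subtrees — those containing no existential witness from either formula — and is consequently unsound: with $L=L'=\{a\}$ (the language of the single-node $a$-tree) it accepts $a+a+a$ by picking two of the three $a$-trees as witnesses, while $a+a+a$ does not belong to the commutative closure of $L+L'=\{a+a\}$. Fixing this is the technical heart of the proof. My plan is to establish, by a secondary induction on the size of the $\Sigma_2$ forest expression for $L$ using the inductive hypothesis of the proposition being proved, a normal form for commutative \Stwo-definable forest languages: every such $L$ is expressible as a finite union of finite intersections of threshold conditions ``at least $k$ root-subtrees have tree type in a \Stwo-definable tree language $T$'' (themselves \Stwo) together with background conditions ``every root-subtree has tree type in a fixed piece-closed tree language $\mathcal T$'' ($\Pi_1$, hence \Stwo, by Lemma~\ref{lem:pi1-lemma}). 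With this normal form in hand, the commutative closure of $L+L'$ itself fits the same normal form: the finitely many ways to redistribute each threshold count between the two sides — finite thanks to idempotent thresholding at $k\le\omega$, which is a consequence of the $\Delta_2$ identity — combined with the merged background type set $\mathcal T_L\cup\mathcal T_{L'}$ and taken as a union, yield the required \Stwo formula, since threshold and background conditions are all \Stwo and \Stwo is closed under union and intersection.
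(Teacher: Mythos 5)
You correctly isolate the real difficulty: the naive $\exists^*\forall^*$ combination of the two formulas says nothing about the root-subtrees containing no existential witness, and your counterexample $L=L'=\{a\}$ is exactly the right one. The paper resolves this with a normal form (Lemma~\ref{lem:experimental}): every forest language definable in \Stwo is a finite union of languages $L_0\oplus M_1\oplus\cdots\oplus M_n$, where $M_1,\ldots,M_n$ are \emph{tree} languages definable in \Stwo and $L_0$ is a \emph{forest} language definable in $\Pi_1(<)$; the lemma then reduces, via the tree-language case, to $L_0\oplus L_0'$, which is again closed under pieces and hence $\Pi_1(<)$-definable.

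The normal form you propose instead is false, and the flaw is precisely in how it handles the witness-free trees. Your background condition is per-root-subtree (``every root-subtree has tree type in a fixed piece-closed tree language $\mathcal T$''), and your threshold conditions are monotone under adding root-subtrees. Hence every disjunct of your normal form is closed under duplicating a root-subtree that is already present: its type is already in $\mathcal T$, and all thresholds remain met. But the commutative closure of $\{a+b\}$, namely $\{a+b,\,b+a\}$, is a commutative language definable in \Stwo (``there exist $x,y$ labelled $a,b$ such that every node equals $x$ or $y$ and neither is an ancestor of the other''), and it is not closed under duplicating the root-subtree $a$. What is needed is a condition on the \emph{forest} of leftover trees --- a piece-closed forest language such as $\{0,a,b,a+b,b+a\}$ can bound their number, whereas a per-tree membership condition cannot count. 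Two secondary problems: the lemma is a statement purely about \Stwo-definability, so no algebra satisfying the $\Delta_2$ identity is available for your ``idempotent thresholding'' (aperiodicity of first-order definable languages is the most you could invoke); and the induction ``on the size of the $\Sigma_2$ forest expression for $L$'' is not available either, since $L$ and $L'$ are arbitrary \Stwo-definable languages, not expressions.
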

\begin{proof}
We write $L \oplus L'$ for the commutative closure of $L + L'$.

Consider first the case when $L'$ is a \emph{tree} language definable in
\Stwo. In this case, the formula for $L \oplus L'$ formula places an
existentially quantified variable over the root of one tree, and then
relativizes the formulas for $L$ and $L'$, respectively, to the nodes
the are not descendants (respectively, are descendants), of this
existentially quantified root.

For the general case, we use the following lemma on forest languages definable in $\Stwo$.
\begin{lem}\label{lem:experimental}
	Every forest language $L$ definable in $\Stwo$ can be written
	as a finite union of languages $L_0 \oplus M_1 \oplus \cdots \oplus M_n$, where
	$L_0$ is a forest language definable in $\Pi_1(<)$, and  $M_1,\ldots,M_n$ are tree
	languages definable in \Stwo.
\end{lem}
\begin{proof}
	The statement of the lemma immediately follows from the following claim on formulas of $\Stwo$. We claim that any formula  $\varphi$  of $\Stwo$ is equivalent to a finite disjunction of formulas of the form
	\[
          \exists z_1 \ldots \exists z_n \quad \psi\, \land \bigwedge_{i \in
            \set{1,\ldots,n}} (\forall y \ \neg (y < z_i)) \land \psi_i
	\]
	where  $\psi \in \Pi_1(<)$,  and $\psi_1,\ldots,\psi_n \in \Stwo$ are formulas such that
	\begin{enumerate}[$\bullet$]
		\item Each formula $\psi_i$ has all  quantification  relativized to descendants of $z_i$.
		\item The formula $\psi$ has all quantification relativized to nodes that are not descendants of any of the nodes $z_1,\ldots,z_n$.
	\end{enumerate}
	The idea, of course, is that the $z_i$ describe the roots of the trees
        that contain the existentially quantified nodes $x_1,\ldots,x_m$ in the
        original formula $\varphi$ of $\Stwo$. The straightforward proof of the
        claim is omitted. The finite disjunction ranges over all possible
        repartitions of the nodes $x_1,\ldots,x_m$ into distinct trees.
\end{proof}

From this normal form, since $\oplus$ distributes across union, it
suffices to give a \Stwo formula for languages of the form
  \[
L_0 \oplus M_1 \oplus \cdots \oplus M_n \quad \oplus \quad      L'_0
\oplus M'_1 \oplus \cdots \oplus M'_n\ ,
  \]
  where the $M$ languages are tree languages definable in \Stwo and where
  $L_0,L'_0$ are forest languages definable in $\Pi_1(<)$. By the technique
  shown at the beginning of the proof, it is sufficient to obtain a formula for
  $L_0 \oplus L'_0$. But the language $L_0 \oplus L'_0$ is closed under pieces,
  and therefore it is definable in $\Pi_1(<)$.
\end{proof}

\begin{lem}
  If context languages $K,K'$ are definable in \Stwo,  then so is the
  commutative closure of $K\cdot K'$.
\end{lem}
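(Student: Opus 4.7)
The plan is to mirror the proof of Lemma~\ref{oplus-forests}, adapted to context composition, with the key additional ingredient being a normal form for \Stwo context languages.

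First I would establish a context analog of Lemma~\ref{lem:experimental}: every context language $K$ definable in \Stwo is a finite union of languages of the form ``take a context from some piece-closed $K_0$, attach trees from \Stwo-definable tree languages $M_1, \ldots, M_n$ as off-spine subtrees at arbitrary positions, and close under sibling permutations.'' As in the forest case, the proof analyzes a \Stwo formula $\exists x_1 \cdots x_k \, \forall y_1 \cdots y_l \, \gamma(x,\bar x,\bar y)$ (with free variable $x$ for the hole): each existential witness $x_i$ is either on the spine (an ancestor of $x$) or in a subtree off the spine; one existentially quantifies the root of each containing off-spine subtree and relativizes accordingly, with the $\Pi_1$ part $K_0$ describing the remaining context.

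Using the normal forms for $K$ and $K'$, and distributing commutative closure over union, it suffices to handle a single pair of normal-form expressions. The tree insertions $M_i$, $M_j'$ are handled by existential quantification of their roots at appropriate spine positions (for $M_i$ along the spine coming from $K_0$, for $M_j'$ along the spine coming from $K_0'$, i.e.\ below the hole of the first context), exactly as in the proof of Lemma~\ref{oplus-forests}. This reduces the problem to showing that the commutative closure of $K_0 \cdot K_0'$ is definable in \Stwo for $K_0, K_0'$ context languages closed under pieces. This in turn follows from two observations: (a)~$K_0 \cdot K_0'$ is itself closed under pieces, because any piece $r$ of $qq'$ with $q \in K_0, q' \in K_0'$ decomposes as $r = p p'$ where $p \piece q$ and $p' \piece q'$ are obtained by restricting $r$ to the nodes originally from $q$ and from $q'$ respectively; and (b)~sibling permutations commute with pieces (the piece of a permutation is a permutation of the corresponding piece), so commutative closure preserves piece-closure. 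Hence the commutative closure of $K_0 \cdot K_0'$ is piece-closed, and by Lemma~\ref{lem:pi1-lemma} is definable in $\Pi_1$, hence in \Stwo.

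The main obstacle is the first step: correctly formulating and proving the normal form for \Stwo context languages. The distinguished hole introduces a spine structure that must be respected when classifying existential witnesses (on-spine versus off-spine), and the $\Pi_1$ part must uniformly describe both the remaining spine labels and the off-spine subtrees not witnessed by existential variables. The details are more delicate than in the forest case, but the underlying idea---packaging existential witnesses into subtree roots, with the rest captured by $\Pi_1$ relativization---is the same.
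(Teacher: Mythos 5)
Your overall architecture (a normal form for \Stwo context languages, distributing commutative closure over the union, existentially quantifying the roots of the witness subtrees, and absorbing the residue into a piece-closed hence $\Pi_1$-definable part) has the right shape, and your two auxiliary observations are correct: the composition of two piece-closed context languages is piece-closed, and commutative closure preserves closure under pieces. The gap is in the normal form itself. You claim that every \Stwo context language is a finite union of languages of the form ``piece-closed $K_0$ with \Stwo tree languages attached off the spine, closed under sibling permutations.'' This is false, because it cannot account for existential witnesses that lie \emph{on} the spine. Consider the context language ``some strict ancestor of the hole has label $a$,'' defined by the $\Sigma_1$ (hence \Stwo) formula $\exists y\, (y < x \wedge a(y))$. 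It contains the context $a\hole$ but not the empty context $\hole$, which is a piece of $a\hole$; and its witness is an ancestor of the hole, so it cannot be packaged into an off-spine subtree. Hence this language is captured by none of your union members: if $a\hole$ belongs to one of them, then (having no off-spine subtrees) it must lie in the corresponding piece-closed $K_0$, which then also contains $\hole$, forcing $\hole$ into the language --- a contradiction. You do acknowledge that witnesses may be on the spine, but you then route them into ``the $\Pi_1$ part $K_0$,'' which is precisely what a piece-closed language cannot express.

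The paper's normal form (Lemma~\ref{lem:experimental2}) is designed around exactly this obstacle: it does not flatten the whole spine into a $\Pi_1$ part, but only peels off the single letter adjacent to the hole, writing $K$ as a finite union of languages $\hat K \odot \set{a\hole} \odot (\hole \oplus L)$ where $\hat K$ is again an \emph{arbitrary} \Stwo context language. This is tailored to composition, since the only place where $K$ and $K'$ interact is the joint between the bottom of $K$'s spine and the top of $K'$'s spine; there the two hanging forests $L$ and $L'$ become siblings, the expression collapses to $\hole \oplus L \oplus L'$, and Lemma~\ref{oplus-forests} is invoked to handle the resulting forest concatenation. If you repair your normal form by adding distinguished on-spine letters, you will be led to isolate the letters at the joint and to merge the two forests hanging there, at which point you essentially recover the paper's argument.
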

\begin{proof} 
We write $K \odot K'$ for the commutative closure of $K \cdot K'$.

Consider first the case when either $K$ or $K'$ is a language $\set{a
  \hole}$. The formula places a variable on node corresponding to $a
\hole$, and relativizes the formula for the remaining context language
to the remaining nodes. 

Consider now the general case. Again, we use a normal form lemma for languages definable in $\Stwo$.  This lemma is prove the same way as Lemma~\ref{lem:experimental}.
\begin{lem}\label{lem:experimental2}
	Every context language $K$ definable in $\Stwo$ can be  written as  a finite union
	of languages of the kinds
	\[
	  \hat K \odot \set{a \hole} \odot   (\hole \oplus L) 
	\qquad \mbox{or}\qquad 
	\hole \oplus L
	\]
	where $\hat K$ and $L$ are context language and forest languages definable in
	\Stwo.	
\end{lem}
%

We now use Lemma~\ref{lem:experimental2} to finish the proof of Lemma~\ref{lem:experimental2}. We want to show that $K \odot K'$ is definable in $\Stwo$. We apply Lemma~\ref{lem:experimental2} to the languages $K$ and $K'$.
Since the operation $\odot$ distributes across union, we can assume
that  the unions describing $K$ and $K'$ use just one language, of either of the two kinds. We have four cases to consider, we only do the most difficult
one
\[
    \hat K \odot \set{a \hole} \odot   (\hole \oplus L)  \quad \odot
    \quad (\hole \oplus L') \odot \set{a' \hole} \odot     \hat K'\ .
\]
This language is the same as 
\[
    \hat K \odot \set{a \hole} \odot   (\hole \oplus L \oplus L')  \odot \set{a' \hole} \odot     \hat K'\ .
\]
Let $\psi_{\hat K}$ be the \Stwo formula defining the context language $\hat K \odot \set{a \hole}$, 
obtained from the first case considered in the proof.  Let $\psi_{\hat K'}$ be the \Stwo formula
defining $\set{a' \hole} \odot \hat K'$ obtained in the same way. Both of these formulas have a free variable, which describes the hole of the context. Using
Lemma~\ref{oplus-forests} we also have a \Stwo formula $\psi_{L \oplus L'}$
defining $L \oplus L'$.

The desired \Stwo formula puts an existentially quantified variable $x$ on the
node corresponding to $a\hole$, another existentially quantified
variable $x'$ on the node corresponding to $a'\hole$, and then runs
three subformulas, for $\hat K$, $\hole \oplus L \oplus L'$, and $\hat
K'$, on the remaining nodes, appropriately relativizing the quantification. More specifically, this is the formula
\[
  \exists x \ \exists x'  \quad a(x) \wedge a'(x) \wedge \text{parent}(x,x') \wedge \varphi_{\hat K}(x) \wedge
  \varphi_{\hat K'}(x') \wedge \varphi_{L \oplus L'}(x,x')
\]
where $\text{parent}(x,x')$ is the $\Pi_1$ formula stating that $x$ is a proper ancestor of $x'$ and there are no nodes in between, $\varphi_{\hat K}(x)$ is constructed from $\psi_{\hat K}$ by relativizing
all quantification to the ancestors of $x$, $\varphi_{\hat K'}(x)$ is constructed from $\psi_{\hat K'}$ by relativizing
all quantification to the descendants of $x$, and $\varphi_{L \oplus L'}(x,x')$ is
constructed from $\psi_{L \oplus L'}$ by relativizing all quantification to the
nodes that are neither ancestor of $x$ nor descendant of $x'$.
\end{proof}

\newcommand{\ign}[1]{}
\ign{
We now want to show Proposition~\ref{prop:bottom-up} in the case where
$H$ is commutative. For this we need to find for each type $h$ a \Stwo
formula for the language $L_h$ of forests of type $h$.  We revisit the
proof of Proposition~\ref{prop:bottom-up} for \Stwol presented in the
previous sections for the non commutative case and adapt it to the
case of \Stwo.  The structure of the proof is the same, it goes by
induction on the size of the algebra $(H,V)$ and then on the position
of $h$ in the reachability pre-order define earlier. Again we have to
distinguish the case where $h$ is minimal from the other cases. As the
algebra still satisfies the identity~(\ref{eq:swallow}) all the
properties that were obtained as a consequence of this identities
remain valid. In particular
Lemmas~\ref{lemma:closed-under-comp},~\ref{lemma:closed-under-piece},~\ref{lem:sub-forest-algebra}
and the results of Section~\ref{sec:treat-cont-like-words} remain
true.  However we are now limited to the composition properties stated
in Lemma~\ref{fact:simple}, and not the more powerful ones stated in
Lemma~\ref{fact:simple-lex}.

\subsection{Non minimal types}
\label{sec:induction-step-nolex}

Let $h$ be a non minimal type. We distinguish again two cases depending on whether
$h+h \sim h$ or not.

\subsubsection{$h+h \sim h$}
\label{sec:h-preserves-itself-nolex}

Let $G$ be the set of pieces of
$h$. Lemmas~\ref{lem:sub-forest-algebra}
and~\ref{lemma:closed-under-piece} remain true, so $(G,\stab(h))$ is a
strict sub-algebra of $(H,V)$ and we obtain by induction a \Stwo
formula for $\alpha^{-1}(h)$.

\subsubsection{$h+h \not\sim h$}
\label{sec:h-does-not-preserve-itself-nolex}

Recall that for $v \in V$, we write $K_v$ for the set of contexts of type $v$
and  that for $g \in H$, we write $L_g$ for the set of forests of type $g$. For $g \in H$
and $F \subseteq H$, we write $L^F_g$ for the set of forests $t$ of type $g$
that can be decomposed as $t=t_1+\ldots+t_n$, with each $t_i$ a tree with of
type in $F$.

Let $G$ be the set of forest types $g$ such that $h$ is reachable from
$g$ but not vice-versa.  By induction assumption, each language $L_g$
with $g \in G$ is definable in \Stwo. Our goal is to give a
formula for $L_h$.

\begin{lem}\label{lem:decompo-nonminim-nolex}
A forest has type $h$ if and only if it belongs to $L^G_h$ or a
language $K_u a L^G_g$, with $u\alpha(a)g = h$ and $u$ stabilizing $h$.
\end{lem}
\vspace{-.2cm}
\begin{proof}
  Let $t$ be a forest of type $h$, and choose $s$ a subtree of $t$ that has a
  type equivalent to $h$, but no subtree with a type equivalent to $h$. If such
  $s$ does not exist, then $t$ belongs to $L^G_h$ as a concatenation of trees
  with type in $G$.  By minimality, $s$ must belong to $a L^G_g$ for some $g\in
  G$. Let $p$ be the context such that $t=ps$.  Since the type of $s$ is
  equivalent to $h$, and the type of $t$ is $h$, then the of $p$ must
  stabilize $\alpha(s)$,  and therefore also stabilize $h$ by
  Lemma~\ref{lemma:closed-under-comp}.
\end{proof}

In Lemmas~\ref{lemma:sigma2-forest} and \ref{lemma:cont-def}, we will
show that the languages $K_u$ and $L^G_g$ above can be defined in
\Stwo. Proposition~\ref{prop:bottom-up} then follows by closure of \Stwo
under finite union and Lemma~\ref{fact:simple}.
To be more precise, we only give an approximation
$\varphi^G_g$ of the language $L^G_g$, however all forests in the
approximation have type $g$, which is all we need.

As in Section~\ref{sec:h-does-not-preserve-itself}, the fact that $K_u$ can be
expressed in \Stwo is a consequence of
Proposition~\ref{prop:context-expression-nolex}.

\begin{lem}\label{lemma:cont-def}
 If   $u \in V$  stabilizes $h$ then the context language $K_u$ is definable in \Stwo.
\end{lem}
\begin{proof}
  From Proposition~\ref{prop:context-expression-nolex} and its proof,
  we know that $K_u$ can be defined by a finite union of expressions
  of the form $K_{1} \cdots K_{m}$ where each $K_{i}$ is either
  definable in \Stwo, or of the form $\hole + M_g \ \cup\ M_g + \hole $ for some type $g
  \in H$, or of the form $(B \cap \beta^{-1}(W))^*$ where $B$ is the
  infinite alphabet of letters of the form $a \hole$ or $t + \hole$,
  for $t$ a tree and $a\ in A$ and $W$ a subset of $V$ closed under
  pieces. In this later case we know that $K_i$ is definable in
  $\Pi_1(<)$.
 
  We are looking for a \Stwo formula for $K_u$. By closure under union of \Stwo
  and item 3 of Lemma~\ref{fact:simple} it is sufficient to obtain a \Stwo
  formula for any expression of the form $K_{1} \cdots K_{m}$ where each $K_i$
  is either $\hole + M_g$ for some type $g \in H$ or of the form $(B \cap
  \beta^{-1}(W))^*$.

  Consider an expression
  \[
    K_i=(B \cap \beta^{-1}(W_i))^*
  \]
 Split  the contexts that form the  letters in $B$
  into two  parts: the set $C \subseteq B$  of  those contexts of the
  form $a\hole$ and the set  $D \subseteq B$ of contexts of the form
 $\hole +t$ or $t+ \hole$. If we set $C_i$ to be $C \cap
 \beta^{-1}(W_i)$ and likewise for $D_i$, we can decompose $K_i$ as 
  \[
     K_i=(C_i \cup D_i)^*=C_i^* \cup
  D_i^*C_iD_i^* \cup D_i^*C_iK_iC_iD_i^* 
  \]
 As context
  languages included in $C$ can be used for composition by item 3 of
  Lemma~\ref{fact:simple}, it remains to show that context languages of the form
  \[
      B_1^* \oplus B_2^* \oplus \cdots B_m^* \oplus M_{g_1} \oplus M_{g_2} \oplus \cdots M_{g_k}
  \]
 are
  definable in \Stwo, where $B_i \subseteq (D)^*$.

  Consider one of the tree language $M_{g_i}$. Since $u\in \stab(h)$, $g_i+\hole \in
  \stab(h)$ and $g_i \sqsubseteq h$. But $g_i$ is not reachable from $h$ because
  otherwise, as $\stab(h)$ is closed under pieces, $\hole +h$ would be in
  $\stab(h)$ and that would contradict $h+h \not\sim h$. Hence by the induction
  assumption from Proposition~\ref{prop:bottom-up}, $M_{g_i}$ is definable in
  \Stwo.

  Now the \Stwo formula for $B_1^* + B_2^* + \cdots B_m^* + M_{g_1} + M_{g_2} +
  \cdots M_{g_k}$ guesses the roots of the trees in $M_{g_1} + M_{g_2} +
  \cdots M_{g_k}$ and verify that any other tree belongs to  $B_1^* + B_2^* +
  \cdots B_m^*$. The former property is in \Stwo by the remark above and the
  latter can be shown to be in $\Pi_1(<)$ as in the proof of Lemma~\ref{lem:pi1}.
\end{proof}

We now construct a \Stwo formula approximating $L^G_g$.

\begin{lem}\label{lemma:sigma2-forest}
  For any type $g \in H$, there is a formula  $\varphi^G_g$ of 
  \Stwo such that:
\begin{itemize}
  \item Any forest  $L^G_g$ satisfies $\varphi^G_g$ and,
\item any forest satisfying $\varphi^G_g$ has type $g$.
\end{itemize}
\end{lem}
\vspace{-.2cm}
\begin{proof}
  The proof of the lemma is in two steps.  In the first step, we
introduce a condition (*) on a forest $t$, and show that: a) any
forest in $L^G_g$ satisfies (*); and b) any forest satisfying (*) has
type $g$. Then we will show that condition (*) can be expressed in
\Stwo. Recall that we are now dealing with an algebra such that $f+g=g+f$.

 \begin{quote}
(*) For some $m \le n$, the forest $t$ can be decomposed, modulo
    commutativity, as the concatenation $t=t_1+\cdots+t_n$ of trees
    $t_1,\ldots,t_n$, with types $g_1,\ldots,g_n$, such that
  \begin{enumerate}
\item $g_1+\cdots+g_m=g$.
\item Each type from $G$ is represented at most $\omega$ times in
$g_1,\ldots,g_m$.
  \item If a tree $s$ is a piece of $t_{m+1}+\cdots + t_n$, then
    $\alpha(s) \piece g_i$ holds for some type $g_i$ that occurs
    $\omega$ times in the sequence $g_1,\ldots,g_m$.
  \end{enumerate}
  \end{quote}

We first show that condition (*) is necessary. Let $t_1,\ldots,t_n$ be
all the trees in a forest $t$, and let $g_1,\ldots,g_n \in G$ be
the types of these trees.  Without loss of generality, we may assume
that trees are ordered so that for some $m$, each type of $g_i$ with
$i > m$ already appears $\omega$ times in $g_1,\ldots,g_m$. It is not hard to see that identity~(\ref{eq:swallow}) implies aperiodicity of the monoid $H$, i.e.~
\begin{equation}\label{eq:aperio}
\omega f = \omega f + f \qquad \mbox{for all }f \in H\ .
\end{equation} 
In particular, it follows that $g=g_1+\cdots +g_m$ since all of
$g_{m+1},\ldots,g_n$ are swallowed by the above. It remains to show
item 3 of condition (*). Let then $s$ be the piece of a tree $t_i$
with $i>m$. We get the desired result since the type of $t_i$ already
appears in $g_1,\ldots,g_m$.

We now show that condition (*) implies $\alpha(t)=g$. Let then $m \le
n$ and $t=t_1+\cdots+t_n$ be as in (*). We will show that for any $j >
m$, we have $g+g_j=g$, which shows that the type of $t$ is $g$. By
item 3, $g_j \piece g_i$ holds for some some type $g_i$ that occurs
$\omega$ times in the sequence $g_1,\ldots,g_m$. By~(\ref{eq:aperio}),
we have $g=g+g_i=g+ \omega g_i$. It therefore remains to show
that $\omega g_i + g_j = \omega g_i$:
\[
\omega g_i + g_j = \omega g_i + g_j + \omega  g_i =\\
(\hole + g_i)^\omega(\hole + g_j) (\hole + g_i)^\omega 0 = (\hole +
g_i)^\omega 0 = \omega \cdot g_i
\]
In the above we have used identity~(\ref{eq:swallow}). Note that the
requirement in~(\ref{eq:swallow}) was satisfied, since $g_j \piece
g_i$ implies $\hole + g_j \piece \hole+g_i$.

It now remains to show that forests satisfying condition (*) can be defined in
\Stwo. Note that $m$ cannot exceed $|G| \cdot \omega$, and therefore there
is a finite number of cases to consider for $g_1,\ldots,g_m$. Fix some sequence
$g_1,\ldots,g_m$.  The only nontrivial part is to provide a \Stwo formula
that describes the set of forests $t_{m+1} + \ldots + t_n$ that satisfy item 3
of condition (*). From this construction, the formula for (*) follows by
closure of \Stwo under finite union and $\oplus$ (recall
Fact~\ref{fact:simple}), as well as the assumption that each type in $G$ can be
defined in \Stwo.

In order to define forests as in item 3 we use a $\Pi_1(<)$ formula that forbids
the appearance of certain pieces of bounded size inside $t_{m+1} + \cdots +
t_n$. Let $F$ be the types in $g_1,\ldots,g_m$ that appear at least $\omega$
times. We claim that a sequence of trees $t_{m+1}+\cdots +t_n$ satisfies item 3
if and only if it satisfies item 3 with respect to pieces $s$ that have at most
$|H|^{|H|}$ nodes. The latter property can be expressed by a $\Pi_1(<)$ formula.
The reason for this is that, thanks to a pumping argument, any tree has a piece
that has the same type, but at most $|H|^{|H|}$ nodes.\luc{do we need more
  details for the pumping argument?}
\end{proof}

\subsection{Minimal types}

It remains to compute a \Stwo formula for $L_h$ where $h$ is a minimal type.
As in Section~\ref{sec:top-down} this is done using another induction on
contexts, using the same pre-order as in Section~\ref{sec:top-down}, and taking
$v=\hole$ in the proposition below. In the remaining part of this section we
denote by $F$ the set of minimal types of $H$ and by $G$ the set of non minimal
types of $H$.

\begin{prop}\label{prop:contexts-nolex}
  For any context \hhl{type} $v$, every equivalence class of $\equiv_v$ is definable in
  \Stwo.
\end{prop}

The rest of this section is devoted to prove this proposition. We first prove
the base case and then move to the induction step. Note that the structure of
the proof is the same as for Section~\ref{sec:top-down} but this proof has to
be modified in order to use weaker compositions properties.

\subsubsection{ The induction base}

The base case is when $v$ is minimal. In this case the context \hhl{type} $v$ is constant
and hence $\equiv_v$ is easily definable in \Stwo.  The fact that $v$ minimal
implies $v$ constant is a simple consequence of the following lemma, itself a
consequence of commutativity of $H$ and
Lemma~\ref{lemma:at-least-one-disjoint}.

\begin{lem}\label{lemma:at-least-one-disjoint-nolex}
  If $F+ \hole$ intersect $\stab(v)$, then the context \hhl{type} $v$ is constant.  
\end{lem}

\subsubsection{The induction step}

We now proceed to the induction step, and assume that $v$ is not
minimal. As previously let $F$ be the set of minimal types. Recall that $L_g$
denotes the forests of type $g$. We write $M_g$ the trees of type $g$.

As in Section~\ref{sec:top-down}, Proposition~\ref{prop:contexts-nolex} follows
from the following Proposition.

\begin{prop}\label{prop:overapprox-nolex}
 For every $h \in H$, some $v$-overapproximation of $L_h$ is definable in \Stwo.
\end{prop}

As before, this proposition is simple for $h$ not minimal as no
overapproximation is needed in this case. We therefore only consider the case
where $h$ is minimal. In this case as we can only use Lemma~\ref{fact:simple}
for composing languages we need to decompose minimal forests in a more precise
way then what we did in Lemma~\ref{lemma:decomp}.

\begin{lem}\label{lemma:decomp-context-nolex}
Any forest of minimal type satisfies one of the two cases below:

\noindent
In the first case, the forest is of the form $uas$ where $u \in
\stab(v)$ and either
\begin{enumerate}
 \item [(i)] $s'=0$
 \item [(ii)] the type of $a$ is not in $\stab(v)$,
 \item [(iii)] $s=s_1+s_2$ where $s_1$ is a \emph{tree} of type $g+\hole \not\in \stab(v)$ and $g$ not minimal or,
 \item [(iv)] $s=s_1+s_2$ where $s_1$ is a tree of minimal type and $s_2$ is a forest
   of minimal type.
\end{enumerate}

\noindent
In the second case, it is a forest $s$ verifying item (iii) and (iv) above.
\end{lem}
\begin{proof}
  Let $t$ be a forest of minimal type. From Lemma~\ref{lemma:decomp} we know
  that $t$ is of the form $us$ with $u\in \stab(v)$ and $s$ in one of the forms
  (1)-(4) mentioned in this lemma. If (1) or (2) hold, $s=as'$ with $s'=0$ or
  $a \not \in stab(v)$ and correspond to items (i) and (ii) above. It (3) holds
  $s=s_1+s_2$ with the type of $s_1$ being $g+\hole \not\in\stab(v)$ and $g$ not
  minimal. Therefore the first tree of $s_1$ has a type satisfying condition
  (iii) of the lemma. If this tree has no parent in $u$ we are in the second
  case item (iii) of the lemma, otherwise we are in the first case item (iii).
  Otherwise (4) hold and $s=s_1+s_2$ where both have minimal types. Therefore
  (iv) holds and depending on whether $s$ has a parent in $u$ we are in the
  first or second case of the lemma.
\end{proof}

From Lemma~\ref{lemma:decomp-context-nolex} we have:

\[
L_h=\bigcup_{\substack{u \in \stab(v)\\a\not\in\stab(v)\\uaf=h}} K_uaL_f 
\bigcup_{\substack{u \in \stab(v)\\a\in\stab(v)\\uaf=h}} K_uaY_f \cup Y_h
\]

where $Y_f$ is the set of forests of type $f$ verifying conditions (iii) and
(iv) of Lemma~\ref{lemma:decomp-context-nolex}.

\noindent
Because $Y_f$ is of the form $M_g \oplus L_f$ for some $g$ and $f$, it follows
immediately from the proof of Lemma~\ref{lemma:y-approx} that there exists a
$v$-overapproximation $Y^\uparrow_f$ of $Y_f$ that is definable in \Stwo. By
induction we also have a $vua$-overapproximation $L^{ua}_f$ of $L_f$ for any
$a$ such that $a\not\in \stab(v)$. From Lemma~\ref{lemma:proper-stabilizer}
below we know that $K_u$ is definable in \Stwo. By putting all this together
this immediately yield a $v$-overapproximation of $L_h$ that is definable in
\Stwo by Lemma~\ref{fact:simple}.

\begin{lem}  \label{lemma:proper-stabilizer}
If $u \in \stab(v)$, then $K_u$ is definable in $\Sigma_2$.
\end{lem}
\begin{proof}
  The proof is the same as for Lemma~\ref{lemma:cont-def}.
  Recall that from Proposition~\ref{prop:context-expression-nolex} we can decompose
  $K_u$ as a finite union of expressions of the form $K_{1}
  \cdots K_{m}$ where each $K_{i}$ is either of the form $a\hole$, or $\hole +
  M_g$ for some type $g \in H$, or is definable in $\Pi_1(<)$.

  We show that because $u$ is not minimal then the blocks $\hole + M_g$ must be
  such that $g$ is not minimal, and hence $M_g$ is definable in \Stwo by
  induction on Proposition~\ref{prop:bottom-up}. Once this is done the rest of
  the proof is copied from the proof of Lemma~\ref{lemma:cont-def}.

  Assume to the contrary that $u=u_1(\hole + f)u_2$ with $f$ minimal. As $u \in
  \stab(v)$ then $\hole +f$ is in $\stab(v)$. From
  Lemma~\ref{lemma:at-least-one-disjoint-nolex} this implies that $v$ is
  minimal.
\end{proof}
}

\section{Discussion}
\label{sec:conclusion}
In this paper we considered a signature with the descendant and
lexicographic orders. It would be interesting to know what happens in
the presence of other predicates such as the closest common ancestor,
next sibling or child.

Probably the most natural continuation of this work would be an
effective characterization of \Stwo or \Stwol. Note that this would
strengthen our result: a language $L$ is definable in $\Delta_2$ if
and only if both $L$ and its complement are definable
in~$\Sigma_2$. We conjecture that, as in the case for
words~\cite{arfi91}, the characterization of \Stwol requires replacing
the equivalence in the $\Delta_2$ identity by a one-sided implication,
which says that a language definable in \Stwol is closed under
replacing $v^\omega$ by $v^\omega w v^\omega$, for $w \piece v$.

\end{document}